\newtheorem{theorem}{Theorem}
\newtheorem{proposition}{Proposition}
\newtheorem{lemma}{Lemma}
\newtheorem{corollary}{Corollary}
\newtheorem{remark}{Remark}
\newtheorem{example}{Example}
\newtheorem{model}{Model}
\newcommand{\x}{\bv}
\newcommand{\X}{\bV}
\newcommand{\n}{N}
\newcommand{\rr}{{(r)}}
\newcommand{\ceil}[1]{\left \lceil #1 \right \rceil }
\newcommand{\bigzero}{\mbox{\normalfont\Large\bfseries 0}}
\newcommand{\RR}{\ensuremath{\mathbb{R}}}
\newcommand{\bb}{{\boldsymbol{b}}}
\newcommand{\bg}{{\boldsymbol{g}}}
\newcommand{\bm}{{\boldsymbol{m}}}
\newcommand{\bp}{{\boldsymbol{p}}}
\newcommand{\br}{{\boldsymbol{r}}}
\newcommand{\bv}{{\boldsymbol{v}}}
\newcommand{\bw}{{\boldsymbol{w}}}
\newcommand{\bx}{{\boldsymbol{x}}}
\newcommand{\by}{{\boldsymbol{y}}}
\newcommand{\bz}{{\boldsymbol{z}}}
\newcommand{\bA}{{\boldsymbol{A}}}
\newcommand{\bB}{{\boldsymbol{B}}}
\newcommand{\bC}{{\boldsymbol{C}}}
\newcommand{\bD}{{\boldsymbol{D}}}
\newcommand{\bE}{{\boldsymbol{E}}}
\newcommand{\bI}{{\boldsymbol{I}}}
\newcommand{\bL}{{\boldsymbol{L}}}
\newcommand{\bM}{{\boldsymbol{M}}}
\newcommand{\bQ}{{\boldsymbol{Q}}}
\newcommand{\bT}{{\boldsymbol{T}}}
\newcommand{\bU}{{\boldsymbol{U}}}
\newcommand{\bV}{{\boldsymbol{V}}}
\newcommand{\beeta}{{\boldsymbol{\eta}}}
\newcommand{\bzeta}{{\boldsymbol{\zeta}}}
\newcommand{\bkappa}{{\boldsymbol{\kappa}}}
\newcommand{\bth}{{\boldsymbol{\theta}}}
\newcommand{\bPhi}{{\boldsymbol{\Phi}}}
\newcommand{\bOmega}{{\boldsymbol{\Omega}}}
\newcommand{\bigo}[1]{{\operatorname{\mathcal{O}}\left(#1\right)}}
\newcommand{\law}{\mathcal{L}}
\DeclareMathOperator{\Cost}{Cost}
\DeclareMathOperator{\Chol}{Chol}
\DeclareMathOperator{\Nor}{\mathcal{N}}
\DeclareMathOperator{\Bin}{Binomial}
\newcommand{\red}[1]{{\color{red} {#1}}}
\renewcommand{\r}{{(r)}}
\title{\textbf{Conjugate gradient methods for high-dimensional GLMMs}}
\author{Andrea Pandolfi\thanks{Bocconi University, Department of Decision Sciences},\,
    Omiros Papaspiliopoulos\thanks{Bocconi University, Department of Decision Sciences and BIDSA},\, and Giacomo Zanella\thanks{Bocconi University, Department of Decision Sciences and BIDSA\\ GZ acknowledges support from the European Research Council (ERC), through StG “PrSc-HDBayLe” grant ID 101076564.}
  }
\begin{document}
  \maketitle
\begin{abstract}
Generalized linear mixed models (GLMMs) are a widely used tool in statistical analysis. The main bottleneck of many computational approaches lies in the inversion of the high dimensional precision matrices associated with the random effects. Such matrices are typically sparse; however, the sparsity pattern resembles a multi partite random graph, which does not lend itself well to default sparse linear algebra techniques. Notably, we show that, for typical GLMMs, the Cholesky factor is dense even when the original precision is sparse. We thus turn to approximate iterative techniques, in particular to the conjugate gradient (CG) method. We combine a detailed analysis of the spectrum of said precision matrices with results from random graph theory to show that CG-based methods applied to high-dimensional GLMMs typically achieve a fixed approximation error with a total cost that scales linearly with the number of parameters and observations.
Numerical illustrations with both real and simulated data confirm the theoretical findings, while at the same time illustrating situations, such as nested structures, where CG-based methods struggle.
\end{abstract}
\noindent%
\textit{Keywords:}  
Bayesian computation; 
High-dimensional Gaussians; 
Conjugate gradient samplers; 
Cholesky factorization; 
Random graphs.
\vfill 

\newpage





\section{Introduction}
Generalized linear mixed models (GLMMs) are a foundational statistical tool, widely used across multiple disciplines \citep{book:gelman, book:wood2017}.
GLMMs extend the framework of Generalized Linear Models by incorporating both fixed effects and random effects. Fixed effects capture population-level trends and relationships, while random effects account for individual deviations from these trends. 
GLMMs utilize a link function to establish a relationship between the mean of the response variable and the linear predictor, which is given by a linear combination of both fixed and random effects (see Section \ref{sec:crossed}).

In various settings, the factors associated with the random effects may include numerous categories, known as \emph{levels}, resulting in models with a large number of parameters $p$ and a large number of observations $\n$, both potentially in the order of several thousands. 
This scenario often arises in contemporary applications. For instance, within the political sciences, one may consider a geographic factor with numerous units, or include factors encoding so-called deep interactions \citep{ghitza2013}. Such models are also applicable in recommendation systems, where categorical variables denote customers and products \citep{GaoOwen2017}.

Our work is relevant for various computational algorithms used to fit GLMMs, such as Gibbs sampling, variational inferences, algorithms for restricted maximum likelihood estimation, and Laplace approximations (see Section \ref{sec:discussion} for a discussion). The primary computational bottleneck of these methodologies lies in the factorization of a sparse high-dimensional matrix, denoted by $\bQ$.  
This matrix decomposes as
\begin{equation}\label{eq:post_prec}
	\bQ = \bT + \X^T \bOmega \X,
\end{equation}
where $\bT$ is the prior component (or, equivalently, a regularization term) and  is usually diagonal or easily factorizable; $\bOmega$ is a diagonal matrix and adjusts for the variances of the response variables; and $\X \in 
\RR ^{\n \times p} $ denotes the design matrix.
In particular, we mostly focus on performing Bayesian inferences using so-called \textit{blocked Gibbs samplers}, that alternate the updates of regression parameters, variance hyperparameters and potential additional latent variables (see Section \ref{sec:glmms}). The update of the regression parameters requires sampling from a multivariate Gaussian distribution with precision matrix $\bQ$ as in \eqref{eq:post_prec}, and it is usually the most computationally intensive steps in those algorithms.

The standard procedure to sample from a multivariate normal distribution involves the computation of the Cholesky factor of $\bQ$ (see Section \ref{sec:chol}). 
It is well established that the Cholesky factorization can be computed efficiently with nested factors \citep[see e.g.][and references therein]{om_giac_note}. 
On the contrary, we will show that, for general crossed factors, under standard random design assumptions, the cost of computing the Cholesky factor scales as $\bigo{p^3}$, even when $\bQ$ is sparse (see Section \ref{sec:negative_SLA} for further details). 
Thus, since exact factorization of $\bQ$ is too expensive, 
we consider an alternative strategy to sample from the desired Gaussian distribution, which involves the solution of a properly perturbed linear system $\bQ \bth = \bb$ (see Section \ref{sec:cg}).  Here, instead of finding the exact solution of the linear system, we employ the well-known conjugate gradient (CG) algorithm to find an approximate solution or, equivalently, produce an approximate sample. 
We will demonstrate that, in the general case where the Cholesky factorization becomes inefficient, the CG sampler only requires a constant (not growing with $N$ and $p$) number of matrix-vector multiplications $\bQ \bb$,
which results in a total of $\bigo{\max(\n, p)}$ cost to produce an approximate sample.

The paper is organized as follows. 
In Section \ref{sec:methodologies}, we briefly review the algorithms for the Cholesky factorization and the CG method, with a particular focus on their time complexity. 
In Section \ref{sec:crossed}, we describe the specific instance of GLMMs, which we refer to as \textit{random-intercept crossed effects models}, that will serve as the reference model for the theoretical analysis in Sections \ref{sec:negative_SLA} and \ref{sec:CG_theory}. 
In Section \ref{sec:negative_SLA}, we show that exact Cholesky factorization is expensive for such models; while, in Section \ref{sec:CG_theory}, we show that the CG algorithm converges fast for such models. 
The bulk of our technical contribution lies in the analysis of the spectrum of $\bQ$ performed in Section \ref{sec:CG_theory}. 
Finally, in Section \ref{sec:numerics}, we describe the proposed methodologies for the case of general GLMMs, and illustrate them with simulated and real data.  
Specifically, we consider an application to a survey data for the American presidential elections of 2004 \citep{ghitza2013}, to a data set for Instructor Evaluations by Students at ETH (from \texttt{lme4} R-library), and to a large-scale recommendation system dataset with $25$ million observations.
The proofs of the presented results are given in the supplementary material, along with a list of symbols intended to facilitate the understanding of the notation. 
The code to reproduce the experiments is available at  \href{https://github.com/AndreaPandolfi/ASLA.jl}{github.com/AndreaPandolfi/ASLA.jl}.

\section{Review of high-dimensional Gaussian sampling}\label{sec:methodologies}
In this section, we review some 
well-known algorithms 
to sample from a high dimensional Gaussian distribution. Specifically, we consider the problem of sampling $\bth \sim \Nor _p(\bQ ^{-1}\bm, \bQ ^{-1})$. Such structure arises in several contexts, such as Bayesian regression \citep{conjugate_gradient_nishimura_suchard}, spatial models with GMRFs \citep{rue2009inla}, and GLMMs, as in this case (see Section \ref{sec:crossed}).
We are mostly interested in the complexity of these  algorithms when $p$ is large and $\bQ$ is sparse. We say that a matrix $\bQ\in \RR^{p\times p}$ is sparse, if the number of non-zero entries grows slower than the number of total possible entries,
namely if	$n_{\bQ} = o(p^2)$ as $p\to +\infty$, 
where $n_\bQ$ denotes the number of non-zero entries of $\bQ$.

\subsection{Cholesky factorization}\label{sec:chol}

The Cholesky factorization of a positive-definite matrix $\bQ\in \RR ^{p\times p}$ is a factorization of the form 
	$\bQ = \bL \bL ^T$, 
where $\bL \in \RR ^{p\times p}$ is a lower triangular matrix with real and positive diagonal entries. Such factorization can be computed with the following column-wise recursion
\begin{equation}\label{eq:chol-rec}
\begin{aligned}
  L_{mm} & = \sqrt{Q_{mm} - \sum_{\ell=1}^{m-1} L_{m\ell} ^2} \,,\qquad
  L_{jm} & = \frac{1}{L_{mm}}\left ( Q_{jm} - \sum_{\ell=1}^{m-1} L_{m\ell} L_{j\ell}\right ), \quad j > m\,.
\end{aligned}
\end{equation}

Given the Cholesky factor $\bL$, one can sample from $\Nor (\bQ ^{-1}\bm , \bQ ^{-1})$ by solving linear systems in $\bL $ and $\bL^T$, which can be done efficiently in $\bigo{n_\bL}$ time via forward and backward substitution. A description of the algorithm can be found in the supplementary material. 


For dense matrices, the exact computation of the Cholesky factor requires $\bigo{p^3}$ time, and the storage of $\bigo{p^2}$ entries.
For sparse matrices with specific structures, the computational requirements can be drastically reduced. If $\bQ$ can be turned into a banded matrix via row and column permutations, the cost of computing $\bL$ can be reduced to $\bigo{pb^2}$, where $b$ is the bandwidth. When considering spatial Gaussian Markov random fields (GMRFs) or nested hierarchical models, such reordering can be done efficiently, which makes sampling via Cholesky factor efficient in such cases (see \cite{om_giac_note}, and extensive references therein). 
On the other hand, there is little work in the literature that studies the computational cost of the Cholesky factorization for general GLMMs with non-nested designs.

\subsubsection{Conditional Independence Graphs and Cholesky complexity analysis}\label{sec:cost_chol}
The conditional independence structure of a given Gaussian vector $\bth \sim \Nor _p (\bQ^{-1}\bm, \bQ ^{-1})$ is described by $\bQ$ through the relation
\begin{equation}\label{eq:cond_indep}
 \theta _j\perp \theta _m \mid \bth _{-jm} \Longleftrightarrow Q_{jm} = 0 \,,
 \end{equation} 
where $\bth _{-jm}$ denotes $\bth$ after the removal of the entries $j$ and $m$. 
Hence, the \textit{conditional independence }(CI) \textit{graph} of $\bth$ is entirely described by the support of $\bQ$ \citep[Ch. 2.1.5]{book:rue}. We denote the CI graph of $\bth$ by $G _{\bQ}$, whose vertices are the variables $\{\theta _j, j = 1, \dots, p\}$ and the edges are those $(\theta_j, \theta_m)$ s.t. $Q_{jm}\neq 0$ for $j\neq m$. 

The Cholesky factor $\bL$ of $\bQ$ has also a probabilistic interpretation: for $m<j$, $L_{jm} =0$ if and only if $\theta _j$ and $\theta _m$ are independent given the \textit{future set} of $\theta _m$ excluding $\theta _j$, i.e.\ $\theta _m \perp \theta _j \mid \bth _{\{(m+1):p \}\backslash j}$ \citep[Theorem 2.8]{book:rue}. Therefore, even when $\bQ$ is sparse, $\bL$ might not be and the order we assign to elements of $\bth$ influences the sparsity of $\bL$. 
The possibly non-zero entries of $\bL$ can be deduced from $G_\bQ$ as follows: 
a sufficient condition to ensure $L_{jm} = 0$ for $m<j$ is that the future set of $\theta_m$ separates it from $\theta_j$ in $G_\bQ$. This motivates defining the number of possible non-zero entries in $\bL$ as $n_\bL = \sum_{m=1}^p n_{\bL,m}$ where
\begin{equation}\label{eq:n_Lm}
  n_{\bL,m} = |\{j \geq m : 
  \text{the future set of $\theta_m$ does not separate it from $\theta_j$\ in $G_\bQ$}\}| \,.
\end{equation}
Thus, the Cholesky factor $\bL$ involves $n_\bL - n_\bQ\geq 0$ additional potential non-zero entries compared to the original matrix $\bQ$. Such additional non-zeros terms are commonly referred to as \emph{fill-ins}. 
Since $n_\bL$ depends on the ordering of the variables in $\bth$, standard algorithms for Cholesky factorizations of sparse matrices proceed in two steps: first they try to find an ordering of variables that reduces $n_\bL$ as much as possible, and then compute the corresponding Cholesky factor using \eqref{eq:chol-rec}. Finding the ordering that minimizes $n_\bL$ is NP-hard, but various heuristic strategies to find good orderings are available \citep[Ch. 11]{book:golub2013}. 

Sparsity in $\bL$ has direct consequences on the computational cost required to compute it - although it has to be appreciated that a sparse Cholesky is not necessarily computable efficiently. 
The following theorem quantifies this connection.

\begin{theorem}\label{thm:cost_L}
Denoting with $\Cost(\mathrm{Chol})$ the 
 number of floating point operations (flops)
 needed to compute the Cholesky factor $\bL$ of a positive definite matrix $\bQ$, we have
\begin{equation}\label{eq:costChol_bounds}
  \bigo{n_\bL^2/p} \leq \Cost(\mathrm{Chol}) = \bigo{\sum_{m=1}^{p} n_{\bL,m}^{2}} \leq \bigo{n_\bL^{1.5}}.
\end{equation}
\end{theorem}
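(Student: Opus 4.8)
The plan is to translate everything into a combinatorial count on the \emph{filled graph} $G^{*}$ of $\bth$ with respect to the ordering $1,\dots,p$: the chordal super-graph of $G_{\bQ}$ in which $\theta_{m}$ and $\theta_{j}$ (for $j>m$) are joined precisely when the future set of $\theta_{m}$ fails to separate them in $G_{\bQ}$. By the standard fill-path characterisation of Gaussian elimination \citep[Ch.~2]{book:rue}, the set $N^{+}(m):=\{j>m:\{m,j\}\in E(G^{*})\}$ has cardinality $h_{m}:=n_{\bL,m}-1$, and, since $1,\dots,p$ is a perfect elimination ordering of $G^{*}$, it is a clique of $G^{*}$. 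Write $M:=|E(G^{*})|=\sum_{m}h_{m}=n_{\bL}-p$ and let $t$ denote the number of triangles of $G^{*}$.

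I would first establish the middle relation $\Cost(\mathrm{Chol})=\Theta\!\big(\sum_{m}n_{\bL,m}^{2}\big)$ by counting flops in the outer-product (right-looking) form of recursion \eqref{eq:chol-rec}: at step $m$ one forms $L_{mm}$ and the $h_{m}$ entries $L_{jm}$, $j\in N^{+}(m)$, and then applies the rank-one update $Q_{jk}\leftarrow Q_{jk}-L_{jm}L_{km}$ over $j,k\in N^{+}(m)$; since $N^{+}(m)$ is a clique, this update creates no entries outside the predicted pattern, so step $m$ costs $\Theta(h_{m}^{2})$ flops when $h_{m}\ge 1$ and $\Theta(1)$ otherwise. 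Summing over $m$, using $p\le\Cost(\mathrm{Chol})$ and $M=\sum_{m}h_{m}\le\sum_{m}h_{m}^{2}$, gives $\Cost(\mathrm{Chol})=\Theta\!\big(\sum_{m}h_{m}^{2}+p\big)=\Theta\!\big(\sum_{m}(h_{m}+1)^{2}\big)=\Theta\!\big(\sum_{m}n_{\bL,m}^{2}\big)$; here the lower bound invokes the usual genericity assumption that no exact numerical cancellation occurs, so that every predicted nonzero is in fact computed.

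The left-hand inequality then follows from Cauchy--Schwarz, $\sum_{m}n_{\bL,m}^{2}\ge\frac1p\big(\sum_{m}n_{\bL,m}\big)^{2}=n_{\bL}^{2}/p$, together with the middle relation. For the right-hand inequality, the crucial observation is the identity $\sum_{m}\binom{h_{m}}{2}=t$: each two-element subset of the clique $N^{+}(m)$ is an edge of $G^{*}$, hence together with the vertex $\theta_{m}$ spans a triangle of $G^{*}$, and conversely every triangle is produced exactly once, namely from its lowest-indexed vertex. Combining this with $\sum_{m}h_{m}^{2}=2\sum_{m}\binom{h_{m}}{2}+\sum_{m}h_{m}=2t+M$ and the classical bound $t\le\frac16(2M)^{3/2}$ --- which follows from $6t=\operatorname{tr}(\bB^{3})\le\sum_{i}|\lambda_{i}|^{3}\le\big(\sum_{i}\lambda_{i}^{2}\big)^{3/2}=\big(\operatorname{tr}(\bB^{2})\big)^{3/2}=(2M)^{3/2}$, with $\bB$ the adjacency matrix of $G^{*}$ and $\lambda_{i}$ its eigenvalues --- yields $\sum_{m}n_{\bL,m}^{2}=\sum_{m}h_{m}^{2}+2M+p=O(M^{3/2}+M+p)=O(n_{\bL}^{3/2})$, since $M\le n_{\bL}$ and $p\le n_{\bL}$. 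With the middle relation this is the claimed $\Cost(\mathrm{Chol})=O(n_{\bL}^{1.5})$.

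The step I expect to require the most care is the flop count: the bound $\Theta(\sum_{m}n_{\bL,m}^{2})$ is cleanest for the outer-product (or left-looking) organisation of \eqref{eq:chol-rec}, and one must use the clique property of $N^{+}(m)$ to be certain no fill beyond the predicted pattern ever appears; a literal column-by-column reading of \eqref{eq:chol-rec} instead gives $\Theta(t+M+p)$, which is of the same order but slightly more delicate to match. The triangle identity and the $O(M^{3/2})$ triangle bound are then routine.
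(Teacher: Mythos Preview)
Your proposal is correct and follows essentially the same route as the paper: the flop count is identified with $\Theta\big(\sum_m n_{\bL,m}^2\big)$ via the clique structure of $N^+(m)$, the lower bound is Cauchy--Schwarz, and the upper bound comes from recognising the dominant cost as a triangle count in the filled graph and then bounding triangles by $O(n_{\bL}^{3/2})$. The only substantive difference is that you supply self-contained arguments where the paper defers to external references: the paper cites a textbook for the flop formula \eqref{eq:cost_sla_expr} and cites \cite{rivin2002} for the triangle--edge bound, whereas you derive the flop count from the outer-product recursion and prove the triangle bound via the spectral inequality $6t=\operatorname{tr}(\bB^3)\le(\operatorname{tr}(\bB^2))^{3/2}=(2M)^{3/2}$.
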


The equality and lower bound in \eqref{eq:costChol_bounds} are well-known, while the upper bound is more involved, and we have not been able to find it in the literature. Note that, trivially, the result also implies
  $\Cost(\mathrm{Chol})\geq \bigo{n_\bQ^2/p}$.

When $\bQ$ is a dense matrix we have $n_\bL = \bigo{p^2}$ and thus the lower and upper bounds in \eqref{eq:costChol_bounds} coincide, being both cubic in $p$, and they are both tight. For sparse matrices, instead, the two bounds can differ up to a $\bigo{p^{0.5}}$ multiplicative factor and each can be tight depending on the sparsity pattern. For example, if $\bQ$ is a banded matrix with bandwidth $b$, we have $n_\bL = \bigo{pb}$ and $\Cost(\mathrm{Chol}) = \bigo{pb^2}$, hence the lower bound is tight while the upper bound is off by a $\bigo{(p/b)^{0.5}}$ factor. On the contrary, for a matrix $\bQ$ with a dense $p^{0.5} \times p^{0.5}$ sub-matrix and diagonal elsewhere we have $n_\bL = \bigo{p}$ and $\Cost(\mathrm{Chol}) = \bigo{p^{1.5}}$, meaning that the upper bound is tight while the lower bound is off by a $\bigo{p^{0.5}}$ factor.

\subsection{Conjugate Gradient}\label{sec:cg}

Conjugate Gradient (CG) is a widely-used iterative optimization algorithm employed for solving large systems of linear equations $\bQ \bth = \bb$, for which $\bQ$ is positive-definite \citep{book:golub2013, book:saad2003}. 
CG iteratively minimizes the quadratic form associated to the linear system. However, instead of optimizing along the gradient direction, it restricts to the component of the gradient which is conjugate (i.e.\ $\bQ$-orthogonal) to the previous search directions.

Each CG iteration only requires evaluation of matrix-vector products $ \bQ \bth$ and scalar products of $p$-dimensional vectors. This feature makes CG methods very appealing for solving sparse linear systems, as each iteration only requires $\bigo{n_\bQ}$ operations. 
The algorithm is also optimal in terms of memory efficiency, as it only requires storing $\bigo{p}$ values: basically only the approximate solution, which gets updated in place.
 
 Several strategies have been proposed to use CG algorithm to sample from high dimensional Gaussian distributions. 
In this paper, we will refer to the \emph{perturbation optimization sampler} \citep{papandreou2010, conjugate_gradient_nishimura_suchard}, which requires solving the following linear system
\begin{equation}\label{eq:cg_sampler}
    \bQ \bth  = \bm + \bz, \qquad \bz \sim \Nor (\mathbf{0}, \bQ )
\end{equation}

Simple computations show that, when the linear system in \eqref{eq:cg_sampler} is solved exactly, $\bth$ is an exact sample from $\Nor _p (\bQ ^{-1}\bm,\, \bQ ^{-1})$. If the linear system is solved via CG method, an approximate sample will be returned.
A potential limitation for the application of this algorithm is given by the necessity to sample efficiently from $\Nor _p (\mathbf{0}, \bQ)$, which is however feasible for matrices as in \eqref{eq:post_prec}  (see end of Section \ref{sec:crossed}).
Notice that in this case, CG is only used to solve the linear system in \eqref{eq:cg_sampler}, while other CG samplers \citep{parker_fox_2012, CG_review} exploit CG algorithms to build an approximate low-rank square root of $\bQ ^{-1}$ which is then used to sample. 
Notice, however, that, if $\bQ$ is full-rank but has only  $k<p$ distinct eigenvalues, then \eqref{eq:cg_sampler} would return an exact sample in $k$-steps (see Theorem \ref{thm:cg_conv_rate}), while other methodologies based on low-rank approximations would not. 


\subsubsection{Rate of convergence}
When studying the complexity of CG algorithm, one needs to quantify how fast the solution at iteration $k$ approaches the exact solution of the linear system. 
The convergence behavior of CG algorithm has been extensively studied in the literature \citep[see][Section 11]{book:golub2013}. We report the most well-known results in the following Theorem.
\begin{theorem}\label{thm:cg_conv_rate}
 Consider the linear system $\bQ \bth = \bb $, with $\bQ$ positive definite. Denote the starting vector with $\bth^0$, then the $\bQ$-norm distance between the $k$-th CG iterate $\bth ^k$ and the solution $\bth$ satisfies the inequality
\begin{equation}\label{eq:CG_convergence}
	\dfrac{||\bth ^k  - \bth ||_\bQ}{||\bth ^0  - \bth ||_\bQ} \leq 2 \left  ( \dfrac{\sqrt{\kappa (\bQ)} - 1}{\sqrt{\kappa(\bQ)} + 1} \right )^k ,
\end{equation}
where $\kappa(\bQ) = \lambda_{max}(\bQ) / \lambda_{min}(\bQ)$ denotes the \textit{condition number} of $\bQ$.
Moreover, if $\bQ$ has only $k<p$ distinct eigenvalues, CG returns the exact solution after $k$ iterations.
\end{theorem}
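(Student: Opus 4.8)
The plan is to reduce both assertions to a single scalar polynomial-approximation problem via the standard variational characterization of the CG iterates. First I would recall that, because the CG search directions are mutually $\bQ$-conjugate, the iterate $\bth^k$ minimizes the energy-norm error $\norm{\bth^k-\bth}_\bQ$ over the shifted Krylov subspace $\bth^0 + \mathcal{K}_k(\bQ,\br^0)$, where $\br^0 = \bb - \bQ\bth^0 = \bQ(\bth - \bth^0)$ and $\mathcal{K}_k(\bQ,\br^0) = \operatorname{span}\{\br^0,\bQ\br^0,\dots,\bQ^{k-1}\br^0\}$. Writing $\be^0 \define \bth^0 - \bth$, this says that $\bth^k - \bth = q(\bQ)\be^0$ for some polynomial $q$ with $\deg q \le k$ and $q(0)=1$, and, crucially, that among all such polynomials CG attains $\min_q \norm{q(\bQ)\be^0}_\bQ$. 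Establishing this optimality property from the algorithm's update formulas is the one structural ingredient that genuinely has to be written out; everything downstream is scalar estimation.

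Next I would diagonalize $\bQ = \bU\boldsymbol{\Lambda}\bU^T$ with $\boldsymbol{\Lambda} = \operatorname{diag}(\lambda_1,\dots,\lambda_p)$ and expand $\be^0$ in the eigenbasis $\{\bu_i\}$. Then $\norm{q(\bQ)\be^0}_\bQ^2 = \sum_i \lambda_i (\bu_i^T\be^0)^2 q(\lambda_i)^2 \le (\max_i |q(\lambda_i)|)^2\,\norm{\be^0}_\bQ^2$, whence
\[
  \frac{\norm{\bth^k-\bth}_\bQ}{\norm{\bth^0-\bth}_\bQ} \;\le\; \min_{\substack{\deg q \le k\\ q(0)=1}}\ \max_{1 \le i \le p} |q(\lambda_i)| \;\le\; \min_{\substack{\deg q \le k\\ q(0)=1}}\ \max_{\lambda \in [\lambda_{min},\lambda_{max}]} |q(\lambda)|.
\]
The finite-termination claim is immediate from the first inequality: if $\bQ$ has only the $k$ distinct eigenvalues $\mu_1,\dots,\mu_k$, then $q(\lambda) = \prod_{j=1}^k (1 - \lambda/\mu_j)$ is admissible and vanishes at every $\lambda_i$, so $\bth^k - \bth = 0$.

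What remains is to bound the scalar minimax in the last display, and this is the main obstacle: one must exhibit a feasible polynomial and control its sup-norm on $[\lambda_{min},\lambda_{max}]$ (note that actual optimality of the competitor is not needed, only feasibility, since we want an upper bound). The classical choice is the shifted and rescaled Chebyshev polynomial $q^\star(\lambda) = T_k\!\bigl(\tfrac{\lambda_{max}+\lambda_{min}-2\lambda}{\lambda_{max}-\lambda_{min}}\bigr) \big/ T_k\!\bigl(\tfrac{\lambda_{max}+\lambda_{min}}{\lambda_{max}-\lambda_{min}}\bigr)$: it has degree $k$, value $1$ at $\lambda=0$, and its argument sweeps $[-1,1]$ as $\lambda$ ranges over $[\lambda_{min},\lambda_{max}]$, so $\max_{[\lambda_{min},\lambda_{max}]}|q^\star| \le 1/T_k\!\bigl(\tfrac{\kappa+1}{\kappa-1}\bigr)$ with $\kappa = \kappa(\bQ)$. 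Using $T_k(\cosh t) = \cosh(kt) = \tfrac12(y^k+y^{-k})$ with $y = e^t = x + \sqrt{x^2-1} \ge 1$, together with the elementary fact that at $x = \tfrac{\kappa+1}{\kappa-1}$ one has $\sqrt{x^2-1} = \tfrac{2\sqrt\kappa}{\kappa-1}$ and hence $y = \tfrac{(\sqrt\kappa+1)^2}{\kappa-1} = \tfrac{\sqrt\kappa+1}{\sqrt\kappa-1}$, gives $1/T_k(x) \le 2y^{-k} = 2\bigl(\tfrac{\sqrt\kappa-1}{\sqrt\kappa+1}\bigr)^k$, which is exactly \eqref{eq:CG_convergence}. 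For the genuinely textbook pieces — the Krylov optimality of CG and the basic properties of Chebyshev polynomials — I would simply cite \citet[Section 11]{book:golub2013}, restricting the written proof to assembling them.
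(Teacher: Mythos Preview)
Your proposal is correct and is precisely the standard Krylov-optimality plus Chebyshev-polynomial argument; the paper itself does not give a proof but simply writes ``See \cite{book:trefethen} for a proof of these results,'' so your sketch is in fact more detailed than what the paper provides while being exactly the argument one finds in the cited references.
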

See \cite{book:trefethen} for a proof of these results. Theorem \ref{thm:cg_conv_rate} shows that CG has a fast convergence rate when either $\bQ$ has a small condition number or when it has few distinct eigenvalues. 

It is well-known in the CG literature that the convergence rate presented in Theorem \ref{thm:cg_conv_rate} is very conservative. Indeed, CG is also fast when most of the eigenvalues, except few outlying ones, are clustered in a small interval $[\mu _s , \mu _{p-r}]$, with $\mu_s$ not close to $0$. 
In this case, one observes that, after a small number of iterations, CG behaves as if the components corresponding to the outlying eigenvalues have been removed, and the CG convergence rate changes as if the condition number in \eqref{eq:CG_convergence} is replaced by the effective value  $ \mu_{p-r}/\mu_s$ \citep{book:vorst2003}. This behavior is known as \textit{superlinear convergence of conjugate gradients}. 
For this reason, instead of considering $\kappa (\bQ)$, we will focus on the \textit{effective condition number} \citep{sluis1986}
\begin{equation}\label{eq:cn_eff}
	\kappa _{s+1, p-r}(\bQ ) = \dfrac{\mu _{p-r}}{\mu _{s+1}}\,,
\end{equation}
obtained by removing the smallest $s$ eigenvalues of $\bQ$ and the $r$ largest ones.
In Section \ref{sec:CG_theory}, we will provide upper bounds on $\kappa _{s+1, p-r}(\bQ )$ in the GLMM context, for small $s$ and $r$.

\section{Random-intercept crossed effects models}\label{sec:crossed}
We now describe the model we will refer to in the theoretical analysis of Section \ref{sec:negative_SLA} and \ref{sec:CG_theory}.
Specifically, we present the class of \textit{random-intercept crossed effects models}.
\begin{model}[Random-intercept crossed effects models]\label{mdl:crossed}
For each observation $i\in \{ 1, \dots, \n\}$, consider a univariate continuous response $y_i\in \RR$ distributed as 
\begin{equation}\label{eq:crossed}
	y_{i} \mid \eta_{i} \sim \Nor \left (\eta_{i}, \ \tau ^{-1}\right), \qquad \eta_{i}= \theta_0+\sum_{k=1}^{K} \bz _{i,k}^T\bth _k,
\end{equation}
where $\theta _0$ is a global intercept, 
$\bth _k = (\theta _{k,1}, \dots, \theta _{k, G_k})^T$ is the vector of random effects for factor $k$, and $\bz _{i,k } \in \{ 0, 1\} ^{G_k}$ with $\sum_{g=1}^{G_k} z_{i,k,g}=1$ is a ``one-hot" vector which encodes the level of factor $k$ for the observation $i$.
Here $K$ is the number of factors, and $G_k$ the number of levels in factor $k$. 
We assume independent priors $\theta	_{k,g}\overset{ind.}{\sim} \Nor (0, T_k ^{-1})$, for each factor $k$ and level $g$.
The fixed effect is assigned either a normal or improper flat prior, and the precision parameters are assigned conjugate gamma distributions. 
\end{model}

If we define $\bth = (\theta_0, \bth_1^{T}, \dots, \bth _K ^{T})^{T} \in \RR ^{p}$ (with $p=1 + \sum _k G_k$) and $\x _i = ( 1, \bz_{i,1}^T, \dots, \bz _{i,K}^T)^T \in \RR ^p$, we can write $\eta_i$ as $\x _i ^T \bth$.
Under Model \ref{mdl:crossed}, the posterior conditional distribution of $\bth$ is
\begin{equation}\label{eq:crossed_poterior_distr}
	\bth \mid \by , \X, \bT, \tau \sim \Nor _p  ( \bQ ^{-1} (\bT \bm_0 + \tau \X ^T \by ) ,\  \bQ ^{-1})\,,
\end{equation}
where $\bm_0$ is the prior mean of $\bth$, $\bT$ its prior precision, $\X  = [\x_1 | \dots |\x_\n ] ^T \in \RR ^{\n\times p}$ and 
\begin{equation}\label{eq:crossed_posterior_prec}
	\bQ = \bT + \tau \X ^T \X\,.
\end{equation} 

In Sections \ref{sec:negative_SLA} and \ref{sec:CG_theory}, we study the time complexity of sampling from \eqref{eq:crossed_poterior_distr} with the algorithms presented in Section \ref{sec:methodologies}. Regarding the CG sampler, we will only focus on the complexity of the CG step. Indeed, sampling $\bz \sim \Nor _p (\mathbf{0}, \bQ ) $ with $\bQ$ as in \eqref{eq:crossed_posterior_prec}, can be done in $\bigo{\n + p}$ time, by sampling $\bzeta \sim \Nor _p (\mathbf{0}, \bI _p)$ and $\ \beeta \sim \Nor _\n (\mathbf{0}, \bI _\n ) $ independently, and setting $\bz =  \bT ^{1/2} \bzeta + \sqrt{\tau}\X ^T \beeta$.

Random-intercept models defined as in Model \ref{mdl:crossed} are a specific instance of the more general class of GLMMs considered in Section \ref{sec:numerics}, which motivates our work.
Throughout the paper, Model \ref{mdl:crossed} will serve as a study case to develop theoretical results.
We restrict our attention to the random-intercept case as it allows for an exhaustive theoretical tractability, while still preserving the crossed sparsity structure that constitutes the main computational challenge for general GLMMs with non-nested designs (see e.g.\ Section \ref{sec:negative_SLA}). 
 Notice also that various recent works provided methodologies and theoretical results for this class of models \citep{Biometrika_Om_Giacomo_Roberts, Backfitting_for_crossed_random_effects}.
Note that, while we develop theoretical results in the context of Model \ref{mdl:crossed}, our methodology and numerical results apply to the more general class of GLMMs described in Section \ref{sec:numerics}, which in particular includes random slopes and interaction terms. 
 
For the sake of simplicity, the only fixed effect included in Model \ref{mdl:crossed} is the intercept $\theta_0\in\RR$. 
Nonetheless, all the results of the following sections could be restated to include multivariate fixed effects, $\bth_0 \in \RR ^{D_0}$, leading to the same conclusions, provided that $D_0$ is sufficiently small relative to the dimensionality of the random effects $(\bth _1, \dots, \bth _K)$ (see
Section \ref{sec:suppl_mult_fixed} in the supplementary material for further details).

\subsection{Design assumptions}\label{sec:assumption_notation}
Consider the precision matrix defined in \eqref{eq:crossed_posterior_prec}. Since $\bT$ is a diagonal matrix, the off-diagonal support of $\bQ$ is entirely characterized by the likelihood term, which we will denote with $\bU=\X ^T \X = \sum _{i=1}^n \x _i \x_i ^T$. 
In particular, simple computations show that
\begin{equation}\label{eq:ci_graph}
\begin{aligned}
	U[\theta _0, \theta_0] &= \n \, ,
	&U[\theta _0, \theta _{k,g}] &=  | \{ i = 1,\dots, \n \colon z_{i,k,g}=1\}|\, ,\\
	U[\theta _{k,g }, \theta _{k,g}] &=| \{ i \colon z_{i,k,g}=1\}|\, ,\quad 
	&U[\theta _{k,g }, \theta _{k',g'}] &=  | \{ i  \colon z_{i,k,g}=1,\, z_{i,k',g'}=1\}|\, ,
\end{aligned}
\end{equation}
where $U[\theta _{k,g }, \theta _{k',g'}] $ refer to the entry relative to the pair $(\theta _{k,g }, \theta _{k',g'})$. We assume without loss of generality that each level of each factor is observed at least once, which is equivalent to say that $U[\theta _{k,g }, \theta _{k,g}]\geq 1$, for all factors $k=1,\dots, K$ and levels $g = 1\dots, G_k$.

In the analogy with conditional independence graphs described in Section \ref{sec:cost_chol}, the last equality in \eqref{eq:ci_graph} implies that $\theta _{k,g}$ is connected to $\theta _{k', g'}$ in $G_\bQ$ if and only if $k\neq k'$ and there exists at least one observation where the two levels are observed together. Thus, the resulting CI graph is a $(K+1)$-partite graph, with one trivial block containing the vertex $\theta_0$, and a block of size $G_k$ for each factor $k=1,\dots, K$. Since $U[\theta _0, \theta _{k,g}] = U[\theta _{k,g}, \theta _{k,g}]>0$ for each $k,g$, then $\theta_0$ is connected to all the other vertices.

Throughout our analysis, we will consider a regime where $N,p\to\infty$, while $K$ is fixed. 
Notice that $\n\geq p/(K+1)$, since we assume that each $\theta _{k,g}$ appears in at least one observation.
Finally, we assume that $n_\bQ$ or, equivalently, the number of edges is $G_\bQ$ is  $ \bigo{\n}$.

%

\section{Cholesky factorization with sparse crossed designs}\label{sec:negative_SLA}
In this section, we will provide a theoretical example as well as numerical results with random designs showing that sparse Cholesky factorization is not suitable to efficiently factorize the posterior precision matrices that arise from Model \ref{mdl:crossed}.
\begin{proposition}\label{prop:def_ordering}
Under Model \ref{mdl:crossed},  $n_\bL$ is a non-increasing function of the position of $\theta_0$ in the ordering of $\bth$. 
Hence, placing $\theta_0$ last in the ordering 
always minimizes $n_\bL$.
\end{proposition}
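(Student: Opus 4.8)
The plan is to work directly with the combinatorial characterization of $n_\bL$ from \eqref{eq:n_Lm} in terms of separation in the conditional independence graph $G_\bQ$. Recall that $G_\bQ$ here is the $(K+1)$-partite graph in which $\theta_0$ is adjacent to every other vertex, while two random-effect vertices $\theta_{k,g}$, $\theta_{k',g'}$ are adjacent iff they co-occur in some observation. Fix any ordering of $\bth$ and let $q$ denote the position of $\theta_0$ in it. The claim is that moving $\theta_0$ one position later (swapping it with the variable immediately after it, keeping all other variables in the same relative order) cannot increase $n_\bL$; iterating, placing $\theta_0$ last is optimal.

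The key structural observation is this: because $\theta_0$ is adjacent to \emph{all} other vertices in $G_\bQ$, once $\theta_0$ lies in the ``future set'' of some vertex $\theta_m$, it can never separate $\theta_m$ from any other vertex $\theta_j$ — a separating set must block all paths, but $\theta_j - \theta_0 - \theta_m$ would require $\theta_0$ or $\theta_j$ or $\theta_m$ itself, and the only candidate inside the future set of $\theta_m$ that is $\theta_0$ is useless there since it IS the path. More carefully, I would argue: for $m < j$ with $\theta_m,\theta_j$ both random-effect vertices, whether ``the future set of $\theta_m$ separates $\theta_m$ from $\theta_j$'' can only become harder to satisfy when $\theta_0$ is removed from that future set (which is exactly what happens to the future sets of the variables sitting between $\theta_0$'s old and new positions when $\theta_0$ is pushed later). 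Meanwhile, the future set of $\theta_0$ itself, and the separation status of pairs involving $\theta_0$, I would handle separately: since $\theta_0$ is universally adjacent, $n_{\bL,\theta_0}$ counts all of $\theta_0$'s future set regardless of position, so moving $\theta_0$ later only shrinks its own future set and hence $n_{\bL,\theta_0}$.

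Concretely I would carry out the argument in the following steps. First, reduce to the case of a single adjacent transposition of $\theta_0$ with its successor $\theta_v$; only the terms $n_{\bL,\theta_0}$ and $n_{\bL,\theta_v}$, plus possibly $n_{\bL,\theta_w}$ for $w$ appearing before $\theta_0$ whose future set contained $\theta_0$, can change, and I must show the net change is $\le 0$. Second, show $n_{\bL,\theta_0}$ strictly decreases (or stays equal): its future set loses exactly $\theta_v$, and since $\theta_0\!\sim\!\theta_v$ in $G_\bQ$ and is universally adjacent, $\theta_v$ was counted before and is simply gone now, while no other pair's count changes sign because adding $\theta_0$ to a future set can only help separation. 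Third — the crux — show that for every $w$ whose future set gains nothing but which previously used $\theta_0$ in a separator, the relevant $\theta_j$ were still separated because $\theta_0$'s removal is compensated: here I need the partite structure, namely that any path through $\theta_0$ from $\theta_w$ to $\theta_j$ can be ``short-circuited'' only if $\theta_w,\theta_j$ are themselves adjacent or share another common neighbour in the future set — and I would show that in fact $\theta_0$ is never \emph{needed} in a minimal separator of two non-$\theta_0$ vertices, so removing it changes nothing. This last point is the main obstacle: I expect to need a careful lemma stating that for any two random-effect vertices, a subset $S$ of $G_\bQ$'s vertices separates them iff $S\setminus\{\theta_0\}$ does, exploiting that $\theta_0$'s only role in connectivity is via length-two paths $\theta_a-\theta_0-\theta_b$, and that such a path exists between \emph{every} pair $\theta_a,\theta_b$, so $\theta_0$ contributes no ``new'' routes that a separator would otherwise have to block beyond blocking $\theta_0$ itself.

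Once the single-swap inequality $n_\bL(\text{after}) \le n_\bL(\text{before})$ is established, monotonicity in the position of $\theta_0$ and optimality of the last position follow immediately by composing transpositions that move $\theta_0$ rightward. I would also double-check the boundary bookkeeping — when $\theta_0$ becomes the very last variable its future set is empty so $n_{\bL,\theta_0}=1$ (just the diagonal), consistent with the bound — and note that the statement is about the combinatorial upper bound $n_\bL$ from \eqref{eq:n_Lm}, not the true fill-in of a particular numerical matrix, which is all that is claimed.
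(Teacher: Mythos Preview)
Your reduction to an adjacent transposition matches the paper's, but the central observation is stated backwards. You write that ``once $\theta_0$ lies in the future set of some vertex $\theta_m$, it can never separate $\theta_m$ from any other vertex $\theta_j$''; in fact the opposite holds. If $\theta_0$ \emph{is} in the future set of $\theta_m$ (excluding $j$), then the path $\theta_m-\theta_0-\theta_j$ is blocked at $\theta_0$. The correct and decisive statement is: whenever $\theta_0$ lies \emph{before} $\theta_m$ in the ordering, $\theta_0$ is absent from the future set of $\theta_m$, the path $\theta_m-\theta_0-\theta_j$ evades that future set for every $j>m$, and hence $n_{\bL,m}=p-m+1$ is already maximal. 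This is the paper's one-line argument: with $\theta_0$ at position $\ell$, both $n_{\bL,\ell}$ (the row of $\theta_0$, adjacent to everything) and $n_{\bL,\ell+1}$ (the row of $\theta_v$, which sits after $\theta_0$) are maximal before the swap, so their sum cannot increase afterwards. Your sketch never establishes that $n_{\bL,\theta_v}$ was maximal before the swap, which is exactly what is needed to control its possible increase when $\theta_v$ moves to position $\ell$.

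Two further points. First, your step~3 is vacuous for an adjacent swap: for every $w<\ell$ the future set is unchanged \emph{as a set of variables} (both $\theta_0$ and $\theta_v$ remain in it), so there is nothing to compensate. Second, the lemma you propose there --- that $S$ separates two random-effect vertices iff $S\setminus\{\theta_0\}$ does --- is false: since $\theta_a-\theta_0-\theta_b$ is a path in $G_\bQ$ for every pair of random-effect vertices, any separator of $\theta_a$ and $\theta_b$ must contain $\theta_0$, and removing it always destroys separation. Fortunately the lemma is not needed; once the direction of the key observation is fixed, the proof is complete with the remark that $n_{\bL,m}$ is unchanged for $m\notin\{\ell,\ell+1\}$.
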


Proposition \ref{prop:def_ordering} implies that we can place $\theta_0$ last without loss of generality.
This helps to reduce $n_\bL$ but it does not solve the problem. The following example describes a simple sparse design that is catastrophic in this sense, where $n_\bQ = \bigo{p}$, but $n_\bL = \bigo{p^2}$.


\begin{example}\label{ex:worst_sla}
Consider a random intercept model with $K=2$ and $G_1 = G_2 = G$. Fix an integer $d \geq 2$, and for each $g = 1, \dots, G-1$, connect the vertex $\theta _{1,g}$ to $\theta_{2,j}$ for all $j$'s for which at least one of the following conditions hold:
\textbf{(a)} $g=j$; \textbf{(b)} $d(g-1) \leq j-2 < dg \mod (G-1)$ and $g<j$; \textbf{(c)} $d(g-1) \leq j-1 < dg \mod (G-1)$ and $g>j$.
For $g=G$ connect $\theta _{1,G}$ to all the $\theta_{2,j}$'s that have degree less or equal than $d$. A graphical representation of the resulting precision matrix $\bQ$ can be found in Section \ref{sec:suppl_figures} of the supplementary material. 
\end{example}

\begin{proposition}\label{prop:unfavourable}
Under the design of Example \ref{ex:worst_sla} with $d$ fixed and $G \to \infty$, it holds that 
\begin{equation}
	n_\bQ = \bigo{pd},\ n_\bL = \bigo{p^2} \text{ and } \Cost(\mathrm{Chol}) = \bigo{p^3}.
\end{equation} 
\end{proposition}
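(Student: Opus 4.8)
The plan is to verify the three claims in order, exploiting the structure of the design in Example~\ref{ex:worst_sla} together with the separation criterion \eqref{eq:n_Lm} and the Cholesky cost bounds of Theorem~\ref{thm:cost_L}. By Proposition~\ref{prop:def_ordering} we may place $\theta_0$ last, so it contributes $O(p)$ to $n_\bL$ and is irrelevant to the asymptotics; hence it suffices to analyze the bipartite graph on $\{\theta_{1,g}\}\cup\{\theta_{2,j}\}$. First I would count edges: each $\theta_{1,g}$ for $g<G$ is connected to $\theta_{2,g}$ (condition (a)) plus roughly $d$ further vertices from conditions (b) and (c), giving $O(d)$ edges per vertex; the extra vertex $\theta_{1,G}$ connects to every $\theta_{2,j}$ of degree $\le d$, which adds at most $O(G)$ edges. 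Summing over $g$ gives $O(Gd) = O(pd)$ edges, and since $n_\bQ$ equals $2\cdot(\#\text{edges}) + O(p)$ for the diagonal and the $\theta_0$-row, we get $n_\bQ = O(pd)$.

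The heart of the argument is the claim $n_\bL = \Omega(p^2)$. Fix an ordering of $\bth$ with $\theta_0$ last; without loss of generality I will argue that, regardless of how the remaining $\theta_{1,\cdot}$ and $\theta_{2,\cdot}$ are interleaved, a constant fraction of the pairs $(\theta_{1,g},\theta_{2,j})$ fails the separation test in \eqref{eq:n_Lm}, i.e. the future set of the earlier one does not separate it from the later one in $G_\bQ$. The design is engineered so that the conditions (b)--(c) create a ``cyclic shift by $d$'' connectivity among the first $G-1$ vertices of each side: from any $\theta_{2,j}$ one can reach, in a bounded number of steps, indices spanning the full range $\{1,\dots,G-1\}$, so $G_\bQ$ restricted to the low-degree vertices is a connected expander-like graph of diameter $O(\log_d G)$. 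The role of $\theta_{1,G}$ (adjacent to all low-degree $\theta_{2,j}$) is the key gadget: whenever $\theta_{1,G}$ comes late in the ordering, every pair of vertices that both precede it and are both adjacent to it is non-separated by the future set (the common neighbor $\theta_{1,G}$ lies in every future set), forcing $\Omega(p^2)$ fill-in; and if instead $\theta_{1,G}$ comes early, one uses the connectivity among the remaining vertices to route paths through the not-yet-eliminated vertices and again obtain $\Omega(p^2)$ non-separated pairs. I would formalize this by showing that for at least half the vertices $v$ (those appearing in the second half of the ordering), the future set of $v$ fails to separate $v$ from $\Omega(p)$ other vertices, using that removing fewer than, say, $p/4$ vertices from $G_\bQ$ leaves a connected component of size $\Omega(p)$ containing $v$ — a property of the expander-type construction.

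Finally, the cost statement $\Cost(\mathrm{Chol}) = \Omega(p^3)$ follows immediately from the lower bound in Theorem~\ref{thm:cost_L}: $\Cost(\mathrm{Chol}) \ge \Omega(n_\bL^2/p) = \Omega((p^2)^2/p) = \Omega(p^3)$, while the matching upper bound $O(p^3)$ is the trivial dense bound. The main obstacle is the middle step: one must argue the $n_\bL = \Omega(p^2)$ lower bound \emph{uniformly over all orderings} (since the Cholesky cost in Theorem~\ref{thm:cost_L} and Proposition~\ref{prop:def_ordering} only lets us fix the position of $\theta_0$, not the rest), which requires establishing a robust connectivity/expansion property of the graph in Example~\ref{ex:worst_sla} — specifically that it has no small vertex separators — and then translating ``no small separator'' into ``many non-separated pairs in \eqref{eq:n_Lm} for any elimination order.'' The arithmetic-modular bookkeeping of conditions (b) and (c) (checking that the shift-by-$d$ structure indeed yields logarithmic diameter and the required expansion, and handling the $\bmod (G-1)$ wraparound) is the most delicate routine part, but it is conceptually straightforward once the connectivity target is identified.
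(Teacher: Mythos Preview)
Your edge count for $n_\bQ$ and your derivation of $\Cost(\mathrm{Chol})=\Omega(p^3)$ from $n_\bL=\Omega(p^2)$ via Theorem~\ref{thm:cost_L} are fine. The problems are all in the middle step.

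First, you are proving a harder statement than the paper does. The paper's proof fixes the \emph{default} ordering $(\theta_{1,1},\dots,\theta_{1,G},\theta_{2,1},\dots,\theta_{2,G},\theta_0)$ and works only there. It defines $r(j)=\lceil (j-1)/d\rceil$, observes that condition~(b) gives the edge $\theta_{2,j}\sim\theta_{1,r(j)}$ and condition~(a) gives $\theta_{1,r(j)}\sim\theta_{2,r(j)}$, and then iterates: the path $\theta_{2,j}\to\theta_{1,r(j)}\to\theta_{2,r(j)}\to\theta_{1,r(r(j))}\to\cdots\to\theta_{2,1}$ uses only $\theta_{1,\cdot}$ vertices (all in the past) and $\theta_{2,\ell}$ with $\ell\le r(j)$. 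Doing the same from $\theta_{2,m}$ with $r(j)\le m\le j$ and splicing at $\theta_{2,1}$ gives a path from $\theta_{2,m}$ to $\theta_{2,j}$ that avoids the future set of $\theta_{2,m}$ entirely. Hence $L[\theta_{2,j},\theta_{2,m}]$ is a potential non-zero whenever $r(j)\le m\le j$, and summing $j-r(j)+1\ge (1-1/d)j$ over $j$ gives $n_\bL=\Omega(G^2)$. No expansion, no separators, no case analysis on orderings.

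Second, your argument for all orderings has concrete errors. The vertex $\theta_{1,G}$ is \emph{not} a high-degree gadget: its role in Example~\ref{ex:worst_sla} is merely to top up the few $\theta_{2,j}$ whose degree fell to $\le d$ after (a)--(c), so it has $O(d)$ neighbours (in the figure with $G=7,d=2$ it has three), not $\Omega(G)$. More seriously, your separation reasoning is inverted. If $\theta_{1,G}$ comes \emph{late}, then for any earlier $u$ the vertex $\theta_{1,G}$ lies in the future set of $u$ and is therefore \emph{removed} when testing separation; the path $u\to\theta_{1,G}\to v$ is blocked, not available. A common neighbour in the future set does not create fill-in; a common neighbour in the \emph{past} set does. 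So neither branch of your case split works as written, and the ``no small vertex separator'' claim for this particular deterministic graph is asserted rather than proved.
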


We now provide a numerical study that shows that sparse Cholesky factorization requires $\bigo{p^3}$ time for most sparse crossed designs. 
We consider missing completely at random designs where each cell in the $K$-dimensional data contingency table contains an observation with probability $\pi\in(0,1)$ and is empty otherwise. 
We take $G_1 = \cdots = G_K = G$, hence $\n \sim \Bin(G^K, \pi)$, and we consider, for increasing values of $G$, different values of $K$ and different ways that $\pi$ relates to $G$. Specifically, we examine the following designs:
\[\text{(a) } K=2, \ \pi = 20G^{-1}\,; \qquad\text{(b) } K=2, \ \pi = G^{-1/2}\,; \qquad\text{(c) } K=5, \ \pi = G^{-K+3/2}\,.\]

Notice that the expected degree is constant in scenario (a), but it increases with $G$ in scenarios (b) and (c). 
Moreover, notice that by picking levels independently among all the possible $G^K$ combinations, we guarantee $n_\bQ = \bigo{\n}$.
Figure \ref{fig:ErdosRenyi} reports the results, plotting the number of flops required to compute $\bL$, versus $p = K G + 1 $ in a log-log scale. 
In all three regimes, despite $\bQ$ being sparse, the Cholesky factor $\bL$ is dense, leading to the worst case scenario $\Cost(\mathrm{Chol}) = \bigo{p^3}$. 
All the experiments in the paper are obtained using the approximate minimum degree (AMD) ordering, which proved the most effective in maximizing the sparsity of $\bL$.
We have also explored other ordering strategies which led to the same results. 
In particular, a possible strategy is to order the factors in decreasing order with respect to the number of levels $G_k$. In this case, the first $G_1$ columns have the same sparsity pattern as $\bQ$, and fill-ins are observed only in the remaining submatrix of size $(\sum_{k\neq 1} G_k)^2$  \citep{bates2025}. The above results can also be rephrased by substituting $p$ with $(p-\max_k G_k)$.
In Figure \ref{fig:ErdosRenyi}, we also include the number of flops required for sampling with CG algorithm, which we will analyze in Section \ref{sec:simulated}.

\begin{figure}[h!]
  \centering
  \begin{subfigure}{.325\textwidth}
    \includegraphics[width=\linewidth]{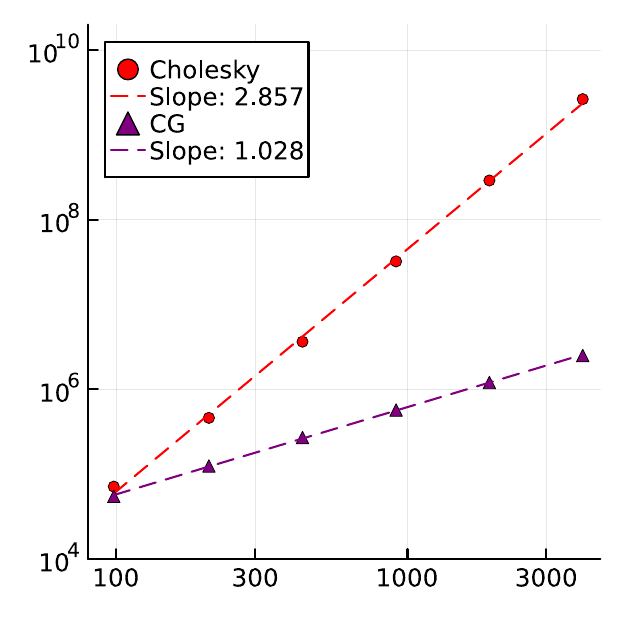}
\end{subfigure}
  \begin{subfigure}{.325\textwidth}
    \includegraphics[width=\linewidth]{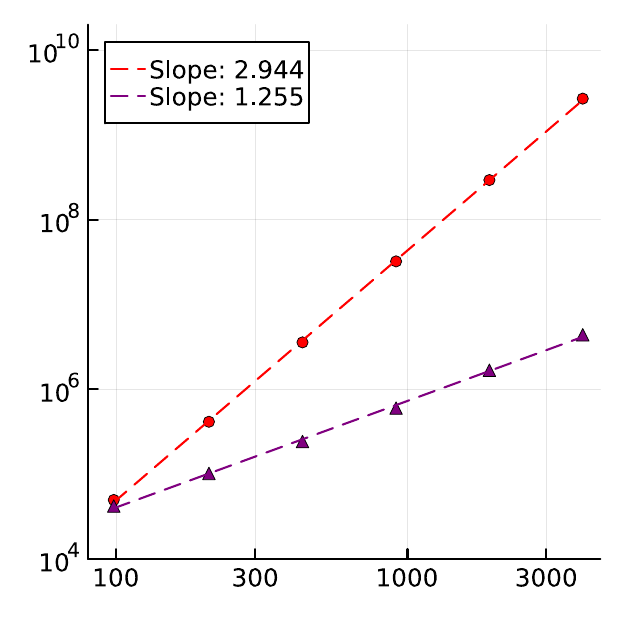}
\end{subfigure}
  \begin{subfigure}{.325\textwidth}
    \includegraphics[width=\linewidth]{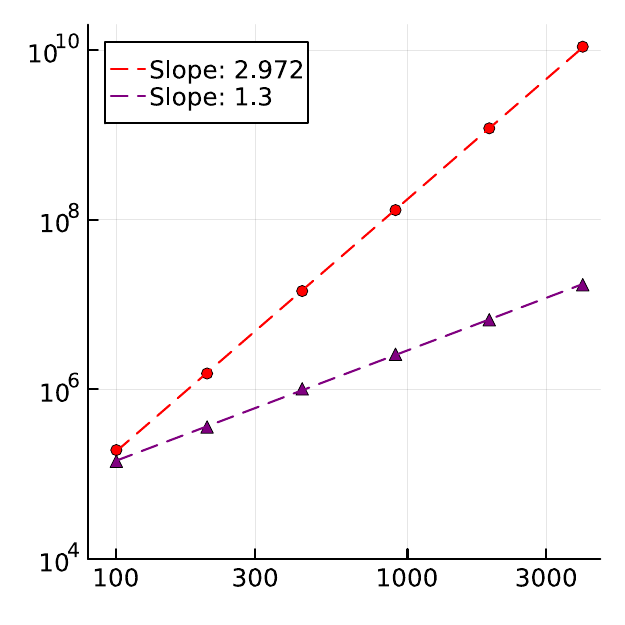}
\end{subfigure}
\caption{$\Cost(\mathrm{Chol})$ (red dots) as a function of $p$ ($x$-axis) in a log-log scale, for scenarios (a)--(c) described above. The purple triangles represent the cost of the CG sampler, which is discussed in Section \ref{sec:simulated}. $\bQ$ is obtained by setting $\tau = 1$ and $\bT = \bI _p$.} 
\label{fig:ErdosRenyi}
\end{figure}

\section{Spectral analysis and CG convergence}\label{sec:CG_theory}

In this section, we provide a series of spectral results about the posterior precision matrix for Model \ref{mdl:crossed}. Namely, we show that all but a fixed number of the eigenvalues of $\bQ$, if appropriately preconditioned, concentrate in a small interval, as $p\to +\infty $. In particular, Theorem \ref{thm:bipartite_biregular} and \ref{thm:suff_cond_strong_connnectivity} will show that the effective condition number remains bounded for increasing $p$. This suggests that the number of CG iterations required to reach a desired level of accuracy is independent of the size of the problem, leading to dimension-free convergence of the algorithm, which is consistent with numerical experiments (see Section \ref{sec:simulated}).


\subsection{Notation}\label{sec:notation}
We denote with $Diag(\bQ)$, the diagonal matrix whose diagonal coincide with the one of $\bQ$ and, for a given vector $\bv$, we denote with $Diag(\bv)$ the diagonal matrix whose diagonal is $\bv$. Recalling that $\bU = \X ^T \X$, we start by decomposing $\bQ$ in \eqref{eq:crossed_posterior_prec} as $ \bQ	 = \bT + \tau \cdot Diag (\bU )\ +\ \tau \bA\,,$
where $\bA = \bU - Diag(\bU) $ can be seen as a weighted adjacency matrix, which counts the number of times that two levels $\theta_{k,g}$ and $\theta _{k',g'}$ are observed together, see \eqref{eq:ci_graph}. We also define $\bD = Diag (\bA \mathbb{1}_p)$, where $\mathbb{1}_p\in \RR ^p$ is a vector of ones, so that $\bD$ is a diagonal matrix whose elements are the row-wise sums of $\bA$. Notice that, for crossed effect models, $\bD= K \, Diag (\bU)$, indeed  $U[\theta _{k,g}, \theta_{k,g}]$ counts the number of times that the level $\theta _{k,g}$ is observed, see \eqref{eq:ci_graph}, however, each time it is observed it is also ``connected'' to other $K-1$ levels from the other factors and to the global parameter $\theta_0$, hence the equality.

When we consider the sub-matrix relative to the random effects only, we use the notation  $\bQ ^\rr\in \RR ^{p-1\times p-1}$, which is obtained by removing the row and column relative to the global effect $\theta_0$. We use the same notation for the adjacency matrix $\bA ^\rr$. We let $\bD ^\rr= Diag(\bA ^\rr\mathbb{1}_{p-1}) \in \RR ^{(p-1) \times (p-1)}$, so that $\bD^\rr= (K-1)\cdot  Diag(\bU^\rr)$.

Denote with $\bar{\bQ}$ the Jacobi preconditioned precision matrix \citep[Sec. 11.5.3]{book:golub2013}, i.e.\
\begin{equation}\label{eq:Q_bar}
	\bar{\bQ} = Diag(\bQ) ^{-1/2}\bQ Diag(\bQ) ^{-1/2}\,.
\end{equation}
In Section \ref{sec:jacobi}, we expand on the importance of such preconditioning. 
We define $\bar{\bA}^\rr= (\bD ^\rr)^{-1/2}\bA^\rr (\bD ^\rr)^{-1/2}$, as one would usually normalize the adjacency matrix of a graph.

\subsection{Jacobi preconditioning}\label{sec:jacobi}
In this section, we briefly investigate the choice of Jacobi preconditioning.
The general idea of preconditioning \citep[Section 11.5]{book:golub2013} is to find a non-singular and easy to factorize matrix $\bM$ s.t.\ the preconditioned matrix $ \bM ^{-1/2} \bQ \bM ^{-1/2}$ is better conditioned than the original $\bQ$. A simple modification of the standard CG algorithm allows working with the preconditioned matrix instead of $\bQ$ \citep[Algorithm 11.5.1]{book:golub2013} at the additional cost of computing $\bx \to \bM^{-1}\bx$ at each iteration. For positive definite matrices, the Jacobi preconditioning is often a natural choice, since $\bx \to \bM ^{-1}\bx$ is very cheap to evaluate. 

For the specific case of Model \ref{mdl:crossed}, the Jacobi preconditioning allows rescaling $\bQ$ so that its eigenvalues lie within a bounded interval independently of the design (see Theorem \ref{thm:outlying_eigvals} for more details). 
To illustrate this point, we consider the following setting. There are $K$ factors of size $G_1, \dots, G_K$ and, for a given $\n$, we sample uniformly without replacement $\n$ out of the possible $\prod_{k=1}^K G_k$ combinations of levels. This generates a specific random design $\X$. 
Notice that this is equivalent to the random design of Figure \ref{fig:ErdosRenyi}, but for general $G_k$ and for fixed $\n$.
In this case, simple computations show that the expected value of the diagonal element $Q[\theta _{k,g}, \theta_{k,g}] $ is $T_k + \tau N/G_k$.
When looking at the spectrum of the resulting $\bQ$, one can observe, for each factor $k$, a bulk of $G_k$ eigenvalues centered around the expected value $T_k + \tau N/G_k$, which is a manifestation of spectral concentration for high-dimensional random matrices.
 The left panel in Figure \ref{fig:spectrum_and_preconditioning} shows this phenomenon in the case of $K=3$ factors. 
On the other hand, when applying Jacobi preconditioning to $\bQ$, the bulks relative to different factors are grouped into a single one centered around $1$ (see the right panel of Figure \ref{fig:spectrum_and_preconditioning}). 
\begin{figure}[h!]
	\centering
    \includegraphics[width=.8\textwidth]{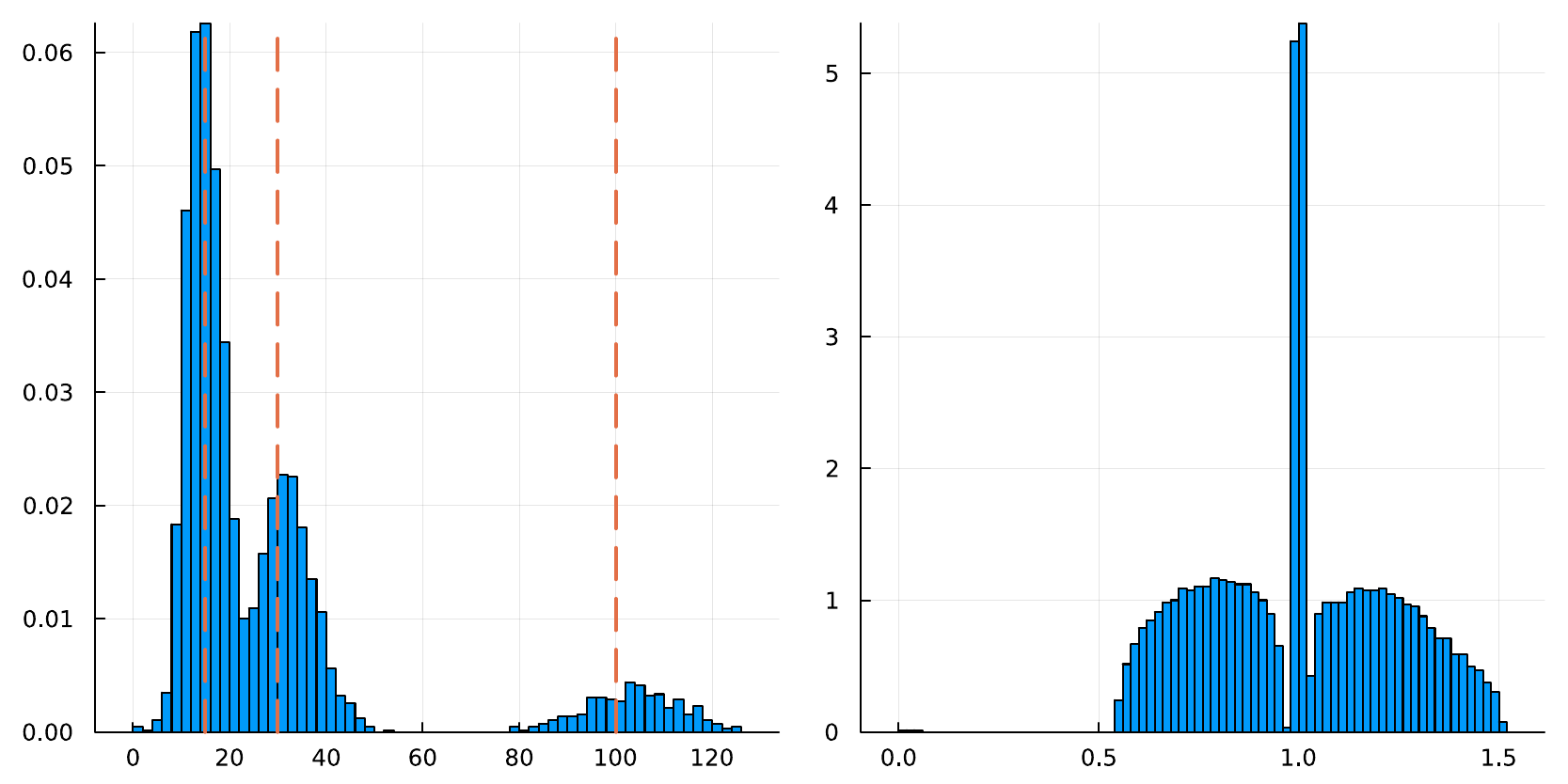}
	\caption{The two histograms show the spectrum of $\bQ$ (left) and $\bar{\bQ}$ (right). They are obtained by setting $\bT= \bI _p, \ \tau = 1$, $G_1 = 300$, $G_2 = 1000$, $G_3 = 2000$. 
The dashed lines on the left panel correspond to the values $T_k + \tau \n /G_k$, for $k=1,2,3$.
}\label{fig:spectrum_and_preconditioning}
\end{figure}
These two different spectral behaviors are reflected in the number of iterations needed for convergence of CG. 
See for example Table \ref{tbl:spectrum_and_preconditioning}, which considers a sequence of designs with $2 $ factors of increasingly different sizes. There, the effective condition number and the number of CG iteration increase as the difference between factor sizes increases, while they remain constant when considering the Jacobi preconditioned version $\bar{\bQ}$.
Theorems \ref{thm:bipartite_biregular} and \ref{thm:suff_cond_strong_connnectivity} provide theoretical results supporting the empirical behavior of the spectrum of $\bar{\bQ}$. 
Given the better performances, we will focus our analysis on the spectrum of $\bar{\bQ}$.

Note that, in the right panel of Figure \ref{fig:spectrum_and_preconditioning}, one can observe a cluster of $2$ small eigenvalues separated from the remaining part of the spectrum. This shows why we need to consider effective condition numbers defined as in \eqref{eq:cn_eff}, and it will be made formal in Theorem \ref{thm:outlying_eigvals}. 

\begin{table}[h!]
\centering

\begin{tabular}{|c|r|r|c|c|c|c|}
  \hline
  \multirow{2}{*}{$G_1$} & \multirow{2}{*}{$G_2$\hspace{.2cm}} & \multirow{2}{*}{$\n$\hspace{.5cm}} & \multicolumn{2}{|c|}{No preconditioning} & \multicolumn{2}{|c|}{Jacobi preconditioning}\\
  \cline{4-7} & &  & $\kappa_{3, p-2}$ & n. iters & $\kappa_{3, p-2}$ & n. iters \\\hline\hline
  100 & 100   & 2828    &  10.61 & 31 & 5.37 & 25 \\
  100 & 1000  & 36482   &  33.08 & 50 & 4.39 & 24 \\
  100 & 10000 & 1015037 & 166.91 & 64 & 3.33 & 19 \\
  \hline
\end{tabular}
\caption{Effective condition number and number of CG iterations for $\bQ$ and $\bar{\bQ}$. The design is the same as in Figure \ref{fig:ErdosRenyi}, with $\pi = p ^{-9/10}$. $\bQ$ is obtained with $\bT = \bI _p$ and $\tau = 1$. In the linear system $\bQ \bth = \bb$, $\bb$ has independent entries with uniform distribution on $(-0.5, 0.5)$.}
\label{tbl:spectrum_and_preconditioning}
\end{table}

\begin{remark}[CG preconditioners]
	There is a vast literature that studies the choice of the optimal CG preconditioner for different linear systems \citep[Ch.10]{book:saad2003}. In our analysis, we have also explored other preconditioning techniques, such as so-called incomplete Cholesky factorizations (see the supplementary material for a more detailed discussion). Nevertheless, we observed results comparable, or slightly worse, to the ones obtained with Jacobi preconditioning.
\cite{conjugate_gradient_nishimura_suchard} also proposed a prior-preconditioned version of conjugate gradient method. 
However, they focused on Bayesian sparse regression with a large number of covariates, identifying the relevant ones through shrinkage priors. Since we do not consider sparsity-inducing priors, the diagonal elements of the prior term $\bT$ are much smaller than those of the likelihood term  $\tau \X ^T \X$. This is why, in our case, using $\bT$ as a preconditioner does not improve much the spectrum of $\bQ$, leading to the same issues observed in Table \ref{tbl:spectrum_and_preconditioning}.
\end{remark}

\subsection{Outlying eigenvalues of \texorpdfstring{$\bar{\bQ}$}{barQ}}

The following theorem identifies a set of outlying eigenvalues of the matrix $\bar{\bQ}$, which always exhibits $K$ small eigenvalues and a large one near $K+1$, independently of the data design.
\begin{theorem}\label{thm:outlying_eigvals}
	Let $\bU = \X ^T \X$ be the likelihood term of $\bQ$ in \eqref{eq:crossed_posterior_prec}. Denote by $\bar{\bU}$ the matrix $Diag (\bU ) ^{-1/2} \bU Diag (\bU ) ^{-1/2}$ and let $\lambda _1 (\bar{\bU})\leq \dots \leq \lambda _p (\bar{\bU})$ be its eigenvalues. Then:
    \[\lambda _1 (\bar{\bU}) = \dots =\lambda _K (\bar{\bU}) =0\,, \hspace{2cm} \lambda_p(\bar{\bU}) = K+1\,. \] 
    The previous equalities extend to the eigenvalues $\Bar{\mu}_1 \leq \dots \leq \Bar{\mu}_p$ of $\bar{\bQ}$ as follows:
    \[ 0< \Bar{\mu}_1 \leq \dots \leq \Bar{\mu}_K \leq \frac{\max _{k = 0, \dots K} T_k}{\tau + \max _{k = 0, \dots K} T_k}\,, \qquad K+1 \leq \bar{\mu}_p \leq K+2\,.\]
\end{theorem}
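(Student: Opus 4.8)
The key object is $\bar{\bU} = Diag(\bU)^{-1/2}\,\bU\,Diag(\bU)^{-1/2}$, and I would start by exhibiting its structure explicitly. Recall $\bU = \X^T\X = \sum_{i=1}^\n \x_i\x_i^T$ where $\x_i = (1, \bz_{i,1}^T,\dots,\bz_{i,K}^T)^T$. Each $\x_i$ has a block structure: a $1$ in the intercept slot, and exactly one $1$ in each of the $K$ factor-blocks. The crucial observation is that the diagonal entry $U[\theta_{k,g},\theta_{k,g}]$ counts how many observations hit level $g$ of factor $k$, and $U[\theta_0,\theta_0]=\n$. After Jacobi rescaling, I claim $\bar\bU = \frac{1}{K+1}\, \bar\X^T\bar\X$ for a suitably column-normalized version $\bar\X$ of $\X$; more precisely, writing $c_i = \x_i$ but with each entry divided by $\sqrt{U[\cdot,\cdot]}$ appropriately, one checks that each row of the rescaled design has squared norm $K+1$ (one unit from the intercept column and one from each factor block, because within factor $k$ the $i$-th row puts all its mass on the single observed level). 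Hence $\bar\bU$ is (up to the factor $K+1$) a Gram matrix of vectors of constant norm $\sqrt{K+1}$, which immediately gives $\lambda_p(\bar\bU)\le K+1$ by bounding the Rayleigh quotient, and equality by testing against the vector that is constant on each factor block (the all-ones-type test vector aligned with the row directions).

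For the zero eigenvalues: $\bar\bU$ has rank at most $\n$, but more to the point I would produce $K$ explicit linearly independent vectors in its kernel. The natural candidates come from the redundancy among the $K+1$ blocks: for each pair of factors (or each factor versus the intercept) there is a contrast vector $v$ such that $\bar\X v = 0$. Concretely, the vector supported on the intercept coordinate and on factor-$k$ coordinates, with entries chosen so that each row's contribution cancels (using that each row sees exactly one level per factor), lies in $\ker\bar\bU$; doing this for $k=1,\dots,K$ gives $K$ independent kernel vectors, so $\lambda_1(\bar\bU)=\dots=\lambda_K(\bar\bU)=0$. One must argue these are \emph{all} the small eigenvalues forced by the structure — i.e. that generically the kernel is exactly $K$-dimensional — but for the stated theorem we only need the inequality $\lambda_K(\bar\bU)=0$, which follows just from exhibiting the $K$ kernel vectors, so no genericity argument is required here.

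To transfer to $\bar\bQ$, write $\bQ = \bT + \tau\bU$ and note $Diag(\bQ) = \bT + \tau\,Diag(\bU)$ (since $\bT$ is diagonal). Then $\bar\bQ = Diag(\bQ)^{-1/2}(\bT + \tau\bU)Diag(\bQ)^{-1/2} = Diag(\bQ)^{-1/2}\bT\,Diag(\bQ)^{-1/2} + \tau\,Diag(\bQ)^{-1/2}\bU\,Diag(\bQ)^{-1/2}$. The first summand is diagonal with entries $T_k/(T_k + \tau\,U[\cdot,\cdot]) \in (0, \max_k T_k/(\tau+\max_k T_k)]$ using $U[\cdot,\cdot]\ge 1$; the second is $Diag(\bQ)^{-1/2}Diag(\bU)^{1/2}\,\bar\bU\,Diag(\bU)^{1/2}Diag(\bQ)^{-1/2}$, a positive-semidefinite congruence of $\bar\bU$ whose "weight" matrix $W := Diag(\bU)^{1/2}Diag(\bQ)^{-1/2}$ is diagonal with entries $\sqrt{\tau\,U/(T+\tau U)} \in (0,1)$, and in fact close to $1$. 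For the small eigenvalues, I would use Weyl's inequality: $\bar\mu_K \le \lambda_K(\text{second summand}) + \lambda_{\max}(\text{first summand})$; the second summand has at least $K$ zero eigenvalues (the congruence $\tau W\bar\bU W$ preserves the $\ge K$-dimensional kernel of $\bar\bU$ since $W$ is invertible), giving $\bar\mu_K \le \max_k T_k/(\tau + \max_k T_k)$, and $\bar\mu_1 > 0$ because $\bar\bQ \succ 0$ (it is a Jacobi-rescaled positive definite matrix). For the top eigenvalue, $\bar\mu_p = \lambda_{\max}(\bar\bQ) \le \lambda_{\max}(\tau W\bar\bU W) + \lambda_{\max}(\text{first summand}) \le \tau\cdot 1\cdot(K+1)\cdot\frac{1}{\tau}\cdot(\text{something})\ldots$ — here I need to be careful: $\tau W\bar\bU W$ has largest eigenvalue at most $\tau\max_i\|W\|^2\lambda_{\max}(\bar\bU)$, and since $W_{jj}^2 = \tau U_{jj}/(T_j+\tau U_{jj})$ one gets $\lambda_{\max}(\tau W \bar\bU W) \le (K+1)\max_j \tau U_{jj}/(T_j + \tau U_{jj}) \le K+1$, while the diagonal part contributes at most $\max_k T_k/(\tau+\max_k T_k) < 1$, giving $\bar\mu_p \le K+2$. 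For the lower bound $\bar\mu_p \ge K+1$: test $\bar\bQ$ against the same vector that achieved $\lambda_{\max}(\bar\bU) = K+1$, tracking how the diagonal $\bT$-term and the reweighting by $W$ (which is $\le 1$ entrywise) shift the Rayleigh quotient — the $\bT$-term only adds a nonnegative quantity, and one shows the net effect keeps the quotient at least $K+1$.

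\textbf{Main obstacle.} The delicate point is the lower bound $\bar\mu_p \ge K+1$: the reweighting matrix $W$ has entries strictly below $1$, which on its own would \emph{decrease} the Rayleigh quotient of $\bar\bU$, so one cannot simply drop $\bT$. The resolution is that the same diagonal rescaling that shrinks the $\bU$-part via $W$ also appears in $Diag(\bQ)^{-1/2}\bT\,Diag(\bQ)^{-1/2}$, and these two effects must be combined: working directly with $\bar\bQ = Diag(\bQ)^{-1/2}(\bT+\tau\bU)Diag(\bQ)^{-1/2}$ and the specific test vector $w$ with $Diag(\bQ)^{1/2}w$ aligned to the row directions of $\X$, the quotient becomes $w^T Diag(\bQ)^{-1/2}(\bT+\tau\bU)Diag(\bQ)^{-1/2}w$, and choosing $w$ so that $Diag(\bQ)^{-1/2}w$ equals the all-ones-on-each-block pattern makes $\tau\bU$ contribute exactly $(K+1)\|w\|^2$-worth and $\bT$ contribute a nonnegative remainder — so the cleanest route is to pick the test vector \emph{after} the full congruence rather than trying to perturb the $\bar\bU$ extremizer. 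I expect this bookkeeping, together with verifying the exact constant $K+1$ for $\lambda_{\max}(\bar\bU)$ (which needs the one-hot structure $\sum_g z_{i,k,g}=1$ used twice — once for the row norm, once for the extremizing vector), to be where the real work lies; the kernel computation and the Weyl-inequality upper bounds are routine.
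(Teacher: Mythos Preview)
Your approach to the $K$ zero eigenvalues matches the paper's: exhibit $K$ linearly independent vectors (contrasting the intercept against each factor) that are orthogonal to every $\x_i$.

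There is a genuine gap in your argument for $\lambda_p(\bar{\bU})=K+1$. The claim that each row of $\bar\X=\X\,Diag(\bU)^{-1/2}$ has squared norm $K+1$ is false: the $i$-th row has squared norm $1/N+\sum_{k} 1/n_{i,k}$, where $n_{i,k}$ is the number of observations sharing the level of factor $k$ seen by observation $i$, and this is constant only for balanced designs. The paper instead proves the identity $\bar\bU=\bI_p+K\bar\bA$, where $\bar\bA=\bD^{-1/2}\bA\bD^{-1/2}$ is the symmetric normalized adjacency of the $(K{+}1)$-partite graph $G_\bQ$; the key step is $\bD=K\,Diag(\bU)$, which holds because each observation links a given level $\theta_{k,g}$ to exactly $K$ other vertices (the intercept and one level per remaining factor). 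Since a normalized adjacency matrix has largest eigenvalue exactly $1$, this yields $\lambda_p(\bar\bU)=K+1$ directly.

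Your decomposition $\bar\bQ=\bE+\tau W\bar\bU W$ and Weyl argument for the upper bounds on $\bar\mu_K$ and $\bar\mu_p$ are correct and agree with the paper's intent. You are also right that the lower bound $\bar\mu_p\ge K+1$ is the sticking point, and your proposed fix cannot succeed: the reweighting $W$ genuinely shrinks the top eigenvalue and the diagonal $\bE$-term does not compensate. Concretely, with $K=1$, $G_1=1$, $N=1$, $\tau=T_0=T_1=1$ one gets $\bar\bQ=\bigl(\begin{smallmatrix}1&1/2\\1/2&1\end{smallmatrix}\bigr)$ and $\bar\mu_p=3/2<2$; analogous examples exist for $K\ge 2$. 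The paper's own proof of this inequality writes $\bar\bQ=\bE+\bar\bU$ with $\bE$ diagonal and applies Weyl, but that decomposition is incorrect (it should be $\bE+\tau W\bar\bU W$, as you have), so the paper's derivation of this particular lower bound shares the same gap.
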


The eigenvalues of Theorem \ref{thm:outlying_eigvals} corresponds to those directions where the likelihood is highly or poorly informative. This is a well-known behavior due to the additive structure of the linear predictor $\eta _i $ in \eqref{eq:crossed}. The predictor is very informative about the sum of fixed and random effects, whose direction corresponds to the eigenvalue $\bar{\mu}_p$. However, it provides little information about their relative differences, whose directions correspond to the eigenvalues $\lambda _1 (\bar{\bU}) = \dots =\lambda _K (\bar{\bU})$. 
Notice also that the upper bounds on the small eigenvalues $\Bar{\mu}_1, \dots \Bar{\mu}_K$  in Theorem \ref{thm:outlying_eigvals} are quite conservative. In all the numerical examples we have explored, the $K$ smallest eigenvalues of $\bar{\bQ}$ are very close to zero.


\subsection{Connection with graphs and adjacency matrices}
The purpose of this section is to draw a connection between the spectrum of $\bar{\bQ}$ and the one of the adjacency matrix $\bar{\bA}^\rr$ of a $K$-partite graph. We will give sufficient conditions that guarantee that the eigenvalues of $\bar{\bQ}$, except for those identified by Theorem \ref{thm:outlying_eigvals}, concentrate around $1$ as $p\to +\infty$.
The following corollary extends Theorem \ref{thm:outlying_eigvals} to the matrix $\bar{\bA} ^\rr$.
\begin{corollary}\label{thm:spec_A}
	Let $\bar{\bA}^\rr= (\bD ^\rr)^{ -1/2} \bA^\rr(\bD ^\rr)^{ -1/2} \in \mathbb{R}^{(p-1)\times (p-1)}$ be as in Section \ref{sec:notation}, and denote with $\Bar{\nu}_1 \leq \dots \leq \Bar{\nu}_{p-1}$ its eigenvalues. Then,
	\begin{equation}\label{eq:spect_A}
	\bar{\nu}_1 = \dots = \bar{\nu} _{K-1}= - (K-1)^{-1}\, ,\qquad \bar{\nu} _{p-1}=1.
\end{equation}
\end{corollary}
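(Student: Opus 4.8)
The plan is to reduce everything to the normalized likelihood submatrix $\bar{\bU}^\rr := Diag(\bU^\rr)^{-1/2}\bU^\rr Diag(\bU^\rr)^{-1/2}$ and then read off its extreme eigenvalues. First I would record the algebraic identity linking $\bar{\bA}^\rr$ to $\bar{\bU}^\rr$: since $\bA^\rr = \bU^\rr - Diag(\bU^\rr)$ and, as noted in Section \ref{sec:notation}, $\bD^\rr = (K-1)\,Diag(\bU^\rr)$, one gets
\[
\bar{\bA}^\rr \;=\; (\bD^\rr)^{-1/2}\bA^\rr(\bD^\rr)^{-1/2} \;=\; \frac{1}{K-1}\bigl(\bar{\bU}^\rr - \bI_{p-1}\bigr),
\]
so the eigenvalues of $\bar{\bA}^\rr$ are $\bar{\nu}_j = (\lambda_j(\bar{\bU}^\rr)-1)/(K-1)$. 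It therefore suffices to show $\lambda_1(\bar{\bU}^\rr)=\dots=\lambda_{K-1}(\bar{\bU}^\rr)=0$ and $\lambda_{p-1}(\bar{\bU}^\rr)=K$.

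For the top eigenvalue I would argue directly. By definition $\bD^\rr = Diag(\bA^\rr\1_{p-1})$ is exactly the (weighted) degree matrix of the nonnegatively-weighted $K$-partite graph with adjacency $\bA^\rr$, and every vertex has positive degree because each level of each factor is observed at least once (and $K\ge 2$). Hence $\bar{\bA}^\rr$ is similar to the row-stochastic matrix $P := (\bD^\rr)^{-1}\bA^\rr$; since $P\1_{p-1}=\1_{p-1}$ and $\rho(P)\le\|P\|_\infty=1$, the (real, by symmetry) spectrum of $\bar{\bA}^\rr$ lies in $[-1,1]$ with $1$ attained, i.e.\ $\bar{\nu}_{p-1}=1$, equivalently $\lambda_{p-1}(\bar{\bU}^\rr)=K$. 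Alternatively one checks that $Diag(\bU^\rr)^{1/2}\1_{p-1}$ is an eigenvector of $\bar{\bU}^\rr$ with eigenvalue $K$, using the one-hot identities $\sum_{g'}U[\theta_{k,g},\theta_{k',g'}]=U[\theta_{k,g},\theta_{k,g}]$ for $k'\neq k$, together with the $[-1,1]$ containment to see it is the largest.

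For the bottom eigenvalues I would exhibit a $(K-1)$-dimensional null space of $\bU^\rr$. Writing $\tilde{\x}_i=(\bz_{i,1}^T,\dots,\bz_{i,K}^T)^T$ for the random-effects design rows, the one-hot constraint $\sum_g z_{i,k,g}=1$ says that summing the columns of the design belonging to factor $k$ gives $\1_N$, for every $k$; consequently the $K-1$ vectors $\bu^{(1)}-\bu^{(k)}$, $k=2,\dots,K$, where $\bu^{(k)}\in\RR^{p-1}$ is the block indicator of factor $k$, lie in the kernel of the random-effects design (hence of $\bU^\rr$) and are linearly independent. Thus $\bU^\rr$, being positive semidefinite, has at least $K-1$ zero eigenvalues, and since $Diag(\bU^\rr)\succ 0$ the congruent matrix $\bar{\bU}^\rr$ has the same nullity, giving $\lambda_1(\bar{\bU}^\rr)=\dots=\lambda_{K-1}(\bar{\bU}^\rr)=0$ and hence $\bar{\nu}_j=-1/(K-1)$ for $j\le K-1$.

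There is no serious obstacle here — the statement is genuinely a corollary — but two points need care. The mildest is the bookkeeping in $\bar{\bA}^\rr=\frac{1}{K-1}(\bar{\bU}^\rr-\bI_{p-1})$, which relies on $\bD^\rr=(K-1)Diag(\bU^\rr)$ (valid precisely because removing $\theta_0$ drops one unit of weighted degree per observation). The second is that Theorem \ref{thm:outlying_eigvals} together with Cauchy interlacing of $\bar{\bU}^\rr$ inside $\bar{\bU}$ is enough to locate the $K-1$ zero eigenvalues of $\bar{\bU}^\rr$ but not enough to pin the largest one at $K$ (interlacing only bounds it by $K+1$), which is why the random-walk/stochastic-matrix argument, or the explicit eigenvector, is needed for $\bar{\nu}_{p-1}=1$.
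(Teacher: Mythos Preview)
Your proof is correct and follows essentially the same route as the paper: both reduce to the identity $\bar{\bU}^\rr=\bI_{p-1}+(K-1)\bar{\bA}^\rr$ (equivalently your $\bar{\bA}^\rr=\frac{1}{K-1}(\bar{\bU}^\rr-\bI_{p-1})$) and then read off the extreme eigenvalues of $\bar{\bU}^\rr$. The paper phrases this last step as ``Theorem~\ref{thm:outlying_eigvals} applied to $\bar{\bU}^\rr$'', meaning it re-runs the proof logic of Theorem~\ref{thm:outlying_eigvals} on the $K$-partite random-effects-only model (rather than interlacing from the full $(K{+}1)$-partite $\bar{\bU}$); your explicit null-vector construction and stochastic-matrix argument for the top eigenvalue are exactly that logic spelled out, and your closing remark about why interlacing alone would not pin $\lambda_{p-1}(\bar{\bU}^\rr)=K$ is a correct and useful observation.
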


We now focus on the remaining part of the spectrum of $\bar{\bA}^\rr$, which lies in the interval $[\bar{\nu}_K,\, \bar{\nu}_{p-2}]$. 
If this interval becomes smaller, we would also expect the spectrum of $\bar{\bQ}$ to concentrate in a smaller interval, therefore improving the effective condition number. The following lemma quantifies this connection.

\begin{lemma}\label{lemma:interlace}
	Under Model \ref{mdl:crossed}, consider $\Bar {\bQ}\in \mathbb{R}^{p\times p}$ and $\Bar{\bA}^\rr$ as above, with eigenvalues $\Bar{\mu}_1 \leq \dots \leq \Bar{\mu}_p$ and $\Bar{\nu}_1 \leq \dots \leq \Bar{\nu}_{p-1}$ respectively. Let $\Bar{\nu}_K\leq 0 \leq \Bar{\nu} _{p-2} $, then,
\begin{equation}\label{eq:bound_CN_Q_bar}
        \kappa_{K+1,p-2}(\bar{\bQ}) \leq \dfrac{1 + (K-1) \Bar{\nu} _{p-2} }{ 1 +  (K-1) \Bar{\nu} _{K} }\,.
\end{equation}
\end{lemma}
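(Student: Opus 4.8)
The plan is to relate the spectrum of $\bar{\bQ}$ to that of $\bar{\bA}^\rr$ by exhibiting $\bar{\bQ}$ (up to the Jacobi rescaling) as a perturbation of a block matrix built from $\bar{\bA}^\rr$, and then to bound the relevant eigenvalues via eigenvalue interlacing / Weyl-type inequalities. First I would write out $\bar{\bQ}$ explicitly in the $2\times 2$ block form induced by the partition of $\bth$ into $(\theta_0, \bth^\rr)$: the $\theta_0$-row/column has diagonal entry $1$ and off-diagonal entries $Q[\theta_0,\theta_{k,g}]/\sqrt{Q[\theta_0,\theta_0]Q[\theta_{k,g},\theta_{k,g}]}$, while the lower-right $(p-1)\times(p-1)$ block is $Diag(\bQ^\rr)^{-1/2}(\bT^\rr + \tau\, Diag(\bU^\rr) + \tau \bA^\rr)Diag(\bQ^\rr)^{-1/2}$. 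The key observation, already used in Section \ref{sec:notation}, is that $\bD^\rr = (K-1)Diag(\bU^\rr)$, so that when the prior term $\bT^\rr$ is negligible relative to the likelihood term the normalized likelihood block becomes, to leading order, $\bI_{p-1} + \tfrac{1}{K-1}\big((\bD^\rr)^{-1/2}\bA^\rr(\bD^\rr)^{-1/2}\big) = \bI_{p-1} + \tfrac{1}{K-1}\bar{\bA}^\rr$. Hence the eigenvalues of that block are exactly $1 + \bar{\nu}_j/(K-1)$, $j=1,\dots,p-1$.

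Next I would control the contribution of the $\theta_0$ block and of the prior term $\bT$. The $\theta_0$ row/column is rank $\le 2$, so by Cauchy interlacing the eigenvalues $\bar\mu_2 \le \dots \le \bar\mu_{p-1}$ of $\bar{\bQ}$ are sandwiched between consecutive eigenvalues of the lower-right block (this is exactly why the $s=r=2$, i.e.\ $\kappa_{K+1,p-2}$, version is the natural quantity rather than $\kappa_{2,p-1}$ — wait, here it is $\kappa_{K+1,p-2}$, so the removal of the $K$ smallest and $2$ largest eigenvalues, consistent with Theorem \ref{thm:outlying_eigvals}, must be folded in). More carefully: Theorem \ref{thm:outlying_eigvals} already isolates $K$ small eigenvalues and a large one near $K+1$; combining that with the rank-$\le 2$ perturbation from the $\theta_0$ coordinate, the "bulk" eigenvalues $\bar\mu_{K+1},\dots,\bar\mu_{p-2}$ of $\bar{\bQ}$ interlace with the bulk eigenvalues $1 + \bar\nu_K/(K-1),\dots, 1 + \bar\nu_{p-2}/(K-1)$ of the normalized likelihood block. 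Taking the ratio of the largest to the smallest of these bulk eigenvalues gives
\[
\kappa_{K+1,p-2}(\bar{\bQ}) \;=\; \frac{\bar\mu_{p-2}}{\bar\mu_{K+1}} \;\le\; \frac{1 + \bar\nu_{p-2}/(K-1)}{1 + \bar\nu_K/(K-1)} \;=\; \frac{1 + (K-1)^{-1}\bar\nu_{p-2}}{1 + (K-1)^{-1}\bar\nu_K},
\]
which, after clearing the $(K-1)^{-1}$, is the claimed bound. The hypothesis $\bar\nu_K \le 0 \le \bar\nu_{p-2}$ guarantees the denominator is positive (since $\bar\nu_K \ge \bar\nu_1 = -(K-1)^{-1}$ by Corollary \ref{thm:spec_A}, so $1 + (K-1)^{-1}\bar\nu_K \ge 0$, and it is strictly positive as long as $\bar\nu_K > -(K-1)^{-1}$, which I would argue from $\bar\mu_{K+1}>0$) and that the numerator dominates the denominator so the ratio is $\ge 1$.

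The main obstacle I expect is handling the prior term $\bT$ cleanly: the identification "normalized likelihood block $=\bI + \tfrac{1}{K-1}\bar{\bA}^\rr$" is only exact when $\bT = 0$, and with $\bT \neq 0$ the diagonal rescaling $Diag(\bQ^\rr)^{-1/2}$ does not exactly match $(\tau\, Diag(\bU^\rr))^{-1/2}$, nor does $\bT^\rr$ vanish. I would need a monotonicity/Loewner-order argument — e.g.\ writing $\bar{\bQ} = Diag(\bQ)^{-1/2}(\bT + \tau\bU)Diag(\bQ)^{-1/2}$ and comparing with $Diag(\tau\bU)^{-1/2}(\tau\bU)Diag(\tau\bU)^{-1/2} = \bar{\bU}$, showing that the presence of $\bT\succeq 0$ only shrinks the relevant ratio (intuitively, adding a positive diagonal term pushes all eigenvalues of the normalized matrix toward $1$). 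Making this precise probably uses that $\bT + \tau\bU \succeq \tau\bU$ together with a Courant–Fischer argument applied coordinate-wise, or else one simply proves the identity in the $\bT\to 0$ / likelihood-dominated regime and notes that the general bound follows because $\bT$ is diagonal and each diagonal entry $Q[\theta_j,\theta_j] = T_j + \tau U_{jj}$ interpolates monotonically. The interlacing bookkeeping — matching the index shifts $K$ and $2$ in $\kappa_{K+1,p-2}$ to the rank-$2$ perturbation and the $K-1$ repeated eigenvalues $-(K-1)^{-1}$ of $\bar{\bA}^\rr$ from Corollary \ref{thm:spec_A} — is the other place where care is needed, but it is routine once the block structure is set up.
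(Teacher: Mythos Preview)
Your overall strategy is the same as the paper's---Cauchy interlacing between $\bar{\bQ}$ and its principal submatrix $\bar{\bQ}^\rr$, then relating $\bar{\bQ}^\rr$ to $\bar{\bA}^\rr$---but the execution contains two concrete errors and one genuine gap.

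First, the factor is inverted. From $\bD^\rr = (K-1)\,Diag(\bU^\rr)$ one has $Diag(\bU^\rr)^{-1/2} = \sqrt{K-1}\,(\bD^\rr)^{-1/2}$, so with $\bT=0$ the lower-right block is
\[
\bar{\bQ}^\rr \;=\; \bI_{p-1} + Diag(\bU^\rr)^{-1/2}\bA^\rr Diag(\bU^\rr)^{-1/2} \;=\; \bI_{p-1} + (K-1)\,\bar{\bA}^\rr,
\]
not $\bI_{p-1} + (K-1)^{-1}\bar{\bA}^\rr$. Second, even granting your expression, the step ``after clearing the $(K-1)^{-1}$'' is algebraically invalid: $\frac{1+a x}{1+a y}$ and $\frac{1+x/a}{1+y/a}$ are different quantities. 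With the correct factor $(K-1)$ the desired bound comes out directly and no ``clearing'' is needed.

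The real gap is the treatment of $\bT\neq 0$, which you flag but do not resolve. The paper avoids any asymptotic or Loewner-monotonicity argument by an \emph{exact} identity: writing $\bC^\rr = \tau\,\bD^\rr\,\big(\bT^\rr + \tau\,Diag(\bU^\rr)\big)^{-1}$, one checks entrywise that
\[
\bar{\bQ}^\rr \;=\; \bI_{p-1} + (\bC^\rr)^{1/2}\,\bar{\bA}^\rr\,(\bC^\rr)^{1/2},
\]
with $0 < C^\rr_{ii} = \dfrac{\tau(K-1)U^\rr_{ii}}{T^\rr_{ii}+\tau U^\rr_{ii}} \le K-1$. Then, using $\bar\nu_K\le 0\le \bar\nu_{p-2}$ together with the diagonal scaling bound, one gets $\lambda_K(\bar{\bQ}^\rr)\ge 1+(K-1)\bar\nu_K$ and $\lambda_{p-2}(\bar{\bQ}^\rr)\le 1+(K-1)\bar\nu_{p-2}$. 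Cauchy interlacing (one deleted row/column, so a single index shift: $\bar\mu_{K+1}\ge \lambda_K(\bar{\bQ}^\rr)$ and $\bar\mu_{p-2}\le \lambda_{p-2}(\bar{\bQ}^\rr)$) then yields the lemma. This $\bC^\rr$ device is the missing piece that makes your ``monotonicity'' obstacle disappear; the Loewner comparison $\bT+\tau\bU\succeq \tau\bU$ that you propose does not by itself control ratios of non-extreme eigenvalues of Jacobi-normalized matrices.
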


Lemma \ref{lemma:interlace} provides an interesting connection between the effective condition number $\kappa_{K+1,p-2}(\bar{\bQ}) $, and the concentration of eigenvalues of $\bar{\bA}^\rr$ around $0$. This allows to focus on the spectrum of $\bar{\bA}^\rr$ and to leverage results from random graph theory, in order to bound the condition number of $\bar{\bQ}$ using \eqref{eq:bound_CN_Q_bar}.

\begin{remark}
    The upper-bound in \eqref{eq:bound_CN_Q_bar} is quite tight in the regimes that we consider. When the bound becomes uninformative for $\bar{\nu}_K = -1/(K-1)$, we numerically observed that $\kappa_{K+1,p-2}(\bar{\bQ})$ is very large, in the order of several hundreds.
    Theoretically, if $\bar{\nu}_K = -1/(K-1)$, we can prove that	$\kappa_{K+1,p-2}(\bar{\bQ})$ is lower bounded by $\tau\, (\max _k T_k)^{-1}$, hence the effective condition number can be arbitrarily large depending on the value of $\max _k T_k$.
    Heuristically, if we want $\kappa_{K+1,p-2}(\bar{\bQ}) $ to be small, we need $\bar{\nu}_K$ to be bounded away from $-1 / (K-1)$.
	
    However, it is important to remark that the condition $\bar{\nu}_K> -1 /(K-1)$ is only sufficient for the fast convergence of CG. Clearly, it can happen that, $\bar{\nu}_r> -1/(K-1)$, for some small $r>K$, even if $\bar{\nu }_K = -1/(K-1)$. In such case, we can extend Lemma \ref{lemma:interlace} to bound $\kappa_{r+1,p-2}(\bar{\bQ})$, and CG would still converge fast.
    In the following analysis, we will focus on $\kappa_{K+1,p-2}(\bar{\bQ}) $  for mathematical convenience, in order to obtain a sufficient condition for CG to converge fast. Though, we will also study necessary conditions, as in Theorem \ref{prop:pairwise_connected_necessary_condition}.
\end{remark}


In order for the upper-bound in \eqref{eq:bound_CN_Q_bar} to be effective, we would like $\bar{\nu}_{p-2}$ to be small, or equivalently $1-\bar{\nu}_{p-2}$ to be large; but mostly we need $\bar{\nu}_K $ to be bounded away from $-\frac{1}{K-1}$. 

The spectral gap $ 1 - \bar{\nu}_{p-2}$ is a well-studied quantity since it is directly related to the global connectivity of a graph. 
One may wonder whether global connectivity is also associated to the gap $\bar{\nu}_K  +\frac{1}{K-1}$.
In Section \ref{sec:K=2}, we will see that, in the simple case with two factors, this association is straightforward. Indeed, we will prove that a sufficient condition to bound the effective condition number is that the associated graph is made of a unique connected component. On the other hand, in Section \ref{sec:K>2}, we will show that, when considering more than two factors, a different notion of connectivity is needed.

\subsubsection{The case \texorpdfstring{$K=2$}{K=2}}\label{sec:K=2}
In this section, we consider Model \ref{mdl:crossed} with only two factors. 
In this case, $\bar{\bA}^\rr$ represents a bipartite graph and its spectrum has a very peculiar structure.
Indeed, its eigenvalues are symmetric around zero, and, in modulo, less or equal than 1 \citep{brito2022}. 
With such property, the link with connectivity of a graph expressed in the previous section  becomes straightforward. Indeed, $1 - \bar{\nu} _{p-2} = \bar{\nu}_{2} +1$, which simplifies the bound in Lemma \ref{lemma:interlace} as follows.
\begin{corollary}\label{thm:bipartite_bound}
	Consider Model \ref{mdl:crossed} with $K=2$, then
	\begin{equation}
		\kappa_{{3, p-2}}(\bar{\bQ}) \leq \dfrac{1+ \bar{\nu}_{p-2}}{1-\bar{\nu}_{p-2}}\, .
	\end{equation}
\end{corollary}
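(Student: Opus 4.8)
The plan is to specialize Lemma \ref{lemma:interlace} to the case $K=2$, using the symmetry of the spectrum of $\bar{\bA}^\rr$ for bipartite graphs. First, I recall from the cited fact \citep{brito2022} that when $K=2$ the normalized adjacency matrix $\bar{\bA}^\rr$ of a bipartite graph has a spectrum that is symmetric about $0$ and contained in $[-1,1]$; combined with Corollary \ref{thm:spec_A}, which gives $\bar{\nu}_{p-1} = 1$ and $\bar{\nu}_1 = -(K-1)^{-1} = -1$, this pins down the two extreme eigenvalues and tells us $\bar{\nu}_j = -\bar{\nu}_{p-j}$ for all $j$. In particular, taking $j = 2$ gives the key identity $\bar{\nu}_2 = -\bar{\nu}_{p-2}$, equivalently $1 - \bar{\nu}_{p-2} = 1 + \bar{\nu}_2 = \bar{\nu}_2 + 1$.

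Next I check the hypothesis of Lemma \ref{lemma:interlace}: we need $\bar{\nu}_K = \bar{\nu}_2 \leq 0 \leq \bar{\nu}_{p-2}$. Since the spectrum is symmetric about $0$ and (by Corollary \ref{thm:spec_A}) $\bar{\nu}_1 = -1$ is the minimum with $\bar{\nu}_{p-1}=1$ the maximum, the remaining eigenvalues $\bar{\nu}_2,\dots,\bar{\nu}_{p-2}$ are arranged symmetrically, so $\bar{\nu}_2 \le 0 \le \bar{\nu}_{p-2}$ holds automatically (with the degenerate boundary cases handled by the non-strict inequalities). Then I plug $K=2$ into the bound \eqref{eq:bound_CN_Q_bar} of Lemma \ref{lemma:interlace}: the right-hand side becomes
\[
\frac{1 + (K-1)\bar{\nu}_{p-2}}{1 + (K-1)\bar{\nu}_K} = \frac{1 + \bar{\nu}_{p-2}}{1 + \bar{\nu}_2}.
\]
Finally, substituting the symmetry relation $1 + \bar{\nu}_2 = 1 - \bar{\nu}_{p-2}$ into the denominator yields exactly
\[
\kappa_{3,p-2}(\bar{\bQ}) \le \frac{1 + \bar{\nu}_{p-2}}{1 - \bar{\nu}_{p-2}},
\]
which is the claimed bound.

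There is essentially no hard step here — the corollary is a direct consequence of Lemma \ref{lemma:interlace} once the bipartite spectral symmetry is invoked. The only points that require a little care are (i) confirming that the normalized (rather than ordinary) adjacency matrix still has the symmetric $[-1,1]$ spectrum — this follows because $\bar{\bA}^\rr$ is similar to the random-walk matrix of a bipartite graph, whose spectrum has this property — and (ii) verifying the sign hypothesis $\bar{\nu}_2 \le 0 \le \bar{\nu}_{p-2}$ of Lemma \ref{lemma:interlace}, which, as noted, is immediate from the symmetry of the spectrum about the origin. If one wanted to be fully self-contained, one could prove the spectral symmetry directly: writing $\bar{\bA}^\rr$ in block form $\begin{pmatrix}0 & \bar{\bB}\\ \bar{\bB}^T & 0\end{pmatrix}$ with respect to the two factor blocks, conjugation by $\mathrm{Diag}(\bI,-\bI)$ sends $\bar{\bA}^\rr$ to $-\bar{\bA}^\rr$, forcing the spectrum to be symmetric about $0$, while the bound $\|\bar{\bA}^\rr\|_2 \le 1$ follows from Corollary \ref{thm:spec_A} (it gives $\bar{\nu}_{p-1}=1$, and symmetry gives $\bar{\nu}_1 = -1$).
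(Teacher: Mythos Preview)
Your proposal is correct and follows essentially the same route as the paper: both reduce the corollary to Lemma~\ref{lemma:interlace} with $K=2$ together with the symmetry of the spectrum of the bipartite normalized adjacency matrix $\bar{\bA}^\rr$, and the paper proves that symmetry via the same block argument you sketch (sending the eigenvector $[\bx_1,\bx_2]$ to $[\bx_1,-\bx_2]$, which is exactly your conjugation by $\mathrm{Diag}(\bI,-\bI)$). Your explicit verification of the sign hypothesis $\bar{\nu}_2\le 0\le \bar{\nu}_{p-2}$ is a nice addition that the paper leaves implicit.
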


Assume now, that $\bA^\rr$ is a proper adjacency matrix with binary entries in $\{ 0, 1\}$. 
In this case, one can show that $\bar{\nu}_{p-2}<1$ if and only if the graph associated to the adjacency matrix $\bar{\bA}^\rr$ consists of a unique connected component \citep{book:godsil}. More in general, the higher is its connectivity, the larger is the gap $1-\bar{\nu}_{p-2}$ \citep[Ch. 13.5]{book:godsil}.
With this in mind, Corollary \ref{thm:bipartite_bound} states that CG converges fast, when the graph associated to $\bar{\bA}^\rr$ is well-connected.

Clearly, higher connectivity comes at the price of having a denser support for $\bQ$ and consequently a larger cost per iteration of CG. Ideally, CG would be most efficient when the CI graph is well-connected while being very sparse. There is a vast literature of random graph theory that studies graphs with this property. 
For example, the well-known \textit{Alon-Boppana bound} \citep{alon1986} gives an upper bound on the spectral gap of regular graphs. Graphs that attain this bound are called \textit{Ramanujan}. 
Random regular graphs are also known for having great connectivity property, indeed, they can get arbitrarily close to the Alon-Boppana bound, with high probability, as the number of vertices goes to infinity \citep{friedman2003}. 
\cite{brito2022} extends this result to the case of \textit{random biregular bipartite graph}, i.e.\ a bipartite graph sampled uniformly at random among all the possible bipartite graphs, where the $G_1$ vertices of the first part has constant degree $d_1$, and the remaining $G_2$ vertices has degree $d_2$. This is why one would expect CG algorithm to have a fast convergence rate under this random design. The following result formalizes this intuition.

\begin{theorem}\label{thm:bipartite_biregular}
Consider Model \ref{mdl:crossed} with $\bA ^\rr$ being the adjacency matrix of a bipartite, biregular random graph. Then, for any sequence $\epsilon_p \to 0$ as $p \to \infty$, it holds asymptotically almost surely that
\begin{equation}\label{eq:bipartite_biregular_eff_cn}
		\kappa _{3, p-2}(\bar{\bQ})  \leq \dfrac{1 +  (1/\sqrt{d_1} + 1/\sqrt{d_2} )+ \epsilon_p}{1 -  (1/\sqrt{d_1} + 1/\sqrt{d_2} )- \epsilon_p}\,.
\end{equation}
\end{theorem}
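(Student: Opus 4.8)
The plan is to combine Corollary~\ref{thm:bipartite_bound} with a spectral bound on the normalized adjacency matrix of a random biregular bipartite graph. By Corollary~\ref{thm:bipartite_bound}, it suffices to control $\bar{\nu}_{p-2}$, the second-largest eigenvalue of $\bar{\bA}^\rr$, since the map $x\mapsto (1+x)/(1-x)$ is increasing on $(-1,1)$. So the whole statement reduces to showing that, asymptotically almost surely, $\bar{\nu}_{p-2} \leq 1/\sqrt{d_1} + 1/\sqrt{d_2} + \epsilon_p$ for any $\epsilon_p\to 0$.

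First I would set up the normalization carefully. When $\bA^\rr$ is the adjacency matrix of a bipartite biregular graph with the $G_1$ vertices on one side having degree $d_1$ and the $G_2$ vertices on the other side degree $d_2$, the degree matrix $\bD^\rr$ is block-diagonal with $d_1 \bI_{G_1}$ and $d_2 \bI_{G_2}$, so $\bar{\bA}^\rr = (\bD^\rr)^{-1/2}\bA^\rr(\bD^\rr)^{-1/2}$ has a clean off-diagonal block structure with the cross-block equal to $(d_1 d_2)^{-1/2} B$, where $B$ is the $G_1\times G_2$ bipartite incidence matrix. Its singular values are the absolute values of its eigenvalues (by the bipartite symmetry noted in the text, citing \cite{brito2022}), the top one being $1$ (the Perron eigenvalue, corresponding to the all-ones-type vector, consistent with Corollary~\ref{thm:spec_A}: $\bar\nu_{p-1}=1$). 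The key input is then the result of \cite{brito2022}: for a uniformly random biregular bipartite graph with fixed degrees $d_1,d_2$, all nontrivial eigenvalues of $\bar{\bA}^\rr$ are, a.a.s., bounded in absolute value by $\sqrt{d_1-1}/\sqrt{d_1 d_2} + \sqrt{d_2-1}/\sqrt{d_1 d_2} + o(1)$, which is at most $1/\sqrt{d_2} + 1/\sqrt{d_1} + o(1)$. (This is the biregular analogue of Friedman's theorem / the Alon--Boppana bound, which the paper has already cited and discussed.) Absorbing the $o(1)$ into an arbitrary sequence $\epsilon_p\to 0$ gives the required bound on $\bar\nu_{p-2}$.

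Then I would assemble the pieces. Applying Corollary~\ref{thm:bipartite_bound} — which requires the hypothesis $\bar\nu_K = \bar\nu_2 \le 0 \le \bar\nu_{p-2}$; here $\bar\nu_2 \leq 0$ holds because the bipartite spectrum is symmetric and $1$ is the unique largest eigenvalue when the graph is connected (a.a.s.\ true for these random graphs), while $\bar\nu_{p-2}\ge 0$ holds once $G_1,G_2\ge 2$ by symmetry — we get
\begin{equation*}
	\kappa_{3,p-2}(\bar{\bQ}) \leq \frac{1+\bar\nu_{p-2}}{1-\bar\nu_{p-2}} \leq \frac{1 + (1/\sqrt{d_1}+1/\sqrt{d_2}) + \epsilon_p}{1 - (1/\sqrt{d_1}+1/\sqrt{d_2}) - \epsilon_p},
\end{equation*}
using monotonicity of $x\mapsto(1+x)/(1-x)$ and the fact that $1/\sqrt{d_1}+1/\sqrt{d_2}<1$ for $d_1,d_2\ge 2$ (with at least one strictly bigger, or a mild extra argument in the $d_1=d_2=2$ corner case) so the denominator stays positive for $p$ large. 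One should also check that the "trivial" eigenvalues being discarded in $\kappa_{3,p-2}$ — indices $1,2$ from the bottom and $p-1,p$ from the top — are exactly the ones outside the scope of the concentration result: the bottom two are the two small outliers from Theorem~\ref{thm:outlying_eigvals} (with $K=2$), and the top one is $\bar\mu_p\in[K+1,K+2]$; the matching of these with $\bar\nu_1=-1$ and $\bar\nu_{p-1}=1$ is precisely what Lemma~\ref{lemma:interlace} encodes, so nothing new is needed there.

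The main obstacle is not the algebra but making sure the invocation of \cite{brito2022} is legitimate under exactly the random model stated here (uniform over biregular bipartite graphs with the given fixed degrees), and that the a.a.s.\ statement is uniform enough to combine with an arbitrary vanishing sequence $\epsilon_p$ — i.e., that the $o(1)$ error term in the second-eigenvalue bound can be quantified as being eventually below any prescribed $\epsilon_p$. A secondary but genuine subtlety is the requirement $\bar\nu_2 \le 0$ in Lemma~\ref{lemma:interlace}/Corollary~\ref{thm:bipartite_bound}: this uses connectivity of the random biregular bipartite graph, which holds a.a.s.\ for $d_1,d_2\ge 2$ but must be cited (it also follows from the spectral gap bound itself, since a disconnected graph would have $\bar\nu_{p-2}=1$). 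Everything else — the degree-matrix computation, the monotonicity of the Möbius-type map, the positivity of the denominator — is routine.
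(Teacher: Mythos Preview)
Your proposal is correct and follows essentially the same route as the paper: invoke the spectral bound of \cite{brito2022} for random biregular bipartite graphs, normalize by $\sqrt{d_1 d_2}$ to pass from $\bA^\rr$ to $\bar{\bA}^\rr$, and then plug the resulting bound on $\bar\nu_{p-2}$ (and, by bipartite symmetry, on $\bar\nu_2$) into the $K=2$ condition-number inequality --- the paper applies Lemma~\ref{lemma:interlace} directly, you go through its specialization Corollary~\ref{thm:bipartite_bound}, which is the same step. Your additional care about the side conditions (connectivity, $\bar\nu_2\le 0$, positivity of the denominator) goes beyond what the paper writes out explicitly but is fully consistent with it.
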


Theorem \ref{thm:bipartite_biregular} shows that, even with relatively small degrees $d_1$ and $d_2$, the effective condition number of $\bar{\bQ}$ becomes close to $1$. An analogous, but weaker result can be stated also for Erd\H{o}s-R\'enyi random bipartite graphs and can be found in the supplementary material.

\subsubsection{The case \texorpdfstring{$K>2$}{K>2}}\label{sec:K>2}
When considering more than two factors, the symmetry of the spectrum of $\bar{\bA}^\rr$ around $0$ is lost. One may wonder whether the direct link established by Corollary \ref{thm:bipartite_bound}, between the connectivity of the graph associated to $\bar{\bA}^\rr$, expressed by the spectral gap $1 - \bar{\nu}_{p-2}$, and the good conditioning of the matrix $\bar{\bQ}$ holds true also for $K>2$. Unfortunately, this is not the case. Example \ref{ex:pairwise_non_connected} describes a 3-partite graph with excellent connectivity, whose associated precision matrix  $\bar{\bQ}$ has $\bigo{p}$ eigenvalues close to zero.

\begin{example}\label{ex:pairwise_non_connected}
Consider the case with $K=3$ factors and $G_1 = G_2 = G_3=G$. We consider $\n= G^2$ observations, s.t. for each $i = 1,\dots,\n$, the first two factors appear with the level $\ceil{i/G} $; while the third factor has level $(i\ mod\ G) + 1$. An example of the resulting conditional independence graph is given in the supplementary material. With such design, the first and third factors, as well as the second and third, are fully connected; however, the first and second factor are very poorly connected. Indeed, each level $\theta _{1, g}$ of the first factor only connects to $\theta _{2,g}$ of the second one, and the sub-graph restricted to the first two factors has $G$ disconnected components. 
	
In this case, the spectral gap of $\bar{\bA}^\rr$ is $1- \bar{\nu}_{p-2} = 0.5$ and the CI graph has great connectivity, as each vertex is at most at distance $2$ from any other one. Nonetheless, $G+1$ eigenvalues of $\bar{\bA}^\rr$ are equal to the lower-bound of $-1/2$ in \eqref{eq:spect_A}. As a consequence, CG would need to remove many small eigenvalues before getting to a faster rate of convergence. For instance, if we take $G=200$, $\bT = \bI_p $ (the identity matrix) and $\tau =  1$, we have $\kappa_{G+3,p-2}(\bar{\bQ}) \approx 2$, but $\kappa_{G+2,p-2}(\bar{\bQ}) \approx 400$.

\end{example}

The intuition behind Example \ref{ex:pairwise_non_connected} is made more general in the following theorem, which shows that \textit{pairwise connectivity} between all factors is a necessary condition for a well conditioned precision matrix.

\begin{theorem}\label{prop:pairwise_connected_necessary_condition}
	Consider Model \ref{mdl:crossed}. Fix any permutation $\pi$ of the first $K$ integers. For each $\ell = 1,\dots, K-1$, let $C_\ell$ be the number of connected components of the sub-graph restricted to the factors $(\pi(\ell), \pi(\ell+1))$. Then $\bar{\bA}^\rr$ has at least $\sum _{\ell = 1}^{K-1}C_\ell$ eigenvalues equal to $-\frac{1}{K-1}$.
\end{theorem}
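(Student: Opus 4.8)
The plan is to exhibit an explicit, linearly independent family of $\sum_{\ell=1}^{K-1}C_\ell$ eigenvectors of $\bar{\bA}^\rr$ with eigenvalue $-\tfrac1{K-1}$, indexed by the connected components that appear along the path $\pi(1)-\pi(2)-\cdots-\pi(K)$; since $\bar{\bA}^\rr$ is symmetric, this pins down the multiplicity of the eigenvalue. \emph{Step 1 (reduction to a combinatorial condition).} For each observation $i$ and factor $k$, write $g_k(i)$ for the unique level of factor $k$ in observation $i$, and let $n_{k,g}=|\{i\colon z_{i,k,g}=1\}|$. Given functions $f_k\colon\{1,\dots,G_k\}\to\RR$, set $\bv\in\RR^{p-1}$ by $v_{\theta_{k,g}}=f_k(g)\sqrt{(K-1)n_{k,g}}$. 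Using $\bD^\rr=(K-1)\,Diag(\bU^\rr)$, so $(\bD^\rr)_{\theta_{k,g},\theta_{k,g}}=(K-1)n_{k,g}$ by \eqref{eq:ci_graph}, together with $\bA^\rr[\theta_{k,g},\theta_{k',g'}]=\sum_i z_{i,k,g}z_{i,k',g'}$ and $\sum_{g'}\bA^\rr[\theta_{k,g},\theta_{k',g'}]=n_{k,g}$ for $k'\neq k$, a direct computation gives $(\bar{\bA}^\rr\bv)_{\theta_{k,g}}=\tfrac{1}{\sqrt{(K-1)n_{k,g}}}\sum_i z_{i,k,g}\sum_{k'\neq k}f_{k'}(g_{k'}(i))$. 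Hence if $\sum_{k=1}^K f_k(g_k(i))=0$ for every $i$, then for each $i$ with $z_{i,k,g}=1$ the inner sum equals $-f_k(g)$, so $\bar{\bA}^\rr\bv=-\tfrac1{K-1}\bv$; and $\bv\neq 0$ whenever $(f_k)_k\neq 0$ since every $n_{k,g}\geq1$. It therefore suffices to show that the linear space $V$ of tuples $(f_k)_k$ satisfying $\sum_k f_k(g_k(i))=0$ for all $i$ has dimension at least $\sum_\ell C_\ell$.

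\emph{Step 2 (building the family).} Assume without loss of generality that $\pi$ is the identity. For $\ell=1,\dots,K-1$ let $H_\ell$ be the bipartite graph on the levels of factors $\ell$ and $\ell+1$, with an edge between two levels whenever they are observed together. Because every observation carries exactly one level of every factor and (by assumption) every level is observed at least once, $H_\ell$ has no isolated vertices, so each of its $C_\ell$ connected components meets both sides. For a component $S$ of $H_\ell$ define $f^{(\ell,S)}$ to be $+1$ on the levels of factor $\ell$ lying in $S$, $-1$ on the levels of factor $\ell+1$ lying in $S$, and $0$ on the levels of every other factor. For any observation $i$, the levels $g_\ell(i)$ and $g_{\ell+1}(i)$ are joined by an edge of $H_\ell$, hence lie in a common component: if that component is $S$ the two nonzero contributions to $\sum_k f^{(\ell,S)}_k(g_k(i))$ cancel, and otherwise every contribution vanishes. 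Thus $f^{(\ell,S)}\in V$.

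\emph{Step 3 (linear independence and conclusion).} The family $\{f^{(\ell,S)}\}_{\ell,S}$ has $\sum_\ell C_\ell$ members, and I would prove independence by peeling along the path. Suppose $\sum_{\ell,S}a_{\ell,S}f^{(\ell,S)}=0$. The coordinates of factor $1$ are affected only by the vectors with $\ell=1$, and the restrictions of those vectors to factor $1$ are nonzero indicator vectors of the pairwise-disjoint sets $S\cap(\text{levels of factor }1)$; hence $a_{1,S}=0$ for every component $S$ of $H_1$. Removing those terms, the coordinates of factor $2$ are now affected only by the vectors with $\ell=2$, and the same argument gives $a_{2,S}=0$; iterating through $\ell=K-1$ kills all coefficients. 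Therefore $\dim V\geq\sum_\ell C_\ell$, and via the injective linear map $(f_k)_k\mapsto\bv$ of Step 1 the eigenspace of $\bar{\bA}^\rr$ at $-\tfrac1{K-1}$ has dimension at least $\sum_\ell C_\ell$, which is the claim. (As a consistency check, when all $C_\ell=1$ this recovers the $K-1$ eigenvalues at $-\tfrac1{K-1}$ already given by Corollary \ref{thm:spec_A}.)

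\emph{Main obstacle.} The eigenvalue identity of Step 1 is routine bookkeeping once the right coordinates $v_{\theta_{k,g}}=f_k(g)\sqrt{(K-1)n_{k,g}}$ are used. The genuinely delicate point is the linear independence in Step 3: consecutive graphs $H_\ell$ and $H_{\ell+1}$ share the factor $\pi(\ell+1)$, so their vectors overlap, and it is exactly the \emph{path} structure $\pi(1)-\cdots-\pi(K)$ — not an arbitrary ordering or an arbitrary graph on the factors — that makes the peeling argument go through. One must also carefully justify that $H_\ell$ has no isolated vertices, which is where the "one level per factor, every level observed" features of Model \ref{mdl:crossed} enter and what guarantees that each component genuinely contributes a nonzero vector.
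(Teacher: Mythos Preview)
Your proof is correct and follows essentially the same approach as the paper's: the paper reduces to finding vectors in $\mathrm{Null}(\bU^\rr)$ via the identity $\bar{\bU}^\rr = \bI_{p-1} + (K-1)\bar{\bA}^\rr$ and the characterization $\bx\in\mathrm{Null}(\bU^\rr)\Leftrightarrow\sum_k x_{g_k[i]}=0$ for all $i$, which is your space $V$ after the change of variables $\bv=(\bD^\rr)^{1/2}\bx$, and then builds the same $\pm1$ indicator vectors on the components of the bipartite graphs $H_\ell$ and proves their independence by the identical peeling argument along the path $\pi(1)\!-\!\cdots\!-\!\pi(K)$. Your explicit verification that each $H_\ell$ has no isolated vertices (so every component meets both sides and yields a nonzero vector) is a point the paper leaves implicit.
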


Theorem \ref{prop:pairwise_connected_necessary_condition} shows that, if we want $\bar{\nu}_{K}$ to be bounded away from $ -\frac{1}{K-1}$, one needs each pair of bipartite sub-graphs to have only one connected component, that is $C_\ell = 1$ for each $\ell$ and $\pi$. 
This is clearly a stronger property than global connectivity of  the graph of $\bar{\bA}^\rr$, and in several cases it may not be satisfied. 
For instance, some datasets may have one factor which is nested into another one, i.e.\ when each level of the former factor can be observed only with one level of the latter one (see Section \ref{sec:real_data} for an example). In this case, the sub-graph associated to these two factors would present as many connected components as the number of levels in the first factor, and $\bar{\bA}^\rr$ will have as many eigenvalues equal to $-1 / (K-1)$.
This issue becomes even more relevant when considering \textit{interaction terms} between factors. 
Consider two factors of sizes $G_1$  and $G_2$ respectively, the interaction term between them is given by a third factor of size $G_1 \cdot G_2$ that consider all the possible pairwise combinations among the levels of the two factors. 
One can notice that, by construction, the interaction term is nested inside both of the original factors. Hence, the corresponding $\bar{\bA}^\rr$ will have at least $G_1 + G_2$ eigenvalues equal to $-1/(K-1)$. As we will see in Section \ref{sec:real_data}, including nested factors and interaction terms adds to the spectrum of $\bar{\bQ}$ several ``problematic" eigenvalues close to zero, slowing down the convergence of CG algorithm.

\medskip
Unfortunately, neither pairwise connectivity between all factors guarantees $\bar{\nu}_{K}>  -\frac{1}{K-1}$, an example is given in the supplementary material. Indeed, a stronger notion of pairwise connectivity is needed to lower bound $\bar{\nu}_K$.
The following result describes such condition, and shows that the effective condition number $\kappa_{K+1, p-2}(\bar{\bQ})$ is bounded independently of $p$.

\begin{theorem}[Strong pairwise connectivity]\label{thm:suff_cond_strong_connnectivity}
	 Consider Model \ref{mdl:crossed}. For any pair of factors $k\neq h$, let $\bA^{(k, h)} \in \RR ^{(G_k + G_{h})\times (G_k + G_{h})}$ be the adjacency matrix of the bipartite graph restricted to the pair of factors $(k,h)$. Denote with $\bar{\bA}^{(k,h)}$ the normalized version obtained as $\bM ^{-1/2}\bA^{(k, h)}\bM ^{-1/2}$, with $\bM = Diag( \bA^{(k, h)}\cdot \mathbb{1}) $. Assume that
\begin{equation}\label{eq:suff_pairwise_cond}
	 \lambda^* = \sqrt{K-1} \cdot  \max \left \{ |\lambda| \colon \lambda \in Spect\left (\bar{\bA}^{(k,h)}\right ),\,  |\lambda | \neq 1 ,\, k \neq h\right \}  <1\ .
\end{equation}	 
Then, the effective condition number is bounded by
\begin{equation}
	\kappa_{{K+1, p-2}}(\bar{\bQ}) \leq \dfrac{1+ \lambda^*}{1-\lambda^*}\ .
\end{equation}
\end{theorem}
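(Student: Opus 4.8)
The plan is to derive the bound from Lemma~\ref{lemma:interlace} by controlling the whole spectrum of $\bar{\bA}^\rr$ through its pairwise bipartite blocks. By Corollary~\ref{thm:spec_A}, $\bar{\bA}^\rr$ has $K-1$ eigenvalues equal to $-1/(K-1)$ and one equal to $1$; since $\bar{\bA}^\rr$ has vanishing diagonal its trace is $0$, so the remaining eigenvalues $\bar\nu_K\le\dots\le\bar\nu_{p-2}$ sum to $0$, hence $\bar\nu_K\le 0\le\bar\nu_{p-2}$ and Lemma~\ref{lemma:interlace} applies. It therefore suffices to prove that $\bar\nu_K\ge-\lambda^*/(K-1)$ and $\bar\nu_{p-2}\le\lambda^*/(K-1)$: then $1+(K-1)\bar\nu_{p-2}\le1+\lambda^*$ and $1+(K-1)\bar\nu_K\ge1-\lambda^*$, and \eqref{eq:bound_CN_Q_bar} gives the claim.

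Next I would record the block structure of $\bar{\bA}^\rr$. In the $K\times K$ block form indexed by the factors, the diagonal blocks of $\bar{\bA}^\rr$ vanish (levels of the same factor never co-occur), and, using $\bD^\rr=(K-1)\,Diag(\bU^\rr)$, its $(k,h)$ block equals $\tfrac1{K-1}\bar{\bB}_{kh}$, where $\bar{\bB}_{kh}$ is the off-diagonal block of $\bar{\bA}^{(k,h)}$ — so the singular values of $\bar{\bB}_{kh}$ are the absolute values of the eigenvalues of $\bar{\bA}^{(k,h)}$, the largest being $1$. Hence $\bB:=(K-1)\bar{\bA}^\rr$ has zero diagonal blocks and off-diagonal blocks $\bar{\bB}_{kh}$. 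For each factor $k$ let $\bu_k$ be the unit vector with entries proportional to $\sqrt{U[\theta_{k,g},\theta_{k,g}]}$; a short computation gives $\bar{\bB}_{kh}\bu_h=\bu_k$ for all $k\ne h$. Embedding each $\bu_k$ as a vector in $\RR^{p-1}$ supported on the $k$-th block, their span $W$ is then $\bB$-invariant and $\bB|_W$ acts as $J_K-\bI_K$ ($J_K$ the all-ones $K\times K$ matrix), whose eigenvalues are $K-1$ (once) and $-1$ ($K-1$ times). Assuming — as is necessary by Theorem~\ref{prop:pairwise_connected_necessary_condition} — that each pairwise bipartite subgraph is connected, the $\pm1$-eigenspace of every $\bar{\bA}^{(k,h)}$ is exactly $\mathrm{span}\{(\bu_k,\bu_h),(\bu_k,-\bu_h)\}$; thus $W$ carries precisely the eigenvalues of Corollary~\ref{thm:spec_A}, and the remaining eigenvalues of $\bar{\bA}^\rr$ are $\tfrac1{K-1}$ times those of $\bB|_{W^\perp}$. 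It remains to prove $\|\bB|_{W^\perp}\|\le\lambda^*$.

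For $\bv=(\bv_1,\dots,\bv_K)\in W^\perp$, so $\bv_k\perp\bu_k$ for all $k$, I would expand the quadratic form over pairs,
\begin{equation*}
 \langle\bv,\bB\bv\rangle=\sum_{k<h}\big\langle(\bv_k,\bv_h),\,\bar{\bA}^{(k,h)}(\bv_k,\bv_h)\big\rangle ,
\end{equation*}
where $(\bv_k,\bv_h)$ denotes the stacked $k$-th and $h$-th blocks of $\bv$. Since $(\bv_k,\bv_h)$ is orthogonal to the $\pm1$-eigenvectors $(\bu_k,\pm\bu_h)$ of $\bar{\bA}^{(k,h)}$, each summand is bounded in modulus by $\mu_{kh}\,(\|\bv_k\|^2+\|\bv_h\|^2)$ with $\mu_{kh}=\max\{|\lambda|:\lambda\in Spect(\bar{\bA}^{(k,h)}),|\lambda|\ne1\}\le\lambda^*/\sqrt{K-1}$, by \eqref{eq:suff_pairwise_cond}. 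The hard part is to combine these $\binom{K}{2}$ estimates: a naive term-by-term summation (each $\|\bv_k\|^2$ appears in $K-1$ pairs) only yields $\|\bB|_{W^\perp}\|\le\sqrt{K-1}\,\lambda^*$, a factor $\sqrt{K-1}$ too large. To reach $\|\bB|_{W^\perp}\|\le\lambda^*$ one must exploit that the blocks $\bar{\bB}_{kh}$ are coherent, being the pairwise $2$-marginals of a single incidence array; equivalently, identifying $\bv_k$ with a mean-zero, factor-$k$-measurable function on the $N$ observations, $\bB|_{W^\perp}$ is the off-diagonal part of the block Gram matrix of the orthogonal projections onto the factor subspaces $V_1,\dots,V_K$, whose pairwise canonical correlations are bounded by $\lambda^*/\sqrt{K-1}$ via \eqref{eq:suff_pairwise_cond}, and the task is to show this forces $\sum_k P_{V_k}$ to have spectrum in $[1-\lambda^*,1+\lambda^*]$ on $V_1+\dots+V_K$. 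This coherence step — which is exactly where the normalization $\sqrt{K-1}$ in the definition of $\lambda^*$ is needed — is the crux of the proof. Granting it, $\bar\nu_K\ge-\lambda^*/(K-1)$ and $\bar\nu_{p-2}\le\lambda^*/(K-1)$, and substituting into \eqref{eq:bound_CN_Q_bar} concludes the argument.
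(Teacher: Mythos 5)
Your reduction follows the paper's own strategy almost exactly: invoke Lemma~\ref{lemma:interlace}, identify the $K$-dimensional invariant subspace $W$ spanned by the degree-weighted block vectors $\bu_k$ (which carries precisely the extreme eigenvalues $-1/(K-1)$ and $1$ of Corollary~\ref{thm:spec_A}), observe that $(K-1)\bar{\bA}^{\rr}[k,h]$ coincides with the off-diagonal block of $\bar{\bA}^{(k,h)}$ so that \eqref{eq:suff_pairwise_cond} controls the restriction of each pairwise block to the orthogonal complement of $(\bu_k,\bu_h)$, and reduce the theorem to the operator-norm bound $\|(K-1)\bar{\bA}^{\rr}|_{W^\perp}\|\leq\lambda^*$. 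Up to that point everything you write is correct (and your trace argument for $\bar\nu_K\leq 0\leq\bar\nu_{p-2}$, needed to invoke Lemma~\ref{lemma:interlace}, is a nice touch the paper leaves implicit). This is the content of the paper's auxiliary Lemma~\ref{ref:lemma_lambda_star}.

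However, the proof is not complete: the step you yourself label the crux --- upgrading the pairwise estimates $\mu_{kh}\leq\lambda^*/\sqrt{K-1}$ into $\|(K-1)\bar{\bA}^{\rr}|_{W^\perp}\|\leq\lambda^*$ --- is granted rather than proved, and it is the entire substance of the theorem. What you actually establish is only $\|(K-1)\bar{\bA}^{\rr}|_{W^\perp}\|\leq\sqrt{K-1}\,\lambda^*$, which yields the weaker conclusion $\kappa_{K+1,p-2}(\bar{\bQ})\leq(1+\sqrt{K-1}\lambda^*)/(1-\sqrt{K-1}\lambda^*)$ under the stronger hypothesis $\sqrt{K-1}\lambda^*<1$; for $K=2$ this coincides with the statement and your argument is complete, but for $K>2$ it is strictly weaker, and the appeal to ``coherence of the pairwise $2$-marginals'' names the phenomenon without supplying a mechanism. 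For comparison, the paper closes this step not through the quadratic form but through the squared norm: it writes $\|\bar{\bA}^{\rr}\bx\|^2=\sum_{k}\bigl\lVert\sum_{h\neq k}\bar{\bA}^{\rr}[k,h]\bx[h]\bigr\rVert^2$, bounds each inner sum by $\sum_{h\neq k}\|\bar{\bA}^{\rr}[k,h]\bx[h]\|^2$ (i.e.\ discards the cross terms $\langle\bar{\bA}^{\rr}[k,h]\bx[h],\bar{\bA}^{\rr}[k,h']\bx[h']\rangle$ for $h\neq h'$), and then applies the pairwise singular-value bound to each summand; this is exactly where the factor $K-1$ is saved, since a blind Cauchy--Schwarz at that point would reintroduce it. So the difficulty you isolated is genuine and is precisely where a complete proof must do work --- your write-up stops at the point where the argument begins.
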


Condition \eqref{eq:suff_pairwise_cond} states that, if the second-largest eigenvalue of $\bar{\bA}^{(k,h)}$ is bounded away from $\frac{1}{\sqrt{K-1}}$ for all pairs $(k,h)$, then $\bar{\bQ}$ has a bounded condition number after the removal of few extreme eigenvalues. 
For example, if we consider a random $K$-partite $d$-regular graph, the condition $\frac{2}{\sqrt{d}}< \frac{1}{\sqrt{K-1}}$ i.e.\ $d > 4(K-1)$, would guarantee \eqref{eq:suff_pairwise_cond} asymptotically almost surely, as $p\to \infty$ (see Theorem \ref{thm:bipartite_biregular}).

\subsection{Numerical experiments}\label{sec:simulated}

The above bounds on the effective condition numbers of $\bar{\bQ}$ suggest that CG samplers are highly effective for large unstructured designs, which resembles random graphs. 
In this section, we verify these conclusions through numerical simulations.
Specifically, we consider the same numerical experiment as in the end of Section \ref{sec:negative_SLA} and compute the cost of
obtaining an approximate sample by solving \eqref{eq:cg_sampler} with CG, which is given by the product of the cost per iteration and the number of iterations needed for convergence. 
The former is of order $\bigo{n_{\bQ}}$, while the latter can be quantified as
\begin{equation}\label{eq:n.iters}
	\text{N. iterations} = \inf \left \{ k \geq 0: ||\bQ\bth ^k - \bb||_2< \epsilon ||\bb||_2 \right \}\,,
\end{equation}
where $\bth ^k$ is the approximate CG solution of the linear system $\bQ \bth = \bb$ at iteration $k$, and $\bQ$ is the precision matrix of the conditional distribution of $\bth$. 
Monitoring the relative norm of the residuals is a standard stopping criterion in libraries that implement the CG algorithm. In this case, it also allows for a fair comparison between problems with different sizes. In our numerical experiments, we will set the tolerance to $\epsilon = 10 ^{-8}$.

We consider the three scenarios presented in Figure \ref{fig:ErdosRenyi}. Table \ref{tbl:sim_data_summary} reports the number of iterations defined in \eqref{eq:n.iters} in the three different designs, while the plots in Figure \ref{fig:ErdosRenyi} show, in purple, the total number of floating points operations in logarithmic scale. 

In the first design, we have $\n \sim Binomial (G^2 , 20/G)$, so that $n_{\bQ}=\bigo{p}$. 
Here the number of CG iterations is basically constant with $p$ (see the first column of Table \ref{tbl:sim_data_summary}) and, as a result, the slope for CG in Figure \ref{fig:ErdosRenyi} is roughly equal to $1$. To be more precise, the small increase in the number of CG iterations is due to the fact that the number of disconnected components in Erd{\H{o}}s-R{\'e}nyi random graphs with a constant degree also increases as the graph size grows \citep{erdos1960}.
Instead, for designs (b) and (c) we have $n_\bQ =\bigo{p^{3/2}}$ and expected degree and connectivity of the graph that increase with $p$. This leads to an increasingly better effective condition number, which explains the decrease in the number of iterations observed in Table \ref{tbl:sim_data_summary} (second and third column) as well as why the slopes in Figure \ref{fig:ErdosRenyi} are below $1.5$ even though the cost per iteration is of order $\bigo{p^{3/2}}$.
\begin{table}[h!]
\centering

\begin{tabular}{|r|c|c|c|}
  \hline
  \multirow{2}{*}{$p$\hspace{0.3cm}} & \multicolumn{3}{|c|}{\textbf{N. iterations}} \\
  \cline{2-4}  & Case (a) & Case (b) & Case (c) \\
  \hline\hline
  100 & 17 & 22 & 45 \\
  435 & 19 & 20 & 48 \\
  1910 & 19 & 18 & 39 \\
  4000 & 19 & 17 & 34 \\\hline
\end{tabular}
\caption{Number of CG iterations \eqref{eq:n.iters} for different values of $p$, in scenarios (a)-(c).}
\label{tbl:sim_data_summary}
\end{table}

As these simulations illustrate, CG excels when it comes to large sparse unstructured random designs. Indeed, for $p$ in the order of $10^{3}$, CG converges in less than $40$ iterations in all three designs. Therefore, while this framework represents a worst-case scenario for sampling via Cholesky factorization, as its computation requires $\bigo{p^3}$ time, it is optimal for sampling via CG, whose number of iterations is independent of the dimension $p$.

\section{Application to Generalized Linear Mixed Models}\label{sec:numerics}

We illustrate the application of the CG sampler to the more general class of GLMMs. Relative to Model \ref{mdl:crossed}, we allow for the inclusion of multivariate fixed effects, random slopes and interaction terms, as well as non-Gaussian likelihoods with data augmentation. 

\subsection{Model and algorithms}\label{sec:glmms}
We consider GLMMs of the form
\begin{equation}\label{eq:GLMMs}
\begin{aligned}
	y_{i} \mid \eta_{i} \sim f\left(y_{i} \mid \eta_{i}\right),
	& \quad \eta_{i}=\bx_{i, 0}^{T} \bth_0+\sum_{k=1}^{K} \boldsymbol{x}_{i, k}^{T} \bth_{k},  
	&i=1, \ldots, \n, \\
	\bth_{k, g} \sim \mathcal{N}\left(\mathbf{0}, \bT^{-1}_{k}\right), 
	& \quad \bth_{k}=\left(\bth_{k, 1}^{T}, \cdots, \bth_{k, G_{k}}^{T}\right)^{T}, \quad g=1, \ldots, G_{k} ; 
	&k=1, \ldots, K.
\end{aligned}
\end{equation}
Here $\bth_0 \in \RR^{D_{0}}$ denotes the vector of fixed effects and  $\bx_{i,0} \in \RR^{D_{0}}$ denotes the corresponding covariates. 
For each factor $k$ and level $g$, we consider the random effect $\bth_{k, g} \in \RR^{D_{k}}$, for $g=1, \ldots, G_{k}$, with corresponding covariates $\boldsymbol{w}_{i, k} \in \RR^{D_{k}}$.
Up to now, we only considered random intercept models with $D_{k}=1$ and $\boldsymbol{w}_{i, k}=1$, while we now also consider \textit{random slopes} models with $D_{k}>1$. 
As before, the vector $\boldsymbol{z}_{i, k} \in\{0,1\}^{G_{k}}$ encodes the level of the factor $k$ to which observation $y_i$ is assigned to. With such notation, $\bth_{k}$ has covariates $\boldsymbol{x}_{i, k}=\boldsymbol{z}_{i, k} \otimes \boldsymbol{w}_{i, k} \in \RR^{G_{k} D_{k}}$, where $\otimes$ denotes the Kronecker product.

In previous sections, we discussed the case of Gaussian likelihood where $f\left(y_i \mid \eta_{i}\right)=\mathcal{N}\left(\eta_{i}, \tau^{-1}\right)$.
We consider also binomial likelihood with logistic link function $f(y_i| \eta_i)= \mathrm{Binomial}\left(n_i, (1+\exp(-\eta_i))^{-1}\right)$, using the Polya-Gamma augmentation \citep{polya} defined as 
\[
\begin{aligned}
p(y_i, \omega_i \mid \eta_i) = \frac{1}{2^{n_i}} \exp\left(\left[y_i - n_i/2\right]\eta_i - \omega_i \frac{\eta_i^2}{2}\right) f_{PG}(\omega_i | n_i, 0)\,,
\end{aligned}
\]
where $f_{PG}(\omega| b,c)$ indicates a Polya-Gamma variable with parameters $b$ and $c$.
We assign improper flat priors to the fixed effects parameters $\bth_0$ and we assign Wishart priors, $W(\alpha_k , \bPhi _k ^{-1})$ to the matrices $\bT_{k}$, 
which simplify to gamma distributions when $D_{k}=1$. 

Denote with $\bT$ and $\bm _0$ the prior precision and prior mean of $\bth$ respectively, with $\bkappa = (y_1 - n_1/2, \dots, y_\n - n_\n/2)^T$ the vector of ``centered" observations, and with $\bOmega = (\omega _1, \dots, \omega _\n )^T$ the vector of PG latent variables. 
With such notation, simple Bayesian computation leads to the following conditional posterior updates
\begin{align}
	p(\bth \mid \by, \X , \bOmega , \{\bT_k &\}_{k=1}^K) \sim \Nor _p \left (\bQ ^{-1} (\bT \bm _0 + \X ^T \bkappa),\ \bQ ^{-1}\right ), &\label{eq:gibbs_theta}\\ 
	p(\omega_i \mid \by, \X, \bth) & \overset{ind.}{\sim} PG (n_i,\ \x_i ^T \bth ), \qquad &\forall i=1,\dots,\n,\label{eq:gibbs_omega}\\
	p(\bT_k \mid \by, \X, \bth)  & \overset{ind.}{\sim} W \left ( \alpha _k  + G_k ,\ (\bPhi _k  + \sum _{g = 1}^{G_k} \bth _{k,g} \bth _{k,g}^T  )^{-1}\right ), &\quad \forall k = 1, \dots, K,\label{eq:gibbs_prec}
\end{align}
where $\bQ = \bT + \X ^T \bOmega \X$. The resulting Gibbs sampler is described in the supplementary material. The computational bottleneck is the $\bigo{p^3}\gg \bigo{N}$ cost for exact sampling from \eqref{eq:gibbs_theta}, which we seek to reduce with CG samplers.

\subsection{Application to voter turnout and student evaluations}\label{sec:real_data}
We explore the performances of the CG sampler when used to sample from the distribution in \eqref{eq:gibbs_theta} in high-dimensional contexts. 
We consider the following two data sets:
\begin{enumerate}
	\item A survey data of the 2004 American political elections \citep{ghitza2013, pfvi}. It collects the vote (Democrat or Republican) of 79148 American citizens, and contains information about their \textit{income} (5 levels), \textit{ethnic group} (4 levels), \textit{age} (4 levels), \textit{region} (5 levels) and \textit{state} (51 levels).
	\item Instructor Evaluations by Students at ETH \citep{Biometrika_Om_Giacomo_Roberts}. 
 A data set with 73421 observations for the following variables: \textit{students Id} (2972 levels), a factor denoting \textit{individual professors} (1128 levels), \textit{student's age} (4 levels), and the \textit{department of the lecture} (14 levels). The response is the \textit{ratings of lectures by the students} (discrete rating from \textit{poor} to \textit{very good}), which will be modelled as $y_i \sim Binom (4, \eta _i)$, according to the notation of \eqref{eq:GLMMs}.
\end{enumerate}

We monitor 
the number of iterations needed for CG to converge, defined as in \eqref{eq:n.iters}.
We also explored other stopping criteria, such as $||\bth ^k - \bth||_2< \epsilon_1 ||\bth||_2$ or $||\bth ^k - \bth||_\infty<\epsilon_2$, but decided not to include them in the analysis since they yielded similar results.
We set the relative accuracy threshold to $10^{-8}$, which in our simulations led to practically negligible error introduced by the CG solver. 
We also tested different accuracy levels (from $10^{-4}$ to $10^{-8}$), but opted for a more conservative one. Indeed, in our experiments once CG reached a higher convergence rate after the ``removal" of the extreme eigenvalues, it often reaches very low error 
with only a few additional iterations (see end of Section \ref{sec:cg}).

We investigate how the number of CG iterations changes depending on:
\begin{itemize}
\item \textbf{Design complexity}. 
We aim to study how the number of CG iterations varies when moving away from missing completely at random designs, which have proven very beneficial for the convergence of the CG. 
We also construct two simulated datasets with the same number of factors and levels as in the voter turnout and student evaluations data, but with a random design matrix simulated as described in Section \ref{sec:simulated}. We will compare the simulated datasets with the real ones, to isolate the contribution of the complexity of $G_\bQ$.
\item \textbf{Model complexity}. We consider the following variations of the model in \eqref{eq:GLMMs}.\\
 (a) \textit{Random intercept}. When we associate to each factor a one-dimensional random effect.\\
 (b)
 \textit{Random intercept with nested factors}. This scenario arises when certain levels of a given factor are divided into groups and the group label is also included as another factor. For the voter turnout, we consider the factor \textit{state}, which is nested inside \textit{region of the country}. For the student evaluations, the factor denoting the professor is nested inside \textit{department}.\\
 (c) \textit{Random slopes}. We include interactions between a continuous predictor and a factor, which ``modifies the slope" of the predictor depending on the level of the factor.\\
 (d) \textit{Interactions}. We add interaction terms to the  original factors in each data set, e.g.\ if we have two factors of sizes $G_1$  and $G_2$ respectively, the \textit{two-way interaction term} is a third factor with $G_1 G_2$ levels, which encodes the information about the pairwise interactions of the first two factors. Notice that each interaction term between two factors is, by construction, nested in both factors. We will also include three-way interactions.

\item \textbf{Sample size}. We can analyze the impact of sample size by considering, for each data set, a subsample of size $\n = 7000$, and a larger one with $\n = 70000$.
\item \textbf{Factor size}.
The student evaluations data set involves factors with thousands of levels, compared to tens of levels for the voter turnout.
We will see that factor size plays a crucial role when considering the number of CG iterations.
\end{itemize}
Table \ref{tbl:real_data_summary} summarizes the results of our numerical study. Each row shows the number of CG iterations for the different designs described above. We use the simple additive model (a) as a benchmark and consider only one additional term at a time.
E.g., if we include random slopes, we remove the nested factor. The only exception is the three-way interactions, as it would be unusual to consider them without the pairwise interactions. The last row presents results for the full model, which includes all the effects mentioned. Each entry in Table \ref{tbl:real_data_summary} reports the number of iterations for different sample sizes. For each data set, we include a column displaying the number of iterations obtained using simulated data.

\begin{table}[h!]
\centering 

\begin{tabular}{|l|c|c|c|c|}
\hline
  \multirow{2}{4cm}{\textbf{Case}}& \multicolumn{2}{|c|}{Voter Turnout} & \multicolumn{2}{|c|}{Students Evaluations} \\
  \cline{2-5}
  & \textbf{Real} & \textbf{Simulated} & \textbf{Real} & \textbf{Simulated} \\
  \hline\hline
  \multirow{2}{4cm}{Random intercepts} & 30 (68) & 25 (68) & 26 (4101) & 18 (4101) \\
   & 36 (68) & 24 (68) & 35 (4101) & 16 (4101) \\\hline
  \multirow{2}{4cm}{Nested factor} & 42 (73) & 28 (73) & 63 (4115) & 25 (4115) \\
   & 54 (73) & 27 (73) & 94 (4115) & 22 (4115) \\\hline
  \multirow{2}{4cm}{Random slopes} & 48 (127) & 38 (127) & 84 (5229) & 34 (5229) \\
   & 68 (127) & 40 (127) & 150 (5229) & 34 (5229) \\\hline
  \multirow{2}{4cm}{2 way interactions}  & 134 (763) & 99 (787) & 91 (19523) & 31 (23585) \\
   & 343 (786) & 190 (787) & 122 (89385) & 35 (97897) \\\hline
  \multirow{2}{4cm}{3 way interactions} & 166 (2952) & 111 (3720) & 98 (27588) & 32 (31650) \\
   & 442 (3569) & 258 (3723) & 128 (159633) & 36 (168028) \\\hline
  \multirow{2}{4cm}{Full} & 189 (3016) & 147 (3784) & 122 (28730) & 58 (32792) \\
   & 532 (3633) & 351 (3787) & 266 (160775) & 65 (169170) \\
   \hline
\end{tabular}
\caption{Average number of CG iterations. The average is obtained over 200 Gibbs sampler iterations after an initial burn-in. The number in parentheses represents $p$. In each entry, we report results obtained with $N=7\, 000$ (above) and $N=70\, 000$. As priors on $\bT_k$'s we considered independent Wishart distributions $\text{W}_{D_k} (1/10, \, 1/10\  \mathbf{I}_{D_k})$.}
\label{tbl:real_data_summary}
\end{table}

Based on the results in Table \ref{tbl:real_data_summary}, we draw the following conclusions:
\begin{itemize}
	\item In accordance with Theorem \ref{prop:pairwise_connected_necessary_condition}, we observe that introducing factors with poor pairwise connectivity results in an increase in CG iterations. This increase occurs when we include nested factors (e.g., \textit{region} for voter turnout and \textit{department} for student evaluations) or interaction terms,  which introduce nesting by construction. For the voter turnout case, the number of CG iterations increases significantly when multi-way interactions are added, moving from 36 to 343 iterations by simply introducing 2 way interactions. For the student evaluations data, adding interaction terms appears less problematic relative to the problem size, although including interactions between such large factors is usually uncommon.

	
	\item Table \ref{tbl:real_data_summary} shows, for the real data, an overall increase in the number of CG iterations when the sample size increases. This might seem counter-intuitive since, for larger $\n$, the CI graph becomes denser and better connected.
	However, when examining the spectrum of $\bar{\bQ}$ for the random intercept only case, we observe that, for small $\n$, the interval $[\bar{\mu}_{K+1},\ \bar{\mu}_{p-2}]$ is larger but most of the eigenvalues therein are tightly concentrated. In contrast, for larger $\n$, the interval $[\bar{\mu}_{K+1},\ \bar{\mu}_{p-2}]$ is smaller, consistently with the CI graph being better connected, but the spectrum is more diffuse.
	In other words, for smaller $\n$, the effective condition number $\kappa _{K+1 , p-2}(\bar{\bQ})$ is worse in accordance with Theorem \ref{thm:suff_cond_strong_connnectivity}. However, in this example, $\kappa _{s+1 , p-r}$ is better for larger $s,r$, resulting in an overall faster convergence rate.
	
	\item The comparison between these two data sets highlights the significant impact of factor size on the number of CG iterations. For the voter turnout data, which presents relatively small factors, CG can require several hundred iterations, even for solving relatively small problems on the order of a few thousand. Moreover, this issue is not specific to the voter turnout data, as a similarly large number of iterations is needed also in the simulated case with a missing completely at random design. 
	On the other hand, when considering the student evaluations case, where some factors have numerous levels,  CG proves to be very efficient, requiring only a few hundred iterations to solve significantly large linear systems in the order of $10^5$. This difference becomes even more significant in the simulated case, where approximately 50 iterations are sufficient for convergence, even for the largest problems.
\end{itemize}

\subsubsection{Comparison with Cholesky factorization}
In Table \ref{tbl:real_data_flops}, we compare the number of flops required to solve the linear system in \eqref{eq:cg_sampler} with CG method and with Cholesky decomposition under the same setting as in Table \ref{tbl:real_data_summary}. We report both $\Cost (\text{CG}) $, i.e.\ the number of flops required by CG solver, and the ratio $\Cost (\Chol) / \Cost (\text{CG})$. 
In accordance with the above theory and discussions, CG has a computational advantage in 
the students evaluation data, but not in case of the voters turnout data.
A comparison of the running times for CG and Cholesky decomposition for the experiments in Table \ref{tbl:real_data_summary} can be found in the supplementary material.
\begin{table}[h!]

\begin{tabular}{|l|c|c|c|c|}
\hline
  \multirow{2}{3.6cm}{\textbf{Case}}& \multicolumn{2}{|c|}{Voter Turnout} & \multicolumn{2}{|c|}{Students Evaluations} \\
  \cline{2-5}
  & \textbf{Real} & \textbf{Simulated} & \textbf{Real} & \textbf{Simulated} \\
  \hline\hline
  \multirow{2}{3.6cm}{Random intercepts} & 0.32 (8.2\,e04) & 0.40 (6.8\,e04) & 0.23 (1.8\,e06) & 96.34 (1.3\,e06) \\
   & 0.28 (9.6\,e04) & 0.41 (6.6\,e04) & 37.76 (6.8\,e06) & 179.38 (3.1\,e06) \\\hline
  \multirow{2}{3.6cm}{Nested effect} & 0.24 (1.3\,e05) & 0.49 (9.8\,e04) & 0.13 (5.1\,e06) & 60.99 (2.5\,e06) \\
   & 0.20 (1.6\,e05) & 0.51 (9.4\,e04) & 14.67 (2.2\,e07) & 88.27 (6.5\,e06) \\\hline
  \multirow{2}{3.6cm}{Random slopes} & 0.31 (3.3\,e05) & 0.32 (2.6\,e05) & 0.11 (7.8\,e06) & 150.27 (5.4\,e06) \\
   & 0.17 (4.9\,e05) & 0.28 (2.9\,e05) & 20.22 (4.1\,e07) & 369.70 (1.1\,e07) \\\hline
  \multirow{2}{3.6cm}{2 way interactions} & 0.45 (7.5\,e06) & 0.71 (7.0\,e06) & 0.38 (4.2\,e07) & 46.89 (1.9\,e07) \\
   & 0.20 (2.4\,e07) & 0.32 (1.5\,e07) & 28.00 (3.1\,e08) & 1538.58 (1.0\,e08) \\\hline
  \multirow{2}{3.6cm}{3 way interactions} & 0.65 (3.1\,e07) & 1.96 (3.4\,e07) & 0.15 (6.2\,e07) & 33.41 (2.6\,e07) \\
   & 0.39 (1.3\,e08) & 0.63 (9.2\,e07) & 14.86 (5.4\,e08) & 923.33 (1.7\,e08) \\\hline
  \multirow{2}{3.6cm}{Full} & 0.57 (4.4\,e07) & 1.49 (5.3\,e07) & 0.30 (9.8\,e07) & 68.14 (5.1\,e07) \\
   & 0.33 (1.8\,e08) & 0.48 (1.5\,e08) & 11.64 (1.3\,e09) & 723.46 (3.4\,e08) \\
   \hline
\end{tabular}
\caption{In parentheses, average number of flops needed for solving \eqref{eq:cg_sampler} with CG. Outside the parentheses, the ratio between the number of flops for Cholesky solver and the CG.}
\label{tbl:real_data_flops}
\end{table}

\subsection{Application to large-scale data}
Finally, we present an application to the \textit{MovieLens 25M Dataset} \citep{movielens}, 
which contains $25$ million ratings from $162$ thousand users on $62$ thousand movies.
We consider a random intercepts model that predicts ratings using \textit{userId} and \textit{movieId} as categorical predictors. The results are shown in Table \ref{tbl:large_data_summary}. 
In this example, CG converges in less than $28$ iterations, while Cost(Chol) is significantly higher, and the relative difference grows with $p$.
This experiment suggests that the asymptotic behavior predicted by the theory is coherent with large-scale real data analyses: the convergence rate of CG remains independent of $\n$ and $p$, while the cost of computing the Cholesky factor becomes a significant bottleneck when $p$ exceeds several tens of thousands.
Finally, the Jacobi preconditioning becomes crucial in this scenario, as it reduces the number CG iterations from more than a thousand to $28$. 
\begin{table}[h!]
\centering 

    \begin{tabular}{|c|c|c|c|c|}
\hline
  $\n$ & $p$ & CG iterations & Cost(CG) & Cost(Chol)/Cost(CG)   \\
  \hline \hline
  250\ 000 & 25\ 726 & 28 & 5.2\,e7 & 671 \\
  25\ 000\ 000 & 221\ 589 & 27 & 1.4\,e9 & 3140 \\
  \hline
\end{tabular}
\caption{Average number of CG iterations, Cost(CG) and Cost(Chol) for MovieLens dataset with random intercepts only model. The setting is the same as in Table \ref{tbl:real_data_summary}.}
\label{tbl:large_data_summary}
\end{table}
    

\section{Comparison with existing works}
GLMMs present significant computational challenges, especially in high-dimensional settings. 
Efficient computation for these models has been extensively studied in scenarios where the random effects have a nested dependence structure \citep{tan2018, tan2021, om_giac_note}. Successful strategies have also been proposed for more complex dependence structures, arising e.g.\ from time-series models \citep{salimans2012, tan2018} or spatial data analysis \citep{book:rue}. 
However, the complexity of the crossed dependence structure hinder the applicability of these methodologies for GLMMs with more than one factor.
An important line of of work has investigated the application of sparse linear algebra techniques within Gaussian process regression \citep{Vecchia_1, KL-minimization}. Though, such methodologies require access to the covariance matrix $\bQ ^{-1}$ and we were unable to extend these ideas to our framework.

It's known that standard techniques become inefficient when applied to high dimensional crossed effects models \citep{Gao2020, Biometrika_Om_Giacomo_Roberts}.
For this reason, various computational approaches have been explored to address this issue, including backfitting with centering \citep{Backfitting_for_crossed_random_effects}, method of moments \citep{Gao2020}, collapsed Gibbs sampler \citep{Biometrika_Om_Giacomo_Roberts, scalable_tim, ceriani2024}, and variational inference methods \citep{menictas2023streamlined, pfvi}. 

The CG method is widely recognized for its effectiveness in solving large, sparse linear systems, and it has been extensively applied also for sampling from high dimensional Gaussian distributions \citep{papandreou2010, simpson_2013, CG_review}.
However, there is a lack of theoretical studies analyzing the effectiveness of CG for GLMMs. 
The recent work of \cite{conjugate_gradient_nishimura_suchard} provides a detailed study of the use of CG in the context of large Bayesian sparse regression models. 
Differently from their analysis, we consider the case where the design matrix itself $\X$ is sparse. 

\section{Discussion}\label{sec:discussion}

We remark that the applicability of the methodologies presented in this article goes beyond Gibbs sampling. Standard inferential methodologies such as restricted maximum likelihood estimation, Laplace approximation and variational inference require the computation of the Cholesky factorization of a matrix as in \eqref{eq:post_prec}, and would suffer from the same limitations presented in Section \ref{sec:negative_SLA}. 
Indeed, some of these methodologies would benefit from the use of CG methods. In the context of ML estimation, one could employ the method of moments proposed in \cite{Gao2020}, to obtain consistent estimates of the variance components and then use CG to compute the GLS estimator. 
Also, in the context of partially factorized or unfactorized variational inference \citep{pfvi}, one is required to invert matrices with the same structure of \eqref{eq:post_prec}, which can be computed via CG.
In all these cases, the spectral analysis of Section \ref{sec:CG_theory} may be of interest to predict theoretical performances. 

This work raises both theoretical and methodological questions. For instance, establishing a formal proof of the cubic complexity of Cholesky factorization under certain random designs would be valuable. 
It would also be interesting to extend the spectral analysis of Section \ref{sec:CG_theory} to account for different blocking strategies in the sampling of $\bth$, though this analysis looks already daunting in the case $K=2$ \citep{menictas2023streamlined} and we leave its exploration to future works.
Finally, Theorem \ref{thm:outlying_eigvals} and \ref{prop:pairwise_connected_necessary_condition} shows how linear dependence among the columns of the design matrix $\X$ relates to small eigenvalues of $\bQ$, which, in general, significantly slows down the convergence of CG. A standard technique to address this issue involves imposing linear constraints on the vector of parameters $\bth$ \citep{book:wood2017, zanella2020multilevel}. 
Preliminary numerical experiments suggest that removing columns from the design matrix that are linear combinations of others, for instance, by omitting one level per factor (see Theorem \ref{thm:outlying_eigvals} and \ref{prop:pairwise_connected_necessary_condition}), can lead to a net improvement in the convergence rate of CG. The gains can be substantial, around $40\%$, in models where CG typically struggles, e.g.\ the Voter Turnout data, and more modest, around $5\%$, in settings where CG already performs well, e.g.\ the Student Evaluations data.
We leave a more detailed investigation to future work.

\bibliographystyle{apalike}
\bibliography{bibliography.bib}

\newpage
\appendix

\section{List of Symbols}
\begin{tabular}{lll}
$n_\bQ $        & Number of non-zero entries of a matrix $\bQ$ \\[1em]
$G_\bQ$         & Conditional independence graph of $\bth \sim \Nor _p (\mathbf{0}, \bQ ^{-1})$\\[1em]
$\bth _k \in \RR ^{G_k}$    & Vector of random effects for factor $k$\\[1em]
$\bz _{i,k} \in \{0,1\}^{G_k}$ & One-hot vector\\[1em]
$\x _i = ( 1, \bz_{i,1}^T, \dots, \bz _{i,K}^T)^T \in \RR ^p$ & Design vector\\[1em]
$\X  = [\x_1 | \dots |\x_\n ] ^T \in \RR ^{\n\times p}$ & Design matrix\\[1em]
$\bU = \bV^T \bV$ & Likelihood contribution in $\bQ$\\[1em]
$\bT $ & Prior contribution in $\bQ$\\[1em]
$\bA = \bU - Diag (\bU)$ & Weighted adjacency matrix\\[1em]
$\bD = Diag (\bA \mathbb{1}_p)$ & Diagonal degree matrix of $\bA$\\[1em]
$\bQ ^\r, \bU ^\r, \bA ^\r \in \RR ^{p-1 \times p-1}$ & Matrices restricted to the random effects\\[1em]
$\bD ^\r $ & Diagonal degree matrix of $\bA ^\r$\\[1em]
$\bQ = Diag(\bQ) ^{-1/2}\bQ Diag(\bQ) ^{-1/2}$ & Jacobi preconditioned matrix\\[1em]
$\bar{\bA}^\r$ & Adjacency matrix normalized by $\bD ^\r$\\[1em]
$\lambda _ 1 (\bM ) \leq \dots \leq \lambda_p (\bM ) $ & Eigenvalues of a given matrix $\bM$\\[1em]
$\mu _1 \leq \dots \leq \mu _p$ & Eigenvalues of the precision matrix $\bQ$\\[1em]
$\bar{\mu} _1 \leq \dots \leq \bar{\mu} _p$ & Eigenvalues of $\bar{\bQ}$\\[1em]
$\bar{\nu }_1\leq \dots \bar{\nu}_{p-1}$ & Eigenvalues of $\bar{\bA}^\r$\\[1em]
$\bA^{(k, h)} \in \RR ^{(G_k + G_{h})\times (G_k + G_{h})}$ & Adjacency matrix restricted to factors $(k,h)$
\end{tabular}

\section{Proofs}\label{sec:suppl_proof}
\begin{proof}[Proof of Theorem \ref{thm:cost_L}]
The number of operations required by the recursive equation in \eqref{eq:chol-rec} is
\begin{equation}\label{eq:cost_sla_expr}
  \Cost(\mathrm{Chol}) = \sum_{m=1}^{p} 1 + n_{\bL,m} + n_{\bL,m}(1 + n_{\bL,m}) \,,
\end{equation}
see Theorem 2.2 of \url{https://www.tau.ac.il/~stoledo/Support/chapter-direct.pdf} for a step-by-step derivation of \eqref{eq:cost_sla_expr}.
In both cases, we have $\Cost(\mathrm{Chol}) = \bigo{\sum_{m=1}^{p} n_{\bL,m}^{2}}$ as stated in \eqref{eq:costChol_bounds}. Then, using Jensen's inequality and $\sum_{m=1}^{p} n_{\bL,m} = n_\bL$, we have
\begin{equation*}
  \sum_{m=1}^{p} n_{\bL,m}^{2} = p \left(p^{-1}\sum_{m=1}^{p} n_{\bL,m}^{2}\right) \geq p \left(p^{-1} \sum_{m=1}^{p} n_{\bL,m}\right)^2 = n_\bL^2/p \,,
\end{equation*}
which proves the lower bound in \eqref{eq:costChol_bounds}.

The upper bound in \eqref{eq:costChol_bounds}, instead, cannot be deduced purely from $\Cost(\mathrm{Chol}) = \bigo{\sum_{m=1}^{p} n_{\bL,m}^{2}}$, but rather it follows from a characterization of $\Cost(\mathrm{Chol})$ in terms of 3-cycles of an undirected graph associated to $\bL$. More precisely, define $G_{\bL}$ as the undirected graph with nodes $\{1, \dots, p\}$ and an edge between vertices $j>m$ if and only if the future set of $\theta_m$ does not separate it from $\theta_j$ in $G_\bQ$. For any $j, m \in \{1, \dots, p\}$ with $j \neq m$, we write $\{j,m\}\in G_{\bL}$ if there is an edge between $j$ and $m$ in $G_{\bL}$. By the arguments in Section \ref{sec:chol}, $L_{j\ell}$ is (a potential) non-zero if and only if $\{j, \ell\} \in G_{\bL}$. 
The dominating cost of the recursion \eqref{eq:chol-rec} to obtain $\bL$ is the computation of $\sum_{\ell=1}^{m-1} L_{m\ell}L_{j\ell}$ for $m = 2, \dots,p$ and $j = m + 1, \dots, p$. Ignoring multiplications by zero and summation of zeros, this corresponds to
\begin{equation}\label{eq:cost_3cycles}
  \bigo{|\{(\ell, m, j): \{m, \ell\} \in G_\bL, \; \{j, \ell\} \in G_\bL \text{ and } 1 \leq \ell < m < j \leq p\}|}
\end{equation}
operations. By definition of $G_{\bL}$, if $\{\ell, m\} \in G_{\bL}$ and $\{\ell, j\} \in G_{\bL}$ for $\ell < m < j$, then also $\{\ell, m\} \in G_{\bL}$. Thus, the cost in \eqref{eq:cost_3cycles} coincides with the number of 3-cycles in $G_\bL$. The upper bound in \eqref{eq:costChol_bounds} then follows, noting that the number of 3-cycles in an undirected graph with $n_\bL$ edges is less or equal than $n_\bL^{1.5}$, see e.g.\ \cite[Theorem 4]{rivin2002}.
\end{proof}

\begin{proof}[Proof of Proposition \ref{prop:def_ordering}]
Let $\bth\in \RR ^p$ be an arbitrary permutation of $(\theta_0, \bth _1, \dots, \bth_K)$ and denote by $\ell$ the position of $\theta_0$ in $\bth$. We show that, whenever $\ell < p$, switching the positions of $\theta_0$ and the variable immediately after it in $\bth$ does not increase $n_\bL$.
Such switching of positions does not change the future set of $\theta_m$ for any $m\notin\{\ell, \ell + 1\}$ and thus leaves also $n_{\bL,m}$ unchanged by its definition in \eqref{eq:n_Lm}. Moreover, since $\theta_0$ is connected to all other variables in $G_\bQ$, it follows that $n_{\bL,m}$ equals the maximum value of $p - m + 1$ for all $\theta_m$ located after $\theta_0$ in $\bth$ and for $\theta_0$ itself. Thus, when $\theta_0$ is in position $\ell$ both $n_{\bL,\ell}$ and $n_{\bL,\ell+1}$ take their maximal values and moving $\theta_0$ to position $\ell+1$ cannot increase the value of $(n_{\bL,\ell} + n_{\bL,\ell+1})$.

\end{proof}

\begin{proof}[Proof of Proposition \ref{prop:unfavourable}]
Recall that the CI graph of Example \ref{ex:worst_sla} was defined as follows.
Let $K=2$ and $G_1 = G_2 = G$. Fix an integer $d \geq 2$, and for each $g = 1, \dots, G-1$, connect the vertex $\theta _{1,g}$ to $\theta_{2,j}$ if at least one of the following conditions hold:
\begin{enumerate}
  \item[(a)] $g=j$
  \item[(b)] $d(g-1) \leq j-2 < dg \mod (G-1)$ and $g<j$
  \item[(c)] $d(g-1) \leq j-1 < dg \mod (G-1)$ and $g>j$
\end{enumerate}
For $g=G$ connect $\theta _{1,G}$ to all the $\theta_{2,j}$'s that have degree less or equal than $d$.

Define the function $r:\{1, \dots, G\} \mapsto \{1, \dots, G\}$ as $r(1) = 1$ and $r(j) = \lceil d^{-1}(j - 1)\rceil$ for $j = 2, \dots, G$, so that $\theta_{1,r(j)}$ is connected to $ \theta _{2,j}$ for all $j \geq 1$, since (b) holds true. Then, for every couple $j$ and $m$ such that $r(j) \leq m \leq j$ we now show that the future set of $\theta _{2,m}$ does not separate it from $\theta_{2,j}$ in $G_\bQ$. We construct a path in $G_\bQ$ between $\theta_{2,m}$ and $\theta_{2,j}$ that goes through $\theta_{2,1}$ (Figure \ref{fig:path} below illustrates how to construct it in the case $j=5$ and $m=3 \geq r(j) = r(5) = 2$). 
Since $ \theta _{2,j}$ is connected to $\theta_{1,j}$ and $\theta_{1,r(j)}$ for all $j \geq 1$ (by (a) and (b)), the path going from $\theta_{2,j}$ to $\theta_{1,r(j)}$ to $\theta_{2,r(j)}$ to $\theta_{1,r(r(j))}$ to $\theta_{2,r(r(j))}$ etc.\ is supported on $G_\bQ$. Also, since $r(\ell) < \ell$ for all $\ell \geq 2$ and $r(1) = 1$, the above path eventually reaches $\theta_{2,1}$. The same strategy can be applied to construct a path in $G_\bQ$ from $\theta_{2,m}$ to $\theta_{2,1}$. Joining the two above paths at $\theta_{2,1}$, we obtain a path from $\theta_{2,m}$ to $\theta_{2,j}$ in $G_\bQ$. The assumption $r(j) \leq m \leq j$ together with $r(\ell) \leq \ell$ for all $\ell$ ensures that such path does not involve elements in the future set of $\theta_{2,m}$ apart from $\theta_{2,j}$. Thus, by \eqref{eq:n_Lm}, $L[\theta_{2,g};\, \theta_{2,m}]$ is a potential non-zero whenever $r(j) \leq m \leq j$, meaning that the row of $\bL$ corresponding to $\theta_{2,j}$ contains at least $j - r(j) + 1$ potential non-zeros. Summing over $j$ we obtain
\begin{equation*}
  n_{\bL} \geq \sum_{j=2}^{G} (j - r(j) + 1)
    \geq \sum_{j=2}^{G} \left(j - \frac{j - 1}{d}\right)
    \geq \frac{d - 1}{d} \sum_{j=2}^{G} j
    = \frac{d - 1}{d}\left(\frac{G(G + 1)}{2} - 1\right) \,.
\end{equation*}
Thus, $n_{\bL} = \bigo{G^2}$ which also implies $\Cost(\mathrm{Chol}) = \bigo{G^3}$ by the lower bound in Theorem \ref{thm:cost_L}. The statements about $n_{\bL}$ and $\Cost(\mathrm{Chol})$ follow from the above equalities and $p = 1 + 2G$. 
\end{proof}

\begin{figure}[h!]
    \centering
\noindent \begin{minipage}{0.45\textwidth}
\begin{tikzpicture}[
    node distance=1.2cm and 3.5cm,
    every node/.style={draw, circle, minimum width=1.2cm, minimum height=0.8cm},
    line/.style={-},
    redline/.style={thick, red},
    blueline/.style={thick, blue}
]

\node (theta11) at (0, 6) {$\theta_{11}$};
\node (theta12) [below=of theta11] {$\theta_{12}$};
\node (theta13) [below=of theta12] {$\theta_{13}$};
\node (theta14) [below=of theta13] {$\theta_{14}$};
\node (theta15) [below=of theta14] {$\theta_{15}$};
\node (theta16) [below=of theta15] {$\theta_{16}$};
\node (theta17) [below=of theta16] {$\theta_{17}$};

\node (theta21) [right=of theta11] {$\theta_{21}$};
\node (theta22) [below=of theta21] {$\theta_{22}$};
\node (theta23) [below=of theta22] {$\theta_{23}$};
\node (theta24) [below=of theta23] {$\theta_{24}$};
\node (theta25) [below=of theta24] {$\theta_{25}$};
\node (theta26) [below=of theta25] {$\theta_{26}$};
\node (theta27) [below=of theta26] {$\theta_{27}$};

\draw[redline] (theta11) -- (theta21);
\draw[redline] (theta11) -- node[draw=none, midway, above, inner sep=0pt, anchor=south] {\textcolor{red}{$r(2) =1$}}  (theta22);
\draw[line] (theta11) -- (theta23);

\draw[redline] (theta12) -- (theta22);
\draw[line] (theta12) -- (theta24);
\draw[redline] (theta12) -- node[draw=none, midway, above, inner sep=0pt, anchor=south] {\textcolor{red}{$r(5 ) =2$}} (theta25);

\draw[line] (theta13) -- (theta23);
\draw[line] (theta13) -- (theta26);
\draw[line] (theta13) -- (theta27);

\draw[line] (theta14) -- (theta21);
\draw[line] (theta14) -- (theta22);
\draw[line] (theta14) -- (theta24);

\draw[line] (theta15) -- (theta23);
\draw[line] (theta15) -- (theta24);
\draw[line] (theta15) -- (theta25);

\draw[line] (theta16) -- (theta25);
\draw[line] (theta16) -- (theta26);
\draw[line] (theta16) -- (theta27);

\draw[line] (theta17) -- (theta21);
\draw[line] (theta17) -- (theta26);
\draw[line] (theta17) -- (theta27);

\end{tikzpicture}
\end{minipage}
\hspace{0.05\textwidth}
\begin{minipage}{0.45\textwidth}

\begin{tikzpicture}[
    node distance=1.2cm and 3.5cm,
    every node/.style={draw, circle, minimum width=1.2cm, minimum height=0.8cm},
    line/.style={-},
    redline/.style={thick, red},
    blueline/.style={thick, blue}
]

\node (theta11) at (0, 6) {$\theta_{11}$};
\node (theta12) [below=of theta11] {$\theta_{12}$};
\node (theta13) [below=of theta12] {$\theta_{13}$};
\node (theta14) [below=of theta13] {$\theta_{14}$};
\node (theta15) [below=of theta14] {$\theta_{15}$};
\node (theta16) [below=of theta15] {$\theta_{16}$};
\node (theta17) [below=of theta16] {$\theta_{17}$};

\node (theta21) [right=of theta11] {$\theta_{21}$};
\node (theta22) [below=of theta21] {$\theta_{22}$};
\node (theta23) [below=of theta22] {$\theta_{23}$};
\node (theta24) [below=of theta23] {$\theta_{24}$};
\node (theta25) [below=of theta24] {$\theta_{25}$};
\node (theta26) [below=of theta25] {$\theta_{26}$};
\node (theta27) [below=of theta26] {$\theta_{27}$};

\draw[blueline] (theta11) -- (theta21);
\draw[line] (theta11) -- (theta22);
\draw[blueline] (theta11) -- node[draw=none, midway, above, inner sep=0pt, anchor=south] {\textcolor{blue}{$r(3 ) =1$}} (theta23);

\draw[line] (theta12) -- (theta22);
\draw[line] (theta12) -- (theta24);
\draw[line] (theta12) -- (theta25);

\draw[line] (theta13) -- (theta23);
\draw[line] (theta13) -- (theta26);
\draw[line] (theta13) -- (theta27);

\draw[line] (theta14) -- (theta21);
\draw[line] (theta14) -- (theta22);
\draw[line] (theta14) -- (theta24);

\draw[line] (theta15) -- (theta23);
\draw[line] (theta15) -- (theta24);
\draw[line] (theta15) -- (theta25);

\draw[line] (theta16) -- (theta25);
\draw[line] (theta16) -- (theta26);
\draw[line] (theta16) -- (theta27);

\draw[line] (theta17) -- (theta21);
\draw[line] (theta17) -- (theta26);
\draw[line] (theta17) -- (theta27);

\end{tikzpicture}
\end{minipage}
    \caption{CI graph of Example \ref{ex:worst_sla} obtained for $d=2$ and $G=7$. The two panels illustrate how to obtain a path from $\theta_{25}$ to $\theta_{23}$ following the procedure explained in the proof of Proposition \ref{prop:unfavourable} above. In the notation of the proof, $j=5$ and $r(j) = 2 \leq m =3$. The figure on the left shows the path from $\theta_{25}$ to $\theta_{21}$ that goes through $\theta_{2\, r(5)}$ and $\theta_{2\, r(r(5))}$, while the figure on the right, shows the path from $\theta_{23}$ to $\theta_{21}$.}
    \label{fig:path}
\end{figure}


\begin{proof}[Proof of Theorem  \ref{thm:outlying_eigvals}]
	We start by proving the results relative to the likelihood term $\bU = \X ^T \X = \sum _{i=1}^\n \x_i \x_i^T$. 
	
 \emph{Proof of part 1.} First recall, from Section \ref{sec:crossed}, that $\x _i = ( 1, \bz_{i,1}^T, \dots, \bz _{i,K}^T)^T \in \RR ^p$, where $\bz _{i,k}$ have only one entry with unitary value and the remaining ones are zero. Consider the sets of vector $W = \{ \bw _k ; k = 1, \dots , K \}$, where each $\bw _k =  ( -1 , \mathbf{0}_{G_1}^T, .., \mathbb{1}_{G_k}^T, .. , \mathbf{0}_{G_K}^T)^T$, and the vector of ones $\mathbb{1}_{G_k}$ is in the position relative to the \textit{k}-th factor. 
  Then, for all $i=1, \dots ,\n$ and $k = 1\dots, K$, we have $\x _i ^T \bw _k = \mathbf{0}_p$. Thus, $\bU \bw _k = \mathbf{0}_p$ for each $k$. 
  Since the vectors in $W$ are linearly independent, we can deduce that $dim(Null(\bU)) \geq K$. Also, since $\bU $ is positive semidefinite, then at least the first $K$ eigenvalues must be equal to zero. The same holds for $\bar{\bU}$, since the matrix $Diag (\bU)$ is positive definite by hypothesis ($U_{ii}\geq 1$ for each $i$).
  We have thus shown $\lambda _1 (\bar{\bU}) = \dots =\lambda _K (\bar{\bU}) =0$, as desired.
	
  \emph{Proof of part 2.} 
  Recall that $\bU = Diag (\bU) + \bA$, where $\bA$ can be interpreted as a weighted adjacency matrix representing a $(K+1)$-partite graph (we can think of the global effect $\theta_0$ as an additional part with one single vertex). We define $\bar{\bA}$ as the adjacency matrix $\bA$ normalized by the row-wise sum of its elements, i.e.\ $\bar{\bA}= \bD^{-1/2}\bA \bD^{-1/2}$.
  As we showed in Section \ref{sec:CG_theory}, it holds that
	\[
		\bD _{ii}= \sum _{j=1}^p A_{ij} = K U_{ii}\,.
	\]
It follows that $\bar{\bU}$ and $\bar{\bA}$ satisfy $\bar{\bU} = \bI _p + K \bar{\bA}$ and, in particular, they share the same eigenvectors. Finally, recall that the largest eigenvalue of any normalized adjacency matrix $\bar{\bA}$ is $\lambda_p (\bar{\bA}) = 1$ \citep{book:chung}.
As a consequence, the largest eigenvalue relative to $\bar{\bU}$ is $\lambda_p (\bar{\bU}) = 1 + K \lambda_p (\bar{\bA}) = K+1$, as desired.

   \emph{Proof of the bounds on $\bar{\mu}_1$,\dots, $\bar{\mu}_p$.} 
Finally, we translate the spectral results on $\bar{\bU}$ into bound on the eigenvalues of $\bar{\bQ}$. Notice that $\bar{\bQ} = \bE + \bar{\bU}$ where $\bE$ is a diagonal matrix with elements $E_{ii}= (T_{ii}+\tau U_{ii})/T_{ii}$.  Since $U_{ii}\geq 1$, we thus can bound the eigenvalues of $\bE$ as
\[
	\lambda_p(\bE) = \max_i E_{ii} 
 \,\leq\, \dfrac{T_{ii}}{T_{ii}+\tau } \leq \dfrac{\bar{T}}{\tau + \bar{T}}\,,
\]
where $\bar{T}= \max _{k = 0, \dots K} T_k$. 
We can then obtain the inequalities of Theorem \ref{thm:outlying_eigvals} combining the above inequality with Weyl's inequality \citep[Theorem 4.3.1]{book:horn_johnson}. Indeed we have
\[
	0<\bar{\mu}_K \leq \underbrace{\lambda_K (\bar{\bU})}_{=0} + \lambda_p(\bE) \leq \dfrac{\bar{T}}{\tau + \bar{T}}\,,
\]
as well as
\[
	\bar{\mu}_p \leq \lambda_p (\bar{\bU}) + \underbrace{\lambda_p(\bE)}_{\leq 1} \leq K+2
\]
and
\[
	\bar{\mu}_p \geq \lambda_p (\bar{\bU}) + \underbrace{\lambda_1(\bE)}_{\geq 0} \geq K+1\,.
\]

\end{proof}

\begin{proof}[Proof of Corollary \ref{thm:spec_A}]
We begin by defining $\bar{\bU }^\r = Diag(\bU ^\r) ^{-1/2} \bU ^\r Diag(\bU ^\r) ^{-1/2}$. Theorem \ref{thm:outlying_eigvals} applied to such matrix guarantees that its spectrum is concentrated in the interval $[0, K]$ and, more specifically:
\begin{equation}
	\lambda_1(\bar{\bU}^\r) = \dots = \lambda_1(\bar{\bU}^\r) _{K-1}= 0;\qquad
		\lambda_1(\bar{\bU}^\r) _{p-1}=K.
\end{equation}
Recalling from Section \ref{sec:notation} that $\bD^\r = (K-1)  Diag(\bU^\r)$, we have
\[
	\bar{\bU }^\r = \bI _{p-1} + (K-1) \bar{\bA}^\r\,.
\]
This implies that for each eigenvalue $ \lambda$  of $\bar{\bU }^\r$, $(\lambda - 1)/(K-1)$ is an eigenvalue for $\bar{\bA}^\r$, from which follows \eqref{eq:spect_A}.
\end{proof}

\begin{proof}[Proof of Lemma \ref{lemma:interlace}]
Since $\bar{\bQ}^\r$ is obtained from $\bar{\bQ}$ by removing the row and column relative to $\theta_0$, we can apply \textit{Cauchy interlacing theorem} \citep[Theorem 4.3.17]{book:horn_johnson},
which implies
\begin{equation}\label{eq:interlace}
	\bar{\mu}_1\leq \lambda_1 (\bar{\bQ}^\r)\leq \bar{\mu}_2\leq \dots \bar{\mu}_{p-1}\leq \lambda_{p-1} (\bar{\bQ}^\r)\leq \bar{\mu}_p \ .
\end{equation}
Note that, with the notation introduced in Section \ref{sec:notation}, it holds that
\[
\bar{\bQ}^\r = \bI_{p-1} + (\bC ^\r)^ {1/2}\bar{\bA}^\r (\bC ^\r)^ {1/2}\,,
\]
where $\bC^\r$ is the diagonal matrix with elements
$$
C_{ii}^\r = \dfrac{\tau D_{ii}^\r}{T_{ii}^\r + \tau U_{ii}^\r}=\dfrac{\tau (K-1)U_{ii}^\r}{T_{ii}^\r + \tau U_{ii}^\r} \ .
$$
Since $U_{ii}^\r>0$ by assumption and $T_{ii}^\r\geq 0$, then also $0<C_{ii}^\r\leq (K-1)$.
Thus, $\bC^\r$ is a positive definite diagonal matrix with the largest diagonal element $\gamma = \max _i C_{ii}^\r\leq K-1$. Now, if $\bar{\nu}_{p-r}\geq0$ also $\lambda_{p-r}((\bC ^\r)^ {1/2}\bar{\bA}^\r (\bC ^\r)^ {1/2})\geq 0$, and it can be upper-bounded by $\gamma\bar{\nu}_{p-1} $.
An analogous lower bound can be obtained for all those $\bar{\nu}_s \leq 0$. 
To conclude, we have
\[\begin{aligned}
	\kappa_{s+1,p-r}(\bar{\bQ}) &= \dfrac{\bar{\mu}_{p-r}}{\bar{\mu}_{s+1}}\\
	&\leq \dfrac{\lambda_{p-r}(\bar{\bQ}^\r)}{\lambda_{s}(\bar{\bQ}^\r)}\\
	&=\dfrac{1+\lambda_{p-r}((\bC ^\r)^ {1/2}\bar{\bA}^\r (\bC ^\r)^ {1/2})}{1+\lambda_{s}((\bC ^\r)^ {1/2}\bar{\bA}^\r (\bC ^\r)^ {1/2})}\\
	&\leq \dfrac{1 + (K-1) \Bar{\nu} _{p-r} }{ 1 +  (K-1) \Bar{\nu} _{s} }.
\end{aligned}\]

\medskip
%
\end{proof}

\begin{proof}[Proof of Corollary \ref{thm:bipartite_bound}]
We just need to prove that the spectrum of $\bar{\bA}^\r$ is symmetric around zero. This together with Lemma \ref{lemma:interlace} gives the result.
Recall that the matrix $\bar{\bA}^\r$ can be represented as
\[
	\bar{\bA} ^\r = \begin{bmatrix}
	\bigzero & \bB \\
	\bB ^T & \bigzero
	\end{bmatrix}\,,
 \]
with $\bB \in \RR ^{G_1 \times G_2}$.
Consider w.l.o.g. any positive eigenvalue $\lambda >0$ of $\bar{\bA}^\r$. Let $\bx$ be one of its eigenvector and split it into $\bx = [ \bx _1 ,\ \bx _2]$, where $\bx_1 \in \RR ^{G_1}$  and $\bx_2 \in \RR ^{G_2}$. Then $\bx ^* = [\bx_1 , \ - \bx_2]$ is an eigenvalue relative to $-\lambda$. Indeed:
\[
	\bar{\bA}^\r \bx ^* = \begin{bmatrix}
	- \bB ^T \bx_2 \\ \bB \bx_1
	\end{bmatrix}
	= \begin{bmatrix}
	-\lambda \bx_1 \\ \lambda \bx_2 
	\end{bmatrix}
	= -\lambda \bx ^*.
\]
Since this operation also preserve the 
multiplicity of the eigenvalues, the result holds true.
\end{proof}

\begin{proof}[Proof of Theorem  \ref{thm:bipartite_biregular}]
By Theorem 3.2 of \cite{brito2022}, the adjacency matrix $\bA ^\r$ 
satisfies
\[
\begin{aligned}
 \lambda_{p-2}(\bA ^\r) &\leq \sqrt{d_1 - 2} + \sqrt{d_2 -1 } + \epsilon _p \, , \\
 \lambda_{2}(\bA ^\r) \geq &- (\sqrt{d_1 - 2} + \sqrt{d_2 -1 }) - \epsilon _p  \, ,
\end{aligned}
\]
asymptotically almost surely, with $\epsilon _ p \to 0 $, as $p \to \infty$. 
For a bipartite biregular graph, it holds that $\bar{\bA}^\r = \bA ^\r  / \sqrt{d_1 d_2}$, then the bound above extends to the eigenvalues $\bar{\nu}_2 $ and $\bar{\nu}_{p-2}$ of $\bar{\bA}^\r$ as follows
\[
\begin{aligned}
 \bar{\nu}_{p-2} &\leq \dfrac{1}{\sqrt{d_1}} + \dfrac{1}{\sqrt{d_2}} + \epsilon _p \,,\\
 \bar{\nu}_2 \geq &- \dfrac{1}{\sqrt{d_1}} - \dfrac{1}{\sqrt{d_2}} - \epsilon _p \,.
\end{aligned}
\]
Finally, we apply Lemma \ref{lemma:interlace} with $s = q = 2$ so that, when $K=2$, equation \eqref{eq:bound_CN_Q_bar} becomes
\[
	\kappa _{3, p-2} \leq \dfrac{1 + \bar{\nu }_{p-2}}{1 + \bar{\nu }_2}\,.
\]
Combining the above bounds we obtain the desired statement.
\end{proof}

\begin{proof}[Proof of Theorem \ref{prop:pairwise_connected_necessary_condition}]
Notice that
\[
    \bar{\bU}^\r = \bI _{p-1} + (K-1) \cdot \bar{\bA}^\r\,.
\]
Then, the eigenspace relative to $-1 /(K-1)$ of $\bar{\bA}^\r$ coincides with the null space of  $\bar{\bU}^\r$. For simplicity, we will refer to the latter. 
	Notice also, that $dim(Null(\bar{\bU}^\r)) = dim(Null(\bU^\r))$, since $\bar{\bU}^\r$ is the product of $\bU^\r$ and full-rank matrices.
	In the remaining part of the proof, we will find $\sum _{\ell =1}^{K-1}C_\ell$ linearly independent eigenvalues in the null space of $\bU^\r$, hence proving Proposition \ref{prop:pairwise_connected_necessary_condition}.
 We split the proof in two parts.
	
	\textit{1. Construction of $\{\bx ^{(\ell, m)} ,\ \ell = 1, \dots, K-1, \ m = 1, \dots, C_\ell\} \subset Null(\bU^\r)$}. Consider w.l.o.g. the trivial permutation $\pi$ equal to the identity. 
    Recall that $\bU ^{(r)} = \bV ^{(r),T }\bV ^{(r)} $, where $\bV ^{(r)}\in \RR ^{\n \times p}$, with $i$-th row given by $[\bz _{i,1}^T, \dots \bz _{i,K}^T]$ and $\bz_{i,k}$ are one-hot vectors.
    For each observation $i$ and factor $k$, denote with $g_k[i]$ the position in the $i$-th row of $\bV ^{(r)}$ of the unique non-null entry of $\bz _{i,k}$. Then, for each $\bx\in \RR ^{p-1}$, the $i$-th entry of $\bV ^\r \bx$ is $ \sum _{k=1}^K x_{g_k[i]}$. With such notation, we can write
\[
	\bx ^T \bU^\r \bx = || \X ^\r \bx ||^2 = \sum _{i=1}^n \left (\sum _{k=1}^K x _{g_k [i]} \right )^2=0
\]
happens if and only if $\sum _{k=1}^K x _{g_k [i]}  =0$ for each $i=1,\dots,\n$. Hence, we have the equivalence
\begin{equation}\label{eq:kernel_U}
	\bx \in Null(\bU^\r) \qquad \Leftrightarrow \qquad \sum _{k=1}^K x _{g_k [i]}  =0, \ \forall i =1, \dots, \n\,.
\end{equation}
For any $\ell = 1, \dots, K-1$, denote with $P^{(\ell, 1)}, \dots , P^{(\ell, C_\ell)}$ and $Q^{(\ell, 1)}, \dots, Q^{(\ell, C_\ell)}$ the disconnected components in the bipartite graph restricted to the pair $(\ell,\ell+1)$. In particular, $\{P^{(\ell, m)}\}_{m=1}^{C_\ell}$ are disjoint subsets of the levels in factor $\ell$, and $\{Q^{(\ell, m)}\}_{m=1}^{C_\ell}$ are disjoint subsets of the levels in factor $\ell+1$. 
For any $\ell=1, \dots, K-1$ and $m=1,\dots, C_\ell$, we define the vector $\bx ^{(\ell, m)} \in Null(\bU^\r)$ as
\[\begin{cases}
	x^{(\ell, m)}_j = 1, & \text{if j belongs to }P^{(\ell, m)}\\
	x^{(\ell, m)}_j= -1, & \text{if j belongs to }Q^{(\ell, m)}\\
	x^{(\ell, m)}_j= 0, & \text{otherwise}\,.
\end{cases}\]
The resulting vector will look like
\[
    \bx ^{(\ell, m)}= [ 0, \dots, 0, \overbrace{0,\dots, 0,\underbrace{1,\dots, 1,}_{P^{(\ell, m)}\text{ entries}}0,\dots, 0}^{\text{Factor }\ell \text{ entries}} ,\overbrace{0,\dots, 0,\underbrace{-1,\dots, -1, \ }_{Q^{(\ell, m)}\text{ entries}}0,\dots, 0}^{\text{Factor }\ell +1 \text{ entries}} ,0,\dots, 0]^T\,.
\]
Note that $\bx^{(\ell, m)}$ satisfies \eqref{eq:kernel_U} because, if an observation $i$ does not involve $P^{(\ell, m)}$, then $x^{(\ell, m)} _{g_k[i]} = 0$ for each $k$. On the other hand, if $x_{ g_{\ell}[i]}^{(\ell, m)} = 1$, then $x_{g_{\ell+1}[i]}^{(\ell, m)}= -1$ (by definition of connected components, an edge from $P^{(\ell, m)}$ can only connect to $Q^{(\ell, m)}$), while $x^{(\ell, m)}_{ g_k[i]} = 0$ for each $k\neq \ell, \ell+1$, by construction.

\textit{2. $\{ \bx ^{(\ell, m)}\} _{\ell, m}$ are linearly independent.} Consider the linear combination
\[\sum _{\ell =1}^{K-1}\sum _{m=1}^{C_\ell}\beta ^{(\ell, m)} \bx ^{(\ell, m)} = \mathbf{0}.\] 
If we consider the first $G_1$ components (i.e. we restrict to the levels of the first factor), only $\{ \bx^{(1, m)}\} _{m=1}^{C_1}$ have non-zero entries on such components. Moreover, since they are supported on the disjoint sets $(P^{(1, m)})_{m=1}^{C_1}$, it implies that $\beta  ^{(1,1)} = \dots = \beta ^{(1,C_1)}=0$. We now consider the $G_2$ components relative to the second factor. If we don't consider the previous eigenvectors, only $\{ \bx ^{(2, m)}\} _{m=1}^{C_2}$ have non-zero entries on such components. Analogously to the previous case, we can conclude $\beta^{(2, 1)}= \dots = \beta ^{(2, C_2)}=0$. We can further extend this procedure, proving that $\beta ^{(\ell, m)}=0$ for each $\ell, m$.
\end{proof}

\begin{proof}[Proof of Theorem \ref{thm:suff_cond_strong_connnectivity}]
	Consider the matrix $\bar{\bA }^\r\in \RR ^{p-1}$, Theorem \ref{thm:outlying_eigvals} applied to it implies that $\bar{\nu}_1 = \dots \bar{\nu}_{K-1} = - 1/(K-1)$ and $\bar{\nu}_{p-1}=1$. In particular, the invariant subspace associated to these eigenvalues is the $K$-dimensional space $W=\{ \bx \in\RR ^{p-1}\colon \bx \text{ is constant factor-wise}\} = Span(\bw_1 , \dots, \bw_K) $, where $\bw _k =  ( \mathbf{0}_{G_1}^T, .., \mathbb{1}_{G_k}^T, .. , \mathbf{0}_{G_K}^T)^T$. We denote its orthogonal with $\delta W = W ^{\perp} $.
With this notation, Theorem \ref{thm:suff_cond_strong_connnectivity} becomes a straightforward consequence of Lemma \ref{lemma:interlace} and \ref{ref:lemma_lambda_star} below.
\end{proof}

\begin{lemma}\label{ref:lemma_lambda_star}
	Under assumption \eqref{eq:suff_pairwise_cond}, it holds that:
	\begin{equation}
		\max (|\bar{\nu}_{K}|, \bar{\nu}_{p-2}) = \max _{\bx \in \delta W} \dfrac{||\bar{\bA}^\r\bx||}{||\bx||}< \dfrac{\lambda^* }{K-1}\,,
	\end{equation}
	where $||\bx||$ denotes the standard Euclidean norm.
\end{lemma}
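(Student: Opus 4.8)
The plan is to treat the equality and the inequality in \eqref{ref:lemma_lambda_star} separately. For the equality $\max(|\bar{\nu}_K|,\bar{\nu}_{p-2})=\max_{\bx\in\delta W}\|\bar{\bA}^{\rr}\bx\|/\|\bx\|$, I would invoke the spectral theorem: $\bar{\bA}^{\rr}$ is symmetric and, by Corollary~\ref{thm:spec_A}, the eigenvalues $\bar{\nu}_1=\dots=\bar{\nu}_{K-1}=-(K-1)^{-1}$ and $\bar{\nu}_{p-1}=1$ are exactly those whose eigenvectors span $W$. Hence $\delta W=W^{\perp}$ is also $\bar{\bA}^{\rr}$-invariant, and the restriction $\bar{\bA}^{\rr}|_{\delta W}$ has spectrum $\{\bar{\nu}_K,\dots,\bar{\nu}_{p-2}\}$; the operator norm of a symmetric operator on an invariant subspace equals the largest modulus of its eigenvalues, i.e.\ $\max(|\bar{\nu}_K|,|\bar{\nu}_{p-2}|)$. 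Since $\mathrm{tr}(\bar{\bA}^{\rr})=0$ and the trivial eigenvalues already sum to $0$, the $p-1-K$ middle eigenvalues sum to $0$, so $\bar{\nu}_{p-2}\ge 0$ and the two expressions agree.

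For the inequality the engine is the decomposition $\bar{\bA}^{\rr}=\tfrac{1}{K-1}\sum_{k<h}\widetilde{\bar{\bA}}^{(k,h)}$, where $\widetilde{\bar{\bA}}^{(k,h)}$ is $\bar{\bA}^{(k,h)}$ padded with zeros outside the blocks of factors $k$ and $h$; this holds because $\bA^{\rr}$ is the blockwise sum of the pairwise adjacency matrices, $\bD^{\rr}=(K-1)\,Diag(\bU^{\rr})$, and the degree matrix $\bM=Diag(\bA^{(k,h)}\mathbb{1})$ coincides with $Diag(\bU^{\rr})$ on the blocks of factors $k,h$. Writing $\bx_{(k,h)}\in\RR^{G_k+G_h}$ for the restriction of $\bx$ to those two blocks, I would then show that $\bx\in\delta W$ forces $\bx_{(k,h)}$ to be orthogonal to the $\{\pm1\}$-eigenspace of $\bar{\bA}^{(k,h)}$: $W$ is the (degree-weighted) span of the factor-indicator vectors, these are genuine eigenvectors of $\bar{\bA}^{\rr}$, and the $+1$- and $-1$-eigenvectors of $\bar{\bA}^{(k,h)}$ are exactly the factor-$k$ and factor-$h$ restrictions of those same vectors, so orthogonality of $\bx$ to $W$ descends to the pair. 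Assumption \eqref{eq:suff_pairwise_cond} then gives $|\bx_{(k,h)}^{\top}\bar{\bA}^{(k,h)}\bx_{(k,h)}|\le \tfrac{\lambda^{*}}{\sqrt{K-1}}\|\bx_{(k,h)}\|^{2}$, because $\lambda^{*}/\sqrt{K-1}$ dominates the operator norm of each $\bar{\bA}^{(k,h)}$ restricted to the orthogonal complement of its trivial eigenspace. Plugging this into $\bx^{\top}\bar{\bA}^{\rr}\bx=\tfrac{1}{K-1}\sum_{k<h}\bx_{(k,h)}^{\top}\bar{\bA}^{(k,h)}\bx_{(k,h)}$ and using the counting identity $\sum_{k<h}\|\bx_{(k,h)}\|^{2}=\sum_{k<h}(\|\bx^{(k)}\|^{2}+\|\bx^{(h)}\|^{2})=(K-1)\|\bx\|^{2}$ (each block lies in $K-1$ pairs) bounds $|\bx^{\top}\bar{\bA}^{\rr}\bx|/\|\bx\|^{2}$, which by the first part controls $\max(|\bar{\nu}_K|,\bar{\nu}_{p-2})$; strictness follows from the strict inequality in \eqref{eq:suff_pairwise_cond} and the finiteness of the set of pairs.

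The delicate part is the bookkeeping of the three separate powers of $K-1$ that enter — the $1/(K-1)$ normalization of $\bar{\bA}^{\rr}$, the $\sqrt{K-1}$ built into the definition of $\lambda^{*}$ relative to the per-pair operator norm on the nontrivial eigenspace, and the multiplicity $K-1$ produced by the summation identity — which must be collected correctly to reach the stated bound. One must also be careful to work with the true invariant subspace $W$ (the degree-weighted span of the factor indicators, i.e.\ the actual eigenspace of $\bar{\bA}^{\rr}$ for the trivial eigenvalues), rather than the raw span of $\mathbb{1}_{G_k}$'s, so that $\delta W$ is genuinely $\bar{\bA}^{\rr}$-invariant and the pairwise orthogonality holds exactly. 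A minor separate point is the degenerate regime in which every $\bar{\bA}^{(k,h)}$ has all nontrivial eigenvalues equal to $0$, where $\lambda^{*}=0$ and the bound should be read with equality at $0$.
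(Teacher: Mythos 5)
Your strategy is the same as the paper's---split $\bar{\bA}^{(r)}$ into its pairwise blocks, transfer assumption \eqref{eq:suff_pairwise_cond} to each block, and recombine using the fact that each factor appears in $K-1$ pairs---and your treatment of the equality (invariance of $\delta W$, the trace argument giving $\bar{\nu}_{p-2}\geq 0$, and the caveat that the trivial eigenspace is the degree-weighted span of the factor indicators rather than $W$ itself) is, if anything, more careful than the paper's. The problem is that the bookkeeping you defer to does not close. Collecting your three powers of $K-1$ gives
\begin{equation*}
  \bigl|\bx^{T}\bar{\bA}^{(r)}\bx\bigr|
  \;\leq\; \frac{1}{K-1}\cdot\frac{\lambda^{*}}{\sqrt{K-1}}\sum_{k<h}\bigl\lVert\bx_{(k,h)}\bigr\rVert^{2}
  \;=\; \frac{1}{K-1}\cdot\frac{\lambda^{*}}{\sqrt{K-1}}\cdot(K-1)\,\lVert\bx\rVert^{2}
  \;=\; \frac{\lambda^{*}}{\sqrt{K-1}}\,\lVert\bx\rVert^{2},
\end{equation*}
i.e.\ the bound $\lambda^{*}/\sqrt{K-1}$, which for $K>2$ is weaker than the stated $\lambda^{*}/(K-1)$ by a factor of $\sqrt{K-1}$. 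There is no slack elsewhere in your argument from which to recover that factor: the Rayleigh-quotient decomposition is an identity, the per-pair bound is tight on the orthogonal complement of the trivial eigenspace, and the counting identity is exact; the estimate is saturated whenever the $\binom{K}{2}$ pairwise blocks act coherently on a common test vector, so no argument based purely on the pairwise spectra can do better.

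For what it is worth, the paper's own proof reaches $\lambda^{*}/(K-1)$ only through the step $\lVert\sum_{h\neq k}\bar{\bA}^{(r)}[k,h]\bx[h]\rVert^{2}\leq\sum_{h\neq k}\lVert\bar{\bA}^{(r)}[k,h]\bx[h]\rVert^{2}$, which for non-orthogonal summands omits the Cauchy--Schwarz factor $K-1$; restoring it lands the paper's computation on the same $\lambda^{*}/\sqrt{K-1}$ as yours. So your route does not prove the lemma with the constant as stated, but it does prove the corrected version with $\lambda^{*}/\sqrt{K-1}$ on the right-hand side (equivalently, the stated bound with $\lambda^{*}$ redefined using a factor $K-1$ in place of $\sqrt{K-1}$ in \eqref{eq:suff_pairwise_cond}); this coincides with the stated bound when $K=2$ and still yields Theorem \ref{thm:suff_cond_strong_connnectivity} with $\lambda^{*}$ replaced by $\sqrt{K-1}\,\lambda^{*}$. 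Your remaining caveats (strictness in the degenerate case $\lambda^{*}=0$, and working with the degree-weighted invariant subspace so that orthogonality genuinely descends to each pair) are correct and worth keeping.
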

\begin{proof}
	Denote for simplicity $\eta = \frac{\lambda^*}{\sqrt{K-1}}$. Denote instead with $\bar{\bA }^\r [k,h] \in \RR ^{G_k\times G_h}$, the block of $\bar{\bA}^\r$ relative to the pair of factors $(k,h)$. It holds that:
	\[
		\bar{\bA}^{(k,h)} = (K-1)\cdot \begin{pmatrix}
			\bigzero & \bar{\bA }^\r [k,h]\\
			\bar{\bA }^\r [h,k] & \bigzero
		\end{pmatrix}\,,
  \]
	where the multiplicative factor $K-1$ comes from the fact that $\bar{\bA}^\r$ is made of $K$ cliques, hence each edge in the clique is connected to $K-1$ levels in each other factor; however when restricting to the pair $(k,h)$, it's connected to only one level. Moreover, by properties of bipartite adjacency matrix \citep{brito2022}, it holds that
	\[
		\max _{\lambda \in \sigma\left (\bar{\bA}^{(k,h)}\right ) : |\lambda | \neq 1} |\lambda| = (K-1)	\max _{\by \in \RR ^{G_h} :\ \by ^T \mathbb{1}_{G_h} = 0} \dfrac{||\bar{\bA }^\r [k,h]\by||}{||\by||}\,.
	\]
	Hence, \eqref{eq:suff_pairwise_cond} is equivalent to
	\[
			\max _{\by \in \RR ^{G_h} :\ \by ^T \mathbb{1}_{G_h} = 0} \dfrac{||\bar{\bA }^\r [k,h]\by||}{||\by||} < \frac{\eta }{K-1}\,.
	\]	
	For a given vector $\bx \in \delta W$, denote with $\bx [k]\in \RR ^{G_k}$ the vector restricted to the factor $k$; by definition of $\delta W$, it holds that $\bx [k] ^T \mathbb{1}_{G_k} =0$. Finally, taking $\bx \in \delta W$, we have
	\[\begin{aligned}
		||\bar{\bA}^\r \bx ||^2 & = \sum _{k=1}^K \left\lVert\sum _{h\neq k} \bar{\bA}^\r[k, h] \bx [h] \right\rVert^2\\
	&  	\leq \sum _{k=1}^K \sum _{h\neq k}\left\lVert \bar{\bA}^\r[k, h] \bx [h] \right\rVert^2\\
	&  	\leq  \dfrac{\eta ^2}{(K-1)^2}\underbrace{\sum _{k=1}^K \sum _{h\neq k} \left\lVert  \bx [h] \right\rVert^2}\\
	&= \dfrac{\eta ^2}{(K-1)^2} (K-1 ) ||\bx||^2\\
	&= \dfrac{\eta ^2}{K-1} ||\bx||^2\\
	&= \dfrac{(\lambda^{*})^2||\bx||^2}{(K-1)^2}\,.
	\end{aligned}\]
\end{proof}

%
%

\section{Supplementary material for Section \ref{sec:CG_theory}}\label{sec:suppl_sec_cg}
\subsection{Including multivariate fixed effects}\label{sec:suppl_mult_fixed}
Lemma \ref{lemma:interlace} provides an upper bound on $\kappa_{K+1, p-2}(\bar{\bQ})$, by interlacing the spectrum of $\bar{\bQ}$ with the one of $\bar{\bQ}^\r$ (see proof of the lemma). A similar result can also be obtained for $\bth _0 \in \RR ^{D_0}$, with $D_0 >1$, i.e.\ when we include multivariate fixed effects. In which case, the upper bound would become
\[
\kappa_{K + D_0 , p - D_0 - 1}(\bar{\bQ}) \leq \dfrac{1 + (K-1) \bar{\nu}_{p - D_0 - 1}}{1 + (K-1) \bar{\nu}_{K}}\,,
\]
where $\bar{\nu}_{K}$ and $\bar{\nu}_{p - D_0 - 1}$ would be respectively the $K$-th smallest and second-largest eigenvalues of $\bar{\bA}^\r \in \RR ^{(p-D_0)\times (p-D_0)}$. Provided that $\bth_0$ is low-dimensional, the above inequality would give an upper bound on the effective condition number after the removal of few extreme eigenvalues.
\subsection{Conjugate Gradient with Incomplete Cholesky preconditioner}
In this section, we explore alternative preconditioning techniques to lower the number of CG iterations. In particular, we will refer to the Incomplete Cholesky (IC) preconditioner \citep[Ch. 10]{book:saad2003}. IC algorithm computes
an approximate factorization $\bL \bL^T \approx \bQ$, by restraining the support of $\bL$ to a subset $S$, which can be chosen a priori or on the run (see Algorithm \ref{alg:incomplete_cholesky}).
A standard choice for $S$ is the lower-triangular support of $\bQ$ itself. Algorithm \ref{alg:incomplete_cholesky} with such choice of $S$ is also referred to as \textit{zero fill-in incomplete Cholesky}, which is a widely used CG preconditioner.

An important limitation of Algorithm \ref{alg:incomplete_cholesky} is that it may fail for general positive definite matrices. Indeed, by restraining the support of $\bL$, it may happen that $Q_{jj} - \sum_{k\colon (j,k)\in S} L_{jk}^2<0$. Several modifications have been proposed to prevent this issue, usually by scaling or shifting $\bQ$. Nevertheless, in the following numerical examples, we obtained the best results (in terms of rate of convergence of CG) by taking the absolute value of the previous quantity, hence forcing the continuation of Algorithm \ref{alg:incomplete_cholesky}.

A lot of work has been done to refine the choice of $S$ in order to improve the quality of the approximation, while preserving the computational efficiency of Algorithm \ref{alg:incomplete_cholesky}. For example, \cite{lin99} proposed a variation of the IC algorithm that selects $S$ depending on the numerical entries of the matrix and satisfies $|S| \leq p \cdot n_\bQ$. Here, the choice of $p$ allows controlling the density of the factorization, obtaining a more accurate solution at a price of additional memory and computational time. The results obtained with such preconditioner were comparable to the ones obtained with the standard IC preconditioner, hence we did not include them in this article.

Table \ref{tbl:jacobi_vs_ichol} reports the comparison between the Jacobi (see Section \ref{sec:jacobi}) and the Incomplete Cholesky preconditioner. For both the American political elections survey data and the Instructor Evaluations data set, we display the average number of CG iterations (after an initial burn-in), obtained with Jacobi and IC preconditioner, respectively. While the IC preconditioned CG requires a smaller number of iterations, such reduction is not significant (at most a factor of $4$) but comes at the additional price, in terms of time and memory complexity, of computing and inverting $\bL$. Finally, the overall cost of the two algorithms is basically equivalent in the case we have considered. However, the Jacobi preconditioning is easier to implement and does not incur in the problems of Algorithm \ref{alg:incomplete_cholesky}.

\begin{table}[h!]
\centering
\begin{tabular}{|l||c|c||c|c|}
\hline
    \multirow{2}{4cm}{\textbf{Case}}& \multicolumn{2}{|c||}{Voter Turnout} & \multicolumn{2}{|c|}{Students Evaluations} \\
  \cline{2-5}
  & \textbf{Jacobi} & \textbf{IChol} & \textbf{Jacobi} & \textbf{IChol} \\
  \hline\hline
  Random intercepts & 36 & 19 & 35 & 16 \\\hline
  Nested effect & 53 & 31 & 94 & 37 \\\hline
  Random slopes & 70 & 27 & 150 & 17 \\\hline
  2 way interactions & 338 & 104 & 121 & 53 \\\hline
  3 way interactions & 445 & 108 & 126 & 54 \\\hline
  Everything & 517 & 140 & 262 & 67 \\\hline
\end{tabular}
\caption{Average number of CG iterations with Jacobi and Incomplete Cholesky preconditioning for the data sets considered in Section \ref{sec:numerics}. We considered $N=70\, 000$ observations.}
\label{tbl:jacobi_vs_ichol}
\end{table}

\subsection{Erd\H{o}s-R\'enyi random bipartite graph}\label{sec:bipartite_ER}

In this section, we extend Theorem \ref{thm:bipartite_biregular} to the with $K=2$ factors case and design given by  an Erd\H{o}s-R\'enyi bipartite random graph (i.e.\ each edge is missing independently at random with probability $1-\pi$). The resulting bound on $\kappa _{3, p-2}(\bar{\bQ})$ is very similar to the one of Theorem \ref{thm:bipartite_biregular}, however, in this case, we need an asymptotic lower bound on the expected degree $G_1\pi$ (and $G_2\pi$) to guarantee sufficient connectivity in the limit.
\begin{theorem}\label{thm:bipartite_ER}
	Let $\bQ = Diag (\bQ) + \tau \bA$ be the posterior precision matrix in \eqref{eq:crossed_posterior_prec}.
	Let $\bA ^\r$ be the adjacency matrix of an Erd\H{o}s-R\'enyi  bipartite random graph. Let $G_2 \geq G_1 \geq 1$ and $\pi = \pi (G_1)$ be such that
	\[
		G_1 \pi = \Omega (\sqrt{G_1\pi}	\log ^3 (G_1)), \quad G_2 \pi = \Omega (\sqrt{G_2\pi}	\log ^3 (G_2))\,.
	\]
	Then, with probability tending to $1$ as $G_1$ tends to infinity
	\[
		\kappa _{3, p-2}(\bar{\bQ}) =\dfrac{\bar{\mu} _{p-2}}{ \bar{\mu}_3} \leq \dfrac{1 + \epsilon}{1 - \epsilon}\,,	
	\]
	with
	$$\epsilon= 2[2 + o(1)] \left (\dfrac{1}{\sqrt{G_1 \pi}} + \dfrac{1}{\sqrt{G_2 \pi} }+ \sqrt{\dfrac{1}{G_1\pi}+\dfrac{1}{G_2\pi}}\right ), $$
	as $G_1 \to \infty $.
\end{theorem}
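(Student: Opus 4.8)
The plan is to reduce the claimed bound on the effective condition number to a single spectral estimate and then apply sharp concentration for the biadjacency matrix of the random bipartite graph. Since $K=2$, Corollary~\ref{thm:bipartite_bound} reduces the problem to showing that, asymptotically almost surely, $\bar{\nu}_{p-2}\le\epsilon$; because the density hypotheses $G_j\pi=\Omega(\sqrt{G_j\pi}\log^3 G_j)$ force $\epsilon\to 0$, eventually $1-\epsilon>0$ and $(1+\bar{\nu}_{p-2})/(1-\bar{\nu}_{p-2})\le(1+\epsilon)/(1-\epsilon)$. Let $\bB\in\{0,1\}^{G_1\times G_2}$ be the off-diagonal block of $\bA^\rr$, with i.i.d.\ $\mathrm{Bernoulli}(\pi)$ entries, put $\bD_1=\mathrm{Diag}(\bB\mathbb{1}_{G_2})$, $\bD_2=\mathrm{Diag}(\bB^{T}\mathbb{1}_{G_1})$, and $\bar{\bB}=\bD_1^{-1/2}\bB\bD_2^{-1/2}$. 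Then $\bar{\bA}^\rr$ is block-antidiagonal with blocks $\bar{\bB}$ and $\bar{\bB}^{T}$, so its nonzero eigenvalues are exactly $\pm$ the singular values of $\bar{\bB}$. The top singular triple of $\bar{\bB}$ is explicit — $\bar{\bB}(\bD_2^{1/2}\mathbb{1}_{G_2})=\bD_1^{1/2}\mathbb{1}_{G_1}$, $\bar{\bB}^{T}(\bD_1^{1/2}\mathbb{1}_{G_1})=\bD_2^{1/2}\mathbb{1}_{G_2}$, and $\|\bar{\bB}\|=1$ by Corollary~\ref{thm:spec_A} — hence $\bar{\nu}_{p-1}=1$ and, crucially, $\bar{\nu}_{p-2}=\sigma_2(\bar{\bB})$. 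Everything thus comes down to bounding the second singular value of the normalised biadjacency matrix of an Erd\H{o}s--R\'enyi bipartite graph.

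Next I would strip off the random degree normalisation. By Chernoff/Bernstein bounds for the binomial degrees and a union bound over the $G_1+G_2$ vertices, where the hypotheses ($G_j\pi\gg\log^6 G_j$) give ample room, with probability tending to $1$ one has $\bD_1=\bar d_1(\bI+\bE_1)$, $\bD_2=\bar d_2(\bI+\bE_2)$ with $\bar d_1=G_2\pi$, $\bar d_2=G_1\pi$, $\bE_j$ diagonal and $\|\bE_j\|=o(1)$. The elementary inequality $\sigma_2(\bX\bB\bY)\le\|\bX\|\,\|\bY\|\,\sigma_2(\bB)$ applied with $\bX=(\bI+\bE_1)^{-1/2}$, $\bY=(\bI+\bE_2)^{-1/2}$ then gives $\sigma_2(\bar{\bB})\le(1+o(1))(G_1G_2\pi^2)^{-1/2}\,\sigma_2(\bB)$. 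It remains to control $\sigma_2(\bB)$ for $\bB=\pi\mathbb{1}_{G_1}\mathbb{1}_{G_2}^{T}+\bN$, where $\bN=\bB-\mathbb{E}\bB$ has independent, mean-zero, $[0,1]$-bounded entries of variance $\pi(1-\pi)$. The crude bound $\sigma_2(\bB)\le\|\bN\|$ together with a sharp operator-norm estimate $\|\bN\|\le(2+o(1))(\sqrt{G_1\pi}+\sqrt{G_2\pi})$ already yields the $1/\sqrt{G_1\pi}+1/\sqrt{G_2\pi}$ part of $\epsilon$; the remaining term $\sqrt{1/(G_1\pi)+1/(G_2\pi)}$ is either absorbed as slack (it is at most $1/\sqrt{G_1\pi}+1/\sqrt{G_2\pi}$) or, if one wants a tighter argument, it arises from passing to $\bB\bB^{T}$ (and symmetrically $\bB^{T}\bB$), whose expectation $G_2\pi(1-\pi)\bI+G_2\pi^2\mathbb{1}\mathbb{1}^{T}$ is a rank-one perturbation of a multiple of the identity, so that a deterministic contribution of order $(1-\pi)\bar d_j$ survives deflation of the top singular triple and, after renormalisation, produces exactly that term; the cross terms $\pi\mathbb{1}\mathbb{1}^{T}\bN^{T}$ and $\bN\bN^{T}-\mathbb{E}[\bN\bN^{T}]$ are again controlled by the same operator-norm bound. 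Combining, $\sigma_2(\bar{\bB})\le\epsilon$ asymptotically almost surely, and Corollary~\ref{thm:bipartite_bound} closes the argument.

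The main obstacle is the operator-norm input: proving $\|\bN\|\le(2+o(1))(\sqrt{G_1\pi}+\sqrt{G_2\pi})$ \emph{without a spurious $\sqrt{\log}$ factor in the bound}. A naive matrix Bernstein / non-commutative Khintchine estimate loses a logarithm, which is precisely why some lower bound on $G_j\pi$ relative to a power of $\log G_j$ is unavoidable; the $\log^3$ in the hypothesis is the margin consumed by a truncation/$\varepsilon$-net argument of Feige--Ofek type (or by quoting an off-the-shelf sharp bound for rectangular matrices with independent entries). The accompanying bookkeeping — ensuring that the degree-normalisation error $o(1)$, the cross terms, and the non-sharp constants all land inside the prefactor $2[2+o(1)]$ of $\epsilon$ rather than degrading the rate — is routine but is where the explicit constants in the statement are pinned down.
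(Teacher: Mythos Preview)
Your reduction is exactly the paper's: invoke Corollary~\ref{thm:bipartite_bound} to reduce the claim to $\bar\nu_{p-2}\le\epsilon$, then bound the second eigenvalue of the normalised bipartite adjacency matrix. The paper's proof is essentially a one-liner that outsources this spectral estimate to an external reference on the spectrum of Erd\H{o}s--R\'enyi bipartite graphs (mirroring how Theorem~\ref{thm:bipartite_biregular} quotes \cite{brito2022}); you instead sketch a direct derivation via degree concentration, deflation of the rank-one mean $\pi\mathbb{1}\mathbb{1}^{T}$, and a Feige--Ofek-type operator-norm bound on the centred biadjacency matrix $\bN$.

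Your sketch is sound in outline and correctly isolates the one genuinely nontrivial ingredient --- the sharp bound $\|\bN\|\le(C+o(1))(\sqrt{G_1\pi}+\sqrt{G_2\pi})$ without a $\sqrt{\log}$ loss, which is precisely what consumes the $\log^3$ margin in the hypothesis. The only loose end is matching the exact form of $\epsilon$: your crude route $\sigma_2(\bar\bB)\le(1+o(1))(G_1G_2\pi^2)^{-1/2}\|\bN\|$ does not by itself reproduce the third summand $\sqrt{1/(G_1\pi)+1/(G_2\pi)}$ or the specific prefactor $2[2+o(1)]$; these presumably come out of the cited result's internal bookkeeping, and you have (correctly) flagged but not carried out that step. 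Since the paper itself does not do so either, this is not a gap relative to the paper's proof.
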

\begin{proof}
	The proof of this result follows the same steps of the proof of Theorem \ref{thm:bipartite_biregular}, with the additional result from \cite{spectrum_ER_bipartite_graphs}.
\end{proof}

\subsection{Pairwise connected components is not sufficient}
\begin{example}\label{ex:count_ex_pairwise_connectivity}
	Consider a simple crossed random effects model with $K=3$ factors with size $G_1 = G_2 = G_3 = 2$. Let $N=3 $, and let the observed levels be
\[
\begin{aligned}
	&g_1[1] = 1, &g_2[1] = 1, && g_3[1] = 2 ,\\
	&g_1 [2] = 2, & g_2[2] = 2, && g_3 [2] = 2, \\
	&g_1[3] = 2, & g_2[3] = 1, && g_3[3] = 1.
\end{aligned}			
\]
The corresponding $\bU^\r$ will be
\[\bU ^\r = \begin{bmatrix}
	1 & 0 & 1 & 0 & 0 & 1\\
	0 & 2 & 1 & 1 & 1 & 1\\
	1 & 1 & 2 & 0 & 1 & 1\\
	0 & 1 & 0 & 1 & 0 & 1\\
	1 & 1 & 1 & 1 & 0 & 2
\end{bmatrix}\]

Notice that $dim(Null(\bU^\r)) = 3$ and that the following vectors generate the null-space of $\bU^\r$:
\[
\bx _1  = \begin{bmatrix}
1\\ 1 \\ - 1 \\ -1 \\ 0 \\ 0
\end{bmatrix}, \qquad
\bx _2  = \begin{bmatrix}
1\\ 1 \\ 0 \\ 0 \\ -1 \\ -1
\end{bmatrix}, \qquad
\bx _3  = \begin{bmatrix}
1\\ 0 \\ - 1/2 \\ 1/2 \\ 1/2 \\ -1/2
\end{bmatrix}\,.
\]
In particular, the first two vectors are the directions identified by Theorem \ref{thm:outlying_eigvals}, while the third one is specific to the symmetry of this particular example. By construction also $\bar{\bU} ^\r$ has a null-space of dimension $3$. Finally, because of the relationship in \eqref{eq:spect_A}, the corresponding eigenvalues of $\bar{\bA}^\r $ will satisfy $\bar{\nu}_1 = \bar{\nu}_2 = \bar{\nu}_3 = -1/2$.

To conclude, this example provides a design where the bipartite subgraph of any pair of factors is made of a unique connected component, but $\bar{\nu}_K = - \frac{1}{K-1}$. 
\end{example}

\section{Running times of experiments in Section \ref{sec:numerics}}

We compare the running times of the Cholesky solver and the Conjugate Gradient solver for solving linear systems involving symmetric positive definite matrices. The results are obtained using a laptop equipped with a 12th Gen Intel(R) Core(TM) i7-1255U @ 1.70 GHz CPU and 32 GB of RAM.
The solvers are implemented in Julia. The Cholesky decomposition is obtained via the built-in \texttt{cholesky} function from the \texttt{LinearAlgebra} module, which uses approximate minimum degree (AMD) ordering by default \citep{amestoy_davis_AMD}.
The CG solver uses the \texttt{cg} function from the \texttt{IterativeSolvers.jl} package.

We remark that several implementation-level detail influence the observed running times. E.g.\ multi-dimensional arrays in Julia are stored in column-major order, which favors solvers like Cholesky, and can be suboptimal for iterative solvers like CG, which involve repeated sparse matrix-vector multiplications.

The results would differ substantially if the same experiment were replicated in Python, using \texttt{SciPy} sparse matrix routines.
For example, the Cholesky decomposition for the student evaluation data with interactions requires several minutes to be computed, while CG takes less than a second to run until convergence.

\begin{table}[h!]

\begin{tabular}{|l|c|c|c|c|}
\hline
  \multirow{2}{3.4cm}{\textbf{Case}}& \multicolumn{2}{|c|}{Voter Turnout} & \multicolumn{2}{|c|}{Students Evaluations} \\
  \cline{2-5}
  & \textbf{Real} & \textbf{Simulated} & \textbf{Real} & \textbf{Simulated} \\
  \hline\hline
  \multirow{2}{3.4cm}{Random intercepts} & 3.14 (0.05 ms) & 3.80 (0.04 ms) & 3.64 (1.17 ms) & 16.77 (1.23 ms) \\
   & 2.67 (0.23 ms) & 3.32 (0.17 ms) & 8.62 (15.30 ms) & 17.93 (7.36 ms) \\\hline
  \multirow{2}{3.4cm}{Nested effect} & 2.99 (0.25 ms) & 5.06 (0.15 ms) & 2.02 (7.21 ms) & 32.42 (2.10 ms) \\
   & 2.03 (0.31 ms) & 3.12 (0.21 ms) & 3.11 (41.58 ms) & 11.53 (12.93 ms) \\\hline
  \multirow{2}{3.4cm}{Random slopes} & 2.27 (0.54 ms) & 2.89 (0.43 ms) & 1.22 (3.62 ms) & 12.15 (11.56 ms) \\
   & 1.36 (0.76 ms) & 2.30 (0.49 ms) & 2.33 (79.07 ms) & 14.37 (21.92 ms) \\\hline
  \multirow{2}{3.4cm}{2 way interactions} & 1.10 (10.25 ms) & 0.89 (9.31 ms) & 0.84 (0.11 s) & 3.40 (55.93 ms) \\
   & 0.33 (30.46 ms) & 0.57 (18.32 ms) & 1.71 (0.82 s) & 19.79 (0.31 s) \\\hline
  \multirow{2}{3.4cm}{3 way interactions} & 0.91 (58.46 ms) & 1.43 (60.94 ms) & 0.78 (0.14 s) & 2.53 (75.93 ms) \\
   & 0.72 (80.20 ms) & 0.52 (0.14 s) & 1.23 (1.34 s) & 12.88 (0.49 s) \\\hline
  \multirow{2}{3.4cm}{Full} & 0.74 (76.03 ms) & 1.04 (92.88 ms) & 0.45 (0.28 s) & 2.61 (0.14 s) \\
   & 0.52 (0.11 s) & 0.75 (98.51 ms) & 0.61 (3.71 s) & 9.80 (0.95 s) \\
   \hline
\end{tabular}
\caption{In parentheses, average time needed for solving \eqref{eq:cg_sampler} with CG. Outside the parentheses, we report the ratio between the average time of Cholesky solver and CG solver. We consider the same setting as in Table \ref{tbl:real_data_summary} of the main text.}
\end{table}

\begin{table}[h!]
\centering 

    \begin{tabular}{|c|c|c|c|c|}
\hline
  $\n$ & $p$ & Time(CG) & Time(Chol) & Time(Chol)/Time(CG)   \\
  \hline \hline
  250\ 000 & 25\ 726 & 0.31 s & 3.38 s & 11.0 \\
  25\ 000\ 000 & 221\ 589 & 5.85 s & 2 m 43 s & 28.0 \\
  \hline
\end{tabular}
\caption{Average number of CG iterations for MovieLens dataset with random intercepts only model. The setting is the same as in Table \ref{tbl:large_data_summary} of the main text.}
\end{table}

\section{Pseudocode of the algorithms}\label{sec:suppl_pseudocode}
\subsection{Cholesky factorization}\label{sec:cholesky_supplement}
Algorithm \ref{alg:cholesky} shows a pseudocode of the algorithm that allows computing the Cholesky factorization of a given positive-definite matrix $\bQ\in \RR ^{p\times p}$. A simple modification of Algorithm \ref{alg:cholesky} allows computing an approximate factorization $\hat{\bL}$, by restricting the support of $\bL$ to a given subset $S \subseteq \{ (i, j) \colon i \geq j,\ i = 1,\dots, p,\ j = 1, \dots, p\}$. The resulting algorithm is called incomplete Cholesky factorization \citep{book:golub2013} and it is outlined in Algorithm \ref{alg:incomplete_cholesky}.

\begin{minipage}{0.48\textwidth}
\begin{algorithm}[H]
\caption{Cholesky factorization.} \label{alg:cholesky}
			\textbf{Input: }$\bQ \in \RR^{p \times p}$ symmetric and positive definite\\
			\textbf{Output:} $\bL\in \RR^{p \times p}$ lower triangular\ s.t.\ $ \bQ = \bL \bL^T$\\
			\begin{algorithmic}[1]
				\For{$j \in \{ 1, \dots, p \}$}
				\State $L_{jj} \gets \sqrt{Q_{jj} - \sum_{k\colon (j,k)\in S} L_{jk}^2}$ 
					\For{$i \in \{ j + 1, \dots, N \}$ }
						\State $L_{ij}\gets Q_{ij}/L_{jj}$
						\For{$ k \in \{ 1, \dots, j-1 \}$ }
							\State $L_{ij} \gets L _{ij}- \left (L_{ik}\cdot L_{jk}/L_{jj}\right )$
						\EndFor
					\EndFor
				\EndFor \\
				\Return $L$\;
	\end{algorithmic}
\end{algorithm}
\end{minipage}
\begin{minipage}{0.48\textwidth}
\begin{algorithm}[H]
\caption{Incomplete Cholesky} \label{alg:incomplete_cholesky}
			\textbf{Input: }$\bQ \in \RR^{p \times p}$ symmetric, \textcolor{red}{S} sparsity set\\
			\textbf{Output:} $\bL\in \RR^{p \times p}$ lower triangular\ with $ Supp(\bL) \subset S$\\
			\begin{algorithmic}[1]
				\For{$j \in \{ 1, \dots, p \}$}
				\State $L_{jj} \gets \sqrt{Q_{jj} - \sum_{k\colon (j,k)\in S} L_{jk}^2}$ 
					\For{$i >j$ \textcolor{red}{: $(i, j) \in S$}}
						\State $L_{ij}\gets Q_{ij}/L_{jj}$
						\For{$ k < j$ \textcolor{red}{: $( i, k ), ( j, k ) \in S$}}
							\State $L_{ij} \gets L _{ij}- \left (L_{ik}\cdot L_{jk}/L_{jj}\right )$
						\EndFor
					\EndFor
				\EndFor \\
				\Return $L$\;
	\end{algorithmic}
\end{algorithm}
\end{minipage}
\subsection{Sampling with Cholesky factorization}
For a given Cholesky factor $\bL$, sampling from a Gaussian distribution with precision matrix $\bQ  = \bL \bL ^T$ is straightforward. E.g.\ one can obtain a sample $\bth \sim \Nor (\bQ ^{-1} \bm, \bQ ^{-1})$, by first sampling $\bz \sim \Nor (\mathbf{0},\bI _p)$, and  then solving the following two linear system
\[
    \bL \bw = \bm\,,\qquad 
    \bL ^T \bth = \bw + \bz\,.
\]
Specifically, one needs to solve a lower and an upper-triangular linear system respectively, which can be done with $\bigo{n_\bL}$ cost via forward and backward substitution.

\subsection{Conjugate gradient}
For the conjugate gradient algorithm, we refer to the formulation of \cite{book:saad2003}. Namely, we refer to Algorithm 6.17 and to Algorithm 9.1, for the preconditioned version. For simplicity, we report them in Algorithm \ref{alg:cg_alg} and \ref{alg:prec_cg_alg} respectively.

If we only consider vectorial operations, Algorithm \ref{alg:cg_alg} requires the computation of three scalar products, three linear combinations of vectors and a matrix-vector multiplication, accounting for a total of $4p + 2n_\bQ$ flops. The preconditioned variant, only requires the additional cost of evaluating $\bM ^{-1}\bz$, which, for Jacobi preconditioning, accounts for $p$ flops.

\begin{minipage}{0.48\textwidth}
\begin{algorithm}[H]
\caption{Conjugate Gradient } \label{alg:cg_alg}
			\textbf{Input: }$\bQ \in \RR^{p \times p}$ symmetric positive definite, $\bb \in \RR ^p$, $\epsilon >0$ desired accuracy.\\
			\textbf{Output:} Approx. solution of $\bQ \bx = \bb$.\\
			\begin{algorithmic}[1]
				\State $\bx_0 = \mathbf{0}$, $\br_0 = \bb$, $\bp _0 = \br _0$.
				\For{$j = 0, 1, \dots$ until $||\br_j||_2 <\epsilon$,}
					\State $\alpha_{j}  = \dfrac{\br _{j}^T \br _{j}}{\bp_j ^T \bQ \bp_j}$
					\State $\bx_{j+1} = \bx_j + \alpha _j \bp _j$
					\State $\br _{j+1} = \br _{j} - \alpha _{j} \bQ \bp _{j}$
					\State $\beta _j  = \dfrac{\br _{j+1} ^T \br _{j+1} }{\br _{j}^T \br _{j}}$
					\State $\bp _{j+1}  = \br _{j+1} + \beta _j \bp _{j}$
				\EndFor \\
				\Return $\bx _{j+1}$\;
	\end{algorithmic}
\end{algorithm} 
\vspace{.05cm}
\end{minipage}
\begin{minipage}{0.5\textwidth}
\begin{algorithm}[H]
\caption{Preconditioned CG } \label{alg:prec_cg_alg}
			\textbf{Input: }$\bQ \in \RR^{p \times p}$ symmetric positive definite, $\bb \in \RR ^p$, $\epsilon >0$, \red{$\bM$ preconditioner}.\\
			\textbf{Output:} Approx. solution of $\bQ \bx = \bb$.\\
			\begin{algorithmic}[1]
				\State $\bx_0 = \mathbf{0}$, $\br_0 = \bb$, \red{$\bz_0 = \bM ^{-1}\br_0$, $\bp _0 = \bz _0$}.
				\For{$j = 0, 1, \dots$ until $||\br_j||_2 <\epsilon$,}
					\State $\alpha_{j}  = \dfrac{\br _{j}^T \red{\bz _{j}}}{\bp_j ^T \bQ \bp_j}$
					\State $\bx_{j+1} = \bx_j + \alpha _j \bp _j$
					\State $\br _{j+1} = \br _{j} - \alpha _{j} \bQ \bp _{j}$
					\State \red{$\bz _{j+1} = \bM^{-1}\br _{j+1}$}
					\State $\beta _j  = \dfrac{\br _{j+1} ^T \red{\bz _{j+1} }}{\br _{j}^T \red{\bz _{j}}}$
					\State $\bp _{j+1}  = \red{\bz _{j+1}} + \beta _j \bp _{j}$
				\EndFor \\
				\Return $\bx _{j+1}$\;
	\end{algorithmic}
\end{algorithm} 
\end{minipage}

\subsection{Polya-Gamma augmented Gibbs sampler}
 \begin{algorithm}[H]
   \begin{algorithmic}
\State \textbf{Input:} hyperparameters $\bm_0,\ \alpha_k,\ \bPhi_k$; vector of observations $\by$; design matrix $\X$; initial values $\bth ^{(0)},\ \bOmega^{(0)},\ \bT_k ^{(0)}$

     \For{$t = 1:T$}
     \State Sample 
 $\bth ^{(t)}\sim p(\bth \mid \by, \bV , \bOmega ^{(t-1)}, \{\bT_k ^{(t-1)}\}_{k=1}^K)$ according to \eqref{eq:gibbs_theta}
	\State Sample 
 $\omega _i  ^{(t)}\sim p(\omega_i \mid \by, \bV, \bth ^{(t)}) $, for each $i = 1,\dots, \n$, according to \eqref{eq:gibbs_omega}
 	\State Sample 
 $\bT_k  ^{(t)}\sim p(\bT_k \mid \by, \bV, \bth ^{(t)})  $, for each $k=1,\dots, K$, according to \eqref{eq:gibbs_prec}
   \EndFor
   \\ \vspace{0.2cm}  \textbf{Output:} MCMC samples $\{ (\bth^{(t)},\ \bOmega^{(t)},\ , \{\bT_k ^{(t)}\}_{k=1}^K); \ t = 0,\dots, T\} $.
\end{algorithmic}
\caption{PG augmented Gibbs sampler}
\label{alg:gibbs}
 \end{algorithm}

\section{Additional Figures}\label{sec:suppl_figures}
\subsection{Sparsity structure of $\bQ$ resulting from Example \ref{ex:worst_sla}}
\begin{figure}[H]
\centering
\includegraphics[width=.4\textwidth]{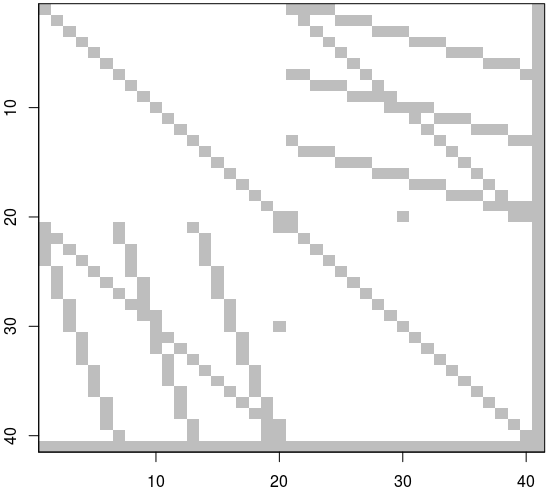}
\caption{Precision matrix $\bQ$ of $\law(\bth \mid \by, \bg)$ induced by the structured design of Example \ref{ex:worst_sla} with $G=20$, $d=3$ and default ordering $(\theta_{1,1}, \dots, \theta_{1,G}, \theta_{2,1}, \dots, \theta_{2,G}, \theta_0)$.}
\label{fig:bad_design}
\end{figure}

\subsection{Graphical representation of the graph in Example \ref{ex:pairwise_non_connected}}
\begin{figure}[H]
  \centering
        \includegraphics[scale=.45]{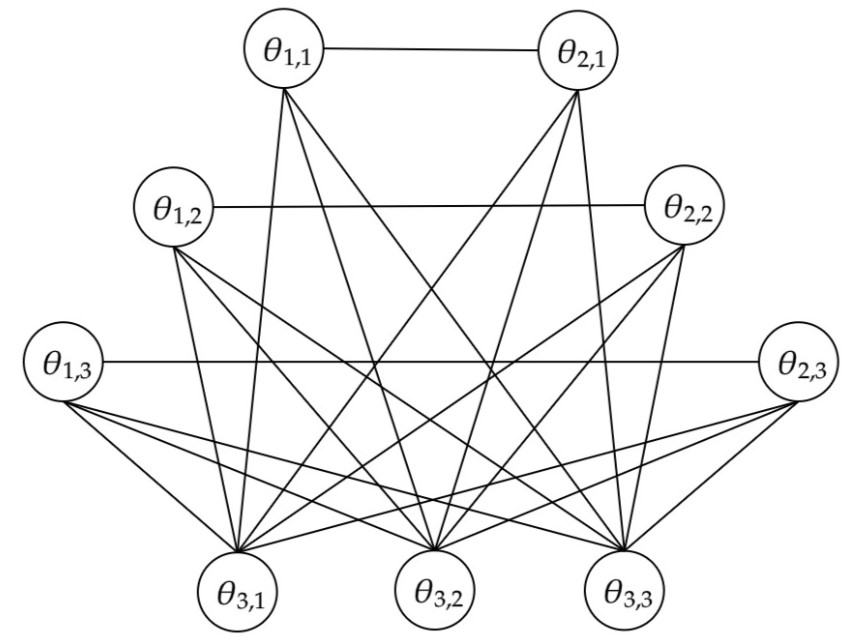}
\caption{Conditional independence graph of Example \ref{ex:pairwise_non_connected}. Obtained for $G_1 = G_2 = G_3 = 3$, and $N= 9$. }\label{fig:pairwise_non_connected}
\end{figure}

\end{document}